\documentclass
  [
    DIV=14,
    fontsize=12,
    abstract=true,
    bibliography=totoc
  ]
  {scrartcl}

\setkomafont{date}{\small\rmfamily}
\setkomafont{author}{\large\scshape}

\usepackage{authblk}

\newcommand
{\NameEmail}[2]
{%
  \stackunder%
  {#1}%
  {%
   \clap{%
   \normalfont%
   \footnotesize%
   \texttt{#2}%
  }}%
  \hspace{-3.5pt}%
}

\newcommand\keywords[1]{%
  \begingroup
  \renewcommand\thefootnote{}
  \footnote{{keywords:} #1}%
  \addtocounter{footnote}{-1}
  \endgroup
}
\newcommand\MSCcodes[1]{%
  \begingroup
  \renewcommand\thefootnote{}
  \footnote{{MSC codes:} #1}%
  \addtocounter{footnote}{-1}
  \endgroup
}
\newcommand\funding[1]{%
  \begingroup
  \renewcommand\thefootnote{}
  \footnote{{funding:} #1}%
  \addtocounter{footnote}{-1}
  \endgroup
}

\usepackage{scrlayer-scrpage}
\clearpairofpagestyles 
\ihead{Higher Gauge Theory} 
\ohead{\headmark}
\automark{section}

\NewCommandCopy{\oldparagraph}{\paragraph}
\RenewDocumentCommand{\paragraph}{s O{#3} m}{%
  \IfBooleanTF{#1}
    {\oldparagraph*{#3.}}       
    {\oldparagraph[#2]{#3.}}    
}

\usepackage{caption}
\usepackage
  [leftcaption]
  {sidecap}

\usepackage{enumitem}
\setlist[enumerate,1]{label=\textbf{\textup{(\roman*)}}}
\setlist[enumerate,2]{label=\textbf{\textup{(\alph*)}}}
\setlist{
  topsep=1pt,
  itemsep=2pt,
  leftmargin=1cm,
  before=\leavevmode
}

\usepackage[
  backend=biber,
  style=alphabetic,
  giveninits=true, 
  backref=true
]{biblatex}
\usepackage{lmodern}
\usepackage{anyfontsize}
    
\usepackage{multirow}
\usepackage{hhline}

\usepackage{tabularray}
\UseTblrLibrary{booktabs}

\usepackage{adjustbox}
\usepackage{tcolorbox}

\newtcolorbox{standout}{
  colback=gray!15,
  boxrule=0pt,
  left=.3cm,
  right=.3cm,
  top=.18cm,
  bottom=.18cm,
  boxsep=0pt
}

\usepackage{amsmath}
\allowdisplaybreaks
\usepackage{mathtools}
\usepackage{amssymb}
\usepackage{amsthm}
\usepackage{xfrac}
\usepackage{stmaryrd} 
\usepackage{scalerel} 
\usepackage{stackengine} 
\usepackage[safe]{tipa} 
\usepackage[
  cal=euler,
  scr=dutchcal
]
 {mathalpha} 

 \newcommand{\bracket}[3]{%
  \stretchleftright
    {#1}
    {%
      \ensurestackMath{\addstackgap[1pt]{#2}}%
      \vrule width 0pt depth 2pt height 0pt
    }
    {#3}%
} 
\newcommand{\scaledbracket}[3]{%
  \ThisStyle{%
    \stretchleftright
      {#1}
      {
        \ensurestackMath{\addstackgap[1pt]{\SavedStyle #2}}%
        \vrule width 0pt depth 1.5pt height 0pt
      }
      {#3}%
  }%
}
\newcommand{\bracketmid}[4]{%
  \stretchleftright{#1}{%
    \ensurestackMath{%
      \addstackgap[1pt]{#2}%
      \,\stretchrel*{|}{\addstackgap[1pt]{#2#3}}\,%
      \addstackgap[1pt]{#3}%
    }%
  }{#4}%
}

\theoremstyle{plain}
\newtheorem{theorem}{Theorem}[section]
\newtheorem{lemma}[theorem]{Lemma}
\newtheorem{proposition}[theorem]{Proposition}

\theoremstyle{definition}
\newtheorem{definition}[theorem]{Definition}
\newtheorem{example}[theorem]{Example}

\theoremstyle{remark}
\newtheorem{remark}[theorem]{Remark}
\usepackage[
  colorlinks=true,
  linkcolor=darkgreen,
  citecolor=darkgreen,
  urlcolor=darkgreen
]{hyperref}
\usepackage{xurl} 

\usepackage{cleveref}
\crefname{equation}{}{}
\crefname{section}{\S}{\S\S}
\crefname{subsection}{\S}{\S\S}
\crefname{subsubsection}{\S}{\S\S}
\crefname{definition}{Def.}{Defs.}
\crefname{theorem}{Thm.}{Thms.}
\crefname{corollary}{Cor.}{Cors.}
\crefname{lemma}{Lem.}{Lems.}
\crefname{proposition}{Prop.}{Props.}
\crefname{remark}{Rem.}{Rems.}
\crefname{notation}{Ntn.}{Ntns.}
\crefname{fact}{Fact}{Fact}
\crefname{example}{Ex.}{Exs.}
\crefname{figure}{Fig.}{Figs.}
\crefname{table}{Tab.}{Tabs.}
\crefname{footnote}{ftn.}{ftns.}
\Crefname{footnote}{Ftn.}{Ftns.}

\usepackage{tikz}
\usetikzlibrary{
  cd,
  calc,
  arrows.meta,
  backgrounds,
  decorations,
  decorations.pathmorphing, 
}

\tikzcdset{
  arrow style=tikz, 
  diagrams={
    trim left=
    {([xshift=5pt]current bounding box.west)},
    trim right=
    {([xshift=-5pt]current bounding box.east)},
    baseline=
    {([yshift=-2pt]current bounding box.center)},
    >={
      Computer Modern Rightarrow[
        length=4pt, width=4pt
      ]
    },
    hook/.style={
      {Hooks[right, length=2pt, width=8pt]}->
    },
    hook'/.style={
      {Hooks[left, length=2pt, width=8pt]}->
    },
    shorten=0pt,
  }
}

\definecolor{darkblue}{rgb}{0.05,0.25,0.65}
\definecolor{darkgreen}{RGB}{20,140,10}
\definecolor{lightgray}{rgb}{0.9,0.9,0.9}
\definecolor{darkorange}{RGB}{200,100,5}
\definecolor{darkyellow}{rgb}{.91,.91,0}
\definecolor{lightolive}{RGB}{225, 220, 185}

\let\originalsslash\sslash
\renewcommand{\sslash}{\mathord{\originalsslash}}

\makeatletter
\newcommand{\cpt}{\mathpalette\cpt@inner\relax}
\newcommand{\cpt@inner}[2]{%
  \scalebox{0.5}[0.9]{$#1\cup$}
  #1\{\infty\}
}
\makeatother

\makeatletter
\newcommand{\plus}{\mathpalette\sqcpt@inner\relax}
\newcommand{\sqcpt@inner}[2]{%
  \scalebox{0.5}[0.9]{$#1\sqcup$}
  #1\{\infty\}
}
\makeatother

\DeclareRobustCommand{\rchi}{{\mathpalette\irchi\relax}}
\newcommand{\irchi}[2]{\raisebox{\depth}{$#1\chi$}}

\tikzset{
  snake left/.style={
    rounded corners,
    to path={
      let \p1 = (\tikztostart.east),
          \p2 = (\tikztotarget.west),
          \p3 = ($(\p1)!0.5!(\p2)$),
          \n1 = {8pt} 
      in
      (\p1)
      -- (\x1 + \n1, \y1)
      -- (\x1 + \n1, \y3)
      -- (\x2 - \n1, \y3) \tikztonodes
      -- (\x2 - \n1, \y2)
      -- (\p2)
    }
  }
}

\tikzset{
  uphordown/.style={
    rounded corners,
    to path={
      let \p1 = (\tikztostart.north),
          \p2 = (\tikztotarget.north),
          \n1 = {max(\y1,\y2) + 8pt}
      in
      (\p1)
      -- (\x1, \n1)
      -- (\x2, \n1) \tikztonodes 
      -- (\p2)
    }
  }
}

\tikzset{
  downhorup/.style={
    rounded corners,
    to path={
      let \p1 = (\tikztostart.south),
          \p2 = (\tikztotarget.south),
          \n1 = {min(\y1,\y2) - 8pt}
      in
      (\p1)
      -- (\x1, \n1)
      -- (\x2, \n1) \tikztonodes 
      -- (\p2)
    }
  }
}

\tikzset{
  rightvertleft/.style={
    rounded corners,
    to path={
      let \p1 = (\tikztostart.east),
          \p2 = (\tikztotarget.east),
          \n1 = {max(\x1,\x2) + 8pt}
      in
      (\p1)
      -- (\n1, \y1)
      -- (\n1, \y2) \tikztonodes 
      -- (\p2)
    }
  }
}

\tikzset{
  leftvertright/.style={
    rounded corners,
    to path={
      let \p1 = (\tikztostart.west),
          \p2 = (\tikztotarget.west),
          \n1 = {min(\x1,\x2) - 8pt}
      in
      (\p1)
      -- (\n1, \y1)
      -- (\n1, \y2) \tikztonodes 
      -- (\p2)
    }
  }
}

\newcommand{\inlinetikzcd}[1]{\begin{tikzcd}[sep=small, ampersand replacement=\&]#1\end{tikzcd}}

\renewcommand{\setminus}{-}

\newcommand{\defneq}{\equiv}

\newcommand{\shape}{%
  {\mathord{\scalerel*{\raisebox{0.1ex}{\textesh}}{f}}%
  \mkern 2mu}
}

\begin{document}

\setlength{\abovedisplayskip}{3.5pt}
\setlength{\belowdisplayskip}{3.5pt}
\setlength{\abovedisplayshortskip}{-5pt}
\setlength{\belowdisplayshortskip}{3pt}

\titlehead{
  \href{https://ncatlab.org/nlab/show/Center+for+Quantum+and+Topological+Systems}{CQTS} Lecture Notes
}

\subject{
  Introduction to
}
\title{Higher Gauge Theory}
\subtitle{
via Differential Nonabelian Cohomology
\\
in Cohesive Homotopy Theory
\\
\phantom{A}
\\
\phantom{A}
}

\author[1,2]
{\NameEmail
  {Hisham Sati}
  {hsati@nyu.edu}
}

\author[1]
{\NameEmail
  {Urs Schreiber}
  {us13@nyu.edu}
}

\affil[1]{Mathematics Program and Center for Quantum and Topological Systems (CQTS), \newline New York University Abu Dhabi, UAE}
\affil[2]{The Courant Institute for Mathematical Sciences, \newline New York University, New York, USA}

\maketitle

\vspace{.5cm} 
\begin{abstract}
   This is a streamlined introduction to the global (infrared) completion of Maxwell-type higher gauge fields (as in the higher gauge sectors of higher dimensional supergravity and its brane probes) by electromagnetic flux quantization in differential nonabelian cohomology, using cohesive homotopy theory. Applications include D/NS brane charge in (unstable) K-theory, M-brane charge in unstable Cohomotopy and geometric engineering of topological quantum order on probe M5-branes.
\end{abstract}

\keywords{
  higher gauge theory, flux quantization, nonabelian cohomology, differential cohomology, cohesive homotopy theory, supergravity, branes
}

\MSCcodes{
  Primary:
  53C08,
  81T13,
  18N60,
  55P62,
  55N20;	
  Secondary:
  83E50,
  81T30,
  81V70	
}

\funding{
  by \textit{Tamkeen UAE} under the \textit{Abu Dhabi Research Institute Grant} \texttt{CG008}
}

\newpage

\tableofcontents

\newpage

\section
{Introduction}

\paragraph
{The Problem}
It is an open secret (cf. \cite{Schreiber2014}) that traditional methods of mathematical physics, sophisticated as they may be (cf. \cite{HenneauxTeitelboim1992}), are insufficient for capturing the global (topological, deep infrared) structure of higher gauge theories, on which, however, the nature of the topological monopole/brane charges notably depends: 

Already for ordinary gauge theory, the standard assumption (already more ``advanced'' than what most textbooks get to, cf. \cite[\S 2]{GiachettaMangiarottiSardanashvily2009}) that field configurations are sections of an ordinary fiber bundle over spacetime, the \emph{field bundle}, means to restrict attention to a single topological sector (to be contrasted with the global situation, cf.  \cite{BeniniSchenkelSchreiber2018}). For higher gauge fields (cf. \cite{Alfonsi2025HigherGeometry}) the assumption of an ordinary field bundle tacitly restricts to just the \emph{trivial} topological sector. 

Similarly, the usual Lagrangian densities for higher gauge fields, being functions of globally defined gauge potential forms, tacitly assume that the given theory is restricted to the trivial topological sector or to a single chart of spacetime --- because only there are gauge potentials globally defined. 

\paragraph
{Flux Quantization}
The global definition of higher gauge fields is given by \emph{flux quantization} \cite{SS25-Flux}, but this has traditionally been relegated, at best, to an afterthought when faced with inconsistencies emerging in a naive approach. For instance: 
\begin{enumerate}
\item The action functional of the electron is found to be inconsistent (\emph{anomalous}) unless the ambient Maxwell field is flux-quantized according to \cite{Dirac1931} (cf. \parencites{Alvarez1985}[\S 7.1]{Brylinski1993}[\S 16.4e]{Frankel2011}). 
\item The action functional of the M2-brane is anomalous unless the ambient C-field is shifted flux-quantized according to \cite{Witten1997Flux,Witten1997Fivebrane}.
\item The global definition of the action functional of the M5-brane similarly requires a flux quantization condition on the ambient C-field \cite{FSS21-Hopf}.
\end{enumerate}
In the first case, the mathematical structure of the flux quantization has become classical; in the second case, there are proposals \parencites[\S 2.7]{HopkinsSinger2005} and \cite[\S 3.4]{FSS20-H}. The last case appears to have been addressed explicitly only in \cite{FSS21-Hopf}
(cf. \cite{Sati2010}). 

In general, the problem of global completion of (higher gauge sectors of) supergravity theories (cf. \cite{nLab:supergravity,GSS26-SuGra}) by flux quantization is only recently finding some attention, cf. also \cite{LazaroiuShahbazi2022a,LazaroiuShahbazi2022b,GSS24-SuGra,GagliardoEtAl2025}. And yet, this is paramount for many questions: It may be premature to speculate about the UV-completion of 11D Supergravity (working title: ``M-theory'' \parencites{Duff1999World}) before investigating its IR-completion by proper flux quantization (cf. \cite{GSS26-SuGra}).

\paragraph
{Cohesive Approach}
To settle such questions in examples, one needs a formalism that sets the stage in generality. In particular one needs to have basic notions, such as a generalization of the notion of \emph{phase space}, to the case of higher gauge fields with possibly nontrivial topological sectors. 

The aim of this note is to lay out steps in this direction, following \cite{SS24-Phase,SS25-Flux} but further streamlining and expanding on some details.

Our methods are necessarily \emph{higher geometric} (cf. \cite{CattaneoGiaquintoXu2011,FSS15-Stacky,Schreiber2016,JurcoEtAl2019,FSS2019,Alfonsi2025HigherGeometry}), hence of \emph{geometric homotopy theory} (cohesive $\infty$-topos theory, \cite{SS26-Orb,SS26-Bun,Sc13-dcct}, exposition in \cite{Schreiber2025}), which we briefly recall in \cref{OnSomeCohesiveHomotopyTheory}. In lowest degree this means generalizing smooth manifolds to \emph{smooth sets} \parencites[Def. 2.1]{KhavkineSchreiber2026} (for discussion in field theory see \cite{GS25-FieldsI,GS25-FieldsII}, further exposition is in \cite{IbortMas2025}) while in general degree this concerns generalizing \emph{Lie groupoids} (cf. \cite{BursztyndelHoyo2025,Moerdijk2003}) to \emph{smooth $\infty$-groupoids} (smooth $\infty$-stacks, for concise definitions see \cite[\S 1]{FSS23-Char}, more exposition is in \cite{FSS15-Stacky}).

\paragraph
{The Solution}
Using these methods, the problem of global (topological, infrared) completion of higher gauge fields turns out to have an elegant solution; in outline:
\begin{enumerate}

\item  The smooth moduli set of higher flux densities solving their Maxwell-type equations of motion is identified with that of closed (meaning: flat, Maurer-Cartan) differential forms, 
\begin{equation}
  \big\{
  \text{On-shell flux densities}
  \big\}
  \,\simeq\,
  \mathbf{\Omega}^1_{\mathrm{cl}}\bracket({X^d; \mathfrak{a}})
\end{equation}
on a Cauchy surface $X^d$ with coefficients in a characteristic $L_\infty$-algebra $\mathfrak{a}$ encoding the duality symmetric Bianchi identities and the electromagnetic higher Gauss laws (\cref{GaussLawAsLInfinityClosure}).

\item The admissible electromagnetic flux quantization laws are given by classifying spaces $\mathcal{A}$ whose real Whitehead-bracket $L_\infty$-algebra reproduces the Gauss laws (\cref{OnRationalHomotopy}): 
\begin{equation}
  \mathfrak{l}\mathcal{A} 
    \simeq 
  \mathfrak{a}
  \mathrlap{\,.}
\end{equation}

\item 
A \emph{nonabelian character map} (generalizing the Chern character on K-theory) 
  \begin{equation}
    \begin{tikzcd}
      \mathbf{Map}\bracket({
        X^d;
        \mathcal{A}
      })
      \ar[
        r,
        "{
          \mathbf{ch}^{\smash{\mathcal{A}}}
        }"
      ]
      &
      \shape 
      \mathbf{\Omega}^1_{\mathrm{dR}}
      \bracket({
        X^d;
        \mathfrak{a}
      })
    \end{tikzcd}
  \end{equation}
sends (monopole/brane) \emph{charges} and their higher global symmetries in the nonabelian $\Omega \mathcal{A}$-cohomology of $X^d$ to the flux densities and their deformations sourced by these charges (\cref{OnNonabelianCharacter}).

\item The on-shell higher gauge field configurations globally completed by flux quantization in $\Omega\mathcal{A}$-cohomology are triples of (a) on-shell fluxes $\vec B$, (b) compatible charges $\chi$, (c) gauge potentials $\hat A$ witnessing that these fluxes are sourced by these charges. That is, we have a homotopy cone of smooth $\infty$-groupoids of this form (\cref{OnThePhaseSpace}):
\begin{equation}
  \begin{tikzcd}
    &&
    \mathbf{Map}\bracket({
      X^d,
      \mathcal{A}
    })
    \ar[
      dd,
      "{
        \mathbf{ch}^{\mathcal{A}}
      }"
    ]
    \\
    \ast
    \ar[
      dr,
      dashed,
      "{
        \vec B
      }"{name=fluxes},
      "{
        \text{fluxes}
      }"{
        swap, 
        sloped,
        color=darkblue
      }
    ]
    \ar[
      urr,
      bend left=25,
      dashed,
      "{
        \rchi
      }"{
        swap,
        name=charges,
        pos=.4,
      },
      "{
        \text{charges}
      }"{
        sloped,
        pos=.4,
        color=darkblue
      }
    ]
    \ar[
      from=charges,
      to=fluxes,
      Rightarrow,
      dashed,
      "{ 
        \hat A 
      }"{
        pos=.5,
        swap
      },
      "{
        \text{potentials}
      }"{
        sloped, 
        swap, 
        pos=.4,
        yshift=-2pt,
        color=darkgreen
      }
    ]
    \\
    & 
    \mathbf{\Omega}^1_{\mathrm{cl}}
    \bracket({
      X^d;
      \mathfrak{a}
    })
    \ar[
      r,
      "{
        \eta^{\shape}
      }"
    ]
    &
    \shape
    \mathbf{\Omega}^1_{\mathrm{cl}}
    \bracket({
      X^d;
      \mathfrak{l}
      \mathcal{A}
    })
    \mathrlap{\,,}
  \end{tikzcd}
\end{equation}
whereby the completed phase space stack is the homotopy fiber product
\begin{equation}
  \mathbf{Phs}\bracket({
    X^d;
    \mathcal{A}
  })
  =
  \mathbf{\Omega}^1_{\mathrm{cl}}\bracket({
    X^d;
    \mathfrak{a}
  })
  \underset
    {
      \shape 
      \mathbf{\Omega}^1_{\mathrm{cl}}
      \scaledbracket({
        X^d;
        \mathfrak{a}
      })
    }
    {\times}
  \mathbf{Map}\bracket({
    X^d,
    \mathcal{A}
  })
  \mathrlap{\,.}
\end{equation}

\item For linear Gauss laws this reproduces familiar global completions by flux quantization such as in ordinary differential cohomology (line bundles, bundle gerbes, etc.) and in differential K-theory (though there is large room for alternative choices not traditionally appreciated); but it crucially generalizes also to the non-linear electric Gauss laws that are famously seen (but traditionally ignored) in higher-dimensional supergravity, whose global infrared completions thereby become accessible (\cref{On11D10DSupergravity}).

\item A twisted relative generalization of this procedure applies also to higher gauge fields on probe branes, such as notably to the self-dual flux on M5-branes (\cref{OnM5BraneProbes}.)

In particular, there exists an admissible completion (``Hypothesis H'') of 11D SuGra with M5-probes at $A_n$-singularities, which serves to geometrically engineer the topological quantum order of fractional quantum Hall systems in fine detail, making experimentally accessible predictions, such as concerning nonabelian anyons attached to super-semiconductor heterostructures. 
\end{enumerate}

\paragraph
{Comparison to the Literature}

The higher gauge theory via differential nonabelian cohomology that we present (\cref{OnThePhaseSpace}) is \emph{different} (cf. \cref{TwoNotionsOfHigherGaugeTheory,TwoApproachesToHigherGaugeTheory}) from that via \emph{higher principal connections} (\parencites[\S 11-12]{Schreiber2005}{BaezSchreiber2007}{FSSt12-DiffClasses}{SSS12}{SchreiberWaldorf2013}{BorstenEtAl2025}[\S 2]{Alfonsi2025HigherGeometry}{RistSaemannWolf2026}{Waldorf2026}{BunkEtAl2026}): 

The basis of our formulation is exactly the electromagnetic Bianchi identities / Gauss laws that characterize higher Maxwell-type physical equations of motion (\cref{OnEquationsOfMotion}) as they appear notably in type I/II/M supergravity (\cref{ExamplesAndApplications}). Just choosing flux quantization laws compatible with these (\cref{OnFluxQuantization}) already determines the full higher gauge potential/field structure globally (\cref{OnGlobalCompletion}), without need or justification for a further principality constraint. For instance, differential K-theory is not a higher principal connection theory, but is a special case of differential nonabelian cohomology (\cref{ExamplesOfDifferentialCohomology}).

While our discussion thus pertains in particular to the realm of ``M-theory'' (see \cite{Duff1999World,Sati2010,JurcoEtAl2019,FSS2019}) --- geometrically engineering topological quantum materials (\cref{OnM5BraneProbes}) in a realistic $N=1$ alternative to traditional large-$N$ holographic condensed matter theory (\cite{Herzog2007,GauntlettSonnerWiseman2009,GubserPufuRocha2010,GauntlettetalSonnerWiseman2010,DonosetalEtAl2013,DonosGauntlettPantelidou2013}, cf. \cite[\S 4.3]{Gaggioli2019}) --- at no point do we rely on string folklore. All of the following constructions have definitions and all claims have proofs.

\paragraph{Acknowledgements}
We thank the organizers of the meetings \emph{\href{https://www.wiko-greifswald.de/en/registrations/higher-differential-geometry}{Higher Differential Geometry}} (WIKO Greifswald, May 2026) and \emph{\href{http://thphys.irb.hr/phygeo2026/}{6th Mini Symposium on Physics and Geometry}} (RBI Zagreb, May 2026) where parts of these lectures were first presented.

\section
{Cohesive Homotopy Theory}
\label
{OnSomeCohesiveHomotopyTheory}

The completed phase spaces of higher gauge fields in \cref{OnTheGlobalPhaseSpace} combine the differential geometry of flux densities with the homotopy theory of (``topological'') charge sectors, cf. \cref{GeomHomotopySchematics}.
\begin{SCfigure}[.73][htb]
\caption{
\label{GeomHomotopySchematics}
The geometric homotopy theory of $\infty$-topoi is the conceptual unification of geometry with homotopy theory, hence the theory of \emph{higher geometry}. 
In our context, \emph{fluxes} are (differential) geometric, \emph{charges} are homotopical, and their unification happens through the construction of the \emph{phase space} (\cref{OnThePhaseSpaceAsSuch}) in the ambient cohesive $\infty$-topos (\cref{GeometricHomotopy}).
}
\centering
\adjustbox{scale=0.87, 
  rndfbox=4pt
}{
$
  \begin{tikzcd}[
    column sep=-25pt
  ]
    \adjustbox{
      rndfbox=4pt
    }{
      Flux densities
    }
    \ar[
      d,
      phantom,
      "{
        \text{described by}
      }"
    ]
    &&
    \adjustbox{
      rndfbox=4pt
    }{
      Charges
    }
    \ar[
      d,
      phantom,
      "{
        \text{described by}
      }"
    ]
    \\
    \adjustbox{
      rndfbox=4pt
    }{
      geometry
    }
    \ar[
      dr,
      phantom,
      "{
        \text{subsumed by}
      }"
    ]
    &&
    \adjustbox{
      rndfbox=4pt
    }{
      homotopy theory
    }
    \ar[
      dl,
      phantom,
      "{
        \text{subsumed by}
      }"
    ]
    \\
    &
    \adjustbox{
      rndfbox=4pt
    }{
      \begin{tabular}{c}
        geometric homotopy
        \\
        ($\infty$-topos theory)
      \end{tabular}
    }
  \end{tikzcd}
$
}
\end{SCfigure}

Such unification of geometry with homotopy theory happens in \emph{geometric homotopy theory} (in \emph{$\infty$-topoi} of simplicial sheaves \cite{Brown1973,Jardine1987,ToenVezzosi2005,Lurie2009}) and for \emph{differential} geometry specifically in \emph{cohesive homotopy theory} (in \emph{cohesive $\infty$-topoi} \cite{Sc13-dcct,SS26-Orb,SS26-Bun}, lecture notes in \cite{Sc18-ToposLectures}).

We aim to provide an exposition of the basics, but we assume the reader is familiar with the basic concepts and facts of ordinary category theory, as presented in \cite{Awodey2010,Sc18-ToposLectures}. For further background reading cf. \cite{Richter2020}.

\subsection
{Geometry}

Consider the ordinary category of smooth manifolds with smooth maps between them (cf. \cite{Tu2011,Lee2012}), and its full subcategory of Cartesian spaces ($\mathbb{R}^n$s for $n \in \mathbb{N}$):
\begin{equation}
\label
{CartesianSpacesAmongManifolds}
  \begin{tikzcd}
    \mathrm{CartSp}
    \ar[r, hook]
    &
    \mathrm{SmthMfd}
    \mathrlap{\,.}
  \end{tikzcd}
\end{equation}
We will denote by $\ast := \mathbb{R}^0$ the \emph{point manifold}.

One reason why nonperturbative quantum field theory appears technically hard is that spaces of field configurations that ought to be objects in differential geometry tend to fall outside the realm of available textbook constructions. For example, for $X, Y$ smooth manifolds of positive dimension, with $X$ non-compact (such as $X \defneq$  spacetime and $Y \defneq \mathbb{C}$ for scalar fields), the would-be smooth mapping space
\begin{equation}
  \label{AskingForTheInternalHom}
  \mathbf{Map}\bracket({
    X,
    Y
  })
  \in
  \;
  \text{??}
\end{equation}
fails to be even an infinite-dimensional manifold. 

And yet, it is perfectly clear what should count as a smooth map from a manifold like $\mathbb{R}^n$ into this mapping space (a smooth \emph{probe} of the mapping space), namely this should correspond to a smooth map of ordinary smooth manifolds to $Y$ from the product of $X$ with $\mathbb{R}^n$:
\begin{equation}
  \begin{aligned}
  &
  \big\{
    \begin{tikzcd}
      \mathbb{R}^n
      \ar[r]
      &
      \mathbf{Map}\bracket({
        X,Y
      })
    \end{tikzcd}
  \big\}
  \\
  \simeq
  \;
  &
  \big\{
    \begin{tikzcd}
      \mathbb{R}^n
      \times X
      \ar[r]
      &
      Y
    \end{tikzcd}
  \big\}
  \mathrlap{\,.}
  \end{aligned}
\end{equation}
But since knowing all such \emph{probes} of a generalized smooth space is going to be as good as knowing the space ``itself'', we are to make this the definition next. 

\subsubsection
{Smooth Sets}
\label
{OnSmoothSets}

A \emph{smooth set} $\mathbf{X}$ is (equivalently encoded by) a presheaf $\mathrm{Plt}\bracket({-,\mathbf{X}})$ of sets on the category of Cartesian spaces \cref{CartesianSpacesAmongManifolds}, hence a functor
\begin{equation}
\label
{FunctorOfPlots}
  \begin{tikzcd}[row sep=-2pt, 
    column sep=0pt
  ]
    \mathrm{CartSp}^{\mathrm{op}}
    \ar[rr]
    &&
    \mathrm{Set}
    \\
    \mathbb{R}^n
    &\longmapsto&
    \mathrm{Plt}\bracket({
      \mathbb{R}^n, \mathbf{X}
    })
  \end{tikzcd}
\end{equation}
which we think of as assigning:
\begin{itemize}
    \item  to $\mathbb{R}^n$ (for $n \in \mathbb{N}$) the set of smooth maps from this $\mathbb{R}^n$ into $\mathbf{X}$ (the \emph{plots} of $\mathbf{X}$ by the \emph{probe} $\mathbb{R}^n$), and 

\item to smooth maps $\inlinetikzcd{ \mathbb{R}^{n'} \! \ar[r, "{ \phi }"] \& \mathbb{R}^n }$ their precomposition operation $\phi^\ast \!\!: \!\inlinetikzcd{ \mathrm{Plt}\bracket({\mathbb{R}^n, \mathbf{X}}) \!  \ar[r] \& \!\mathrm{Plt}\bracket({\mathbb{R}^{n'}, \mathbf{X}})}$ with such plots, defined thereby.
\end{itemize}
In this vein of characterizing spaces via the system of their \emph{plots} by \emph{probe spaces}, a \emph{smooth map} $\inlinetikzcd{\mathbf{X} \ar[r, "{ f }"] \& \mathbf{Y}}$ between such smooth sets must, by postcomposition, be a rule that sends $\mathbb{R}^n$-plots $\phi$ of $\mathbf{X}$ to $\mathbb{R}^n$-plots $f \circ \phi$ of $\mathbf{Y}$, compatible with the precomposition operation of plots, hence must be a natural transformation
$f_\ast : \inlinetikzcd{ \mathrm{Plt}\bracket({-,\mathbf{X}}) \ar[r, Rightarrow] \& \mathrm{Plt}\bracket({-,\mathbf{Y}}) }$ between the functors of plots \cref{FunctorOfPlots}.

So far, this appears to say that the category of smooth sets should be the category of presheaves over $\mathrm{CartSp}$. 
However,  some of this information ought in fact to be redundant: For probing the (smooth) structure of a smooth set $\mathbf{X}$ it ought to already be sufficient to know just the restriction of these $\mathbb{R}^n$-plots to arbitrarily small open balls $\mathbb{D}^n_{r > 0} \simeq \mathbb{R}^n \subset \mathbb{R}^n$ around any point of $\mathbb{R}^n$ (which without restriction we may fix to be the origin), hence to know the \emph{germs} of plots, to be denoted
\begin{equation}
\label
{GermsOfPlots}
  \mathrm{Plt}\bracket({
    \mathbb{G}^n,
    \mathbf{X}
  })
  :=
  \bigcap_{r \in \mathbb{R}_{> 0}}
  \mathrm{Plt}\bracket({
    \mathbb{D}^n_r,
    \mathbf{X}
  })
  \mathrlap{\,.}
\end{equation}
Any natural transformation $f_\ast$ between presheaves of plots restricts to a transformation of such germs of plots, and if that restriction is a natural \emph{iso}morphism, then one says that $f_\ast$ is a \emph{local isomorphism}. We denote the class of local isomorphisms by 
\begin{equation}
  W_{\mathrm{l}} 
    \subset 
  \mathrm{Mor}\bracket({
  \mathrm{Func}\bracket({
    \mathrm{CartSp}^{\mathrm{op}}, \mathrm{Set}
  })
}) 
\mathrlap{\,.}
\end{equation}
Hence we should regard local isomorphisms as \emph{actual} isomorphisms between (systems of plots of) smooth sets, a process known as \emph{localization} $L^{W_l}(-)$ at the class of local isomorphisms. Doing so changes their category to the \emph{sheaf topos} over $\mathrm{CartSp}$, which we thereby recognize as the correct \emph{category of smooth sets} (cf. \cite{GS25-FieldsI,GS25-FieldsII,IbortMas2025}):
\begin{equation}
  \label{CategoryOfSmoothSets}
  \mathrm{SmthSet}
  :=
  L^{W_{\mathrm{l}}}
  \mathrm{Func}\bracket({
    \mathrm{CartSp}^{\mathrm{op}},
    \mathrm{Set}
  })
  \simeq
  \mathrm{Sh}\bracket({
    \mathrm{CartSp}
  })
  \mathrlap{\,.}
\end{equation}
True to this effective identification of smooth sets $\mathbf{X}$ with their systems of sets of plots, we will often abbreviate the latter as
\begin{equation}
\label
{ShorthandForSetOfPlots}
  \mathbf{X}\bracket({
    \mathbb{R}^n
  })
  :=
  \mathrm{Plt}\bracket({
    \mathbb{R}^n, 
    \mathbf{X}
  })
  \mathrlap{\,.}
\end{equation}

For example, an ordinary smooth manifold becomes a smooth set by taking its probes to be the ordinary smooth functions $C^\infty(-,-)$ between smooth manifolds, a fully faithful construction known, in generality, as the \emph{Yoneda embedding}:
\begin{equation}
\label
{YonedaEmbedding}
  \begin{tikzcd}[row sep=-2pt, 
    column sep=0pt
  ]
    \mathrm{SmthMfd}
    \ar[
      rr,
      hook
    ]
    &&
    \mathrm{SmthSet}
    \\
    X 
      &\longmapsto&
    C^\infty\bracket({-,X})
    \mathrlap{\,.}
  \end{tikzcd}
\end{equation}

Interestingly, this raises a potential for inconsistency: While we started out \emph{declaring} the sets $\mathrm{Plt}\bracket({\mathbb{R}^n, \mathbf{X}})$ of smooth maps from $\mathbb{R}^n$ to a smooth set $\mathbf{X}$ defined thereby, we now see, with \cref{YonedaEmbedding}, that there are also the actual sets of such smooth maps $\inlinetikzcd{\mathbb{R}^n \ar[r] \& \mathbf{X}}$ between smooth sets. Remarkably, it is the full \emph{Yoneda lemma} which ensures that these naturally coincide, after the fact:
\begin{equation}
  \mathrm{Plt}\bracket({
    \mathbb{R}^n,
    \mathbf{X}
  })
  \simeq
  \mathrm{Hom}\bracket({
    \mathbb{R}^n, 
    \mathbf{X}
  })
  \mathrlap{\,,}
\end{equation}
where $\mathrm{Hom}(-,-)$ on the right denotes the hom-sets of $\mathrm{SmthSet}$.

However, smooth sets may be considerably more general than smooth manifolds. For example, \emph{moduli spaces} $\mathbf{\Omega}^n_{\mathrm{dR}}(\ast)$ of differential $n$-forms ($n \in \mathbb{N}$) exist as smooth sets (more on this example in \cref{OnUniversalDeRhamComplex}):
\begin{equation}
\label
{SmoothModuliSetOfDifferentialNForms}
  \begin{aligned}
  \mathbf{\Omega}^n
    _{\mathrm{dR}}(\ast)
  & \in
  \mathrm{SmthSet}
  \;\;
  \substack{
  \text{
    smooth moduli set of differential $n$-forms
  }
  }
  \\
  \mathrm{Plt}\bracket({
    \mathbb{R}^n,
    \mathbf{\Omega}^n_{\mathrm{dR}}
  })
  &
  :=
  \Omega^n_{\mathrm{dR}}\bracket({
    \mathbb{R}^n
  })
  \;\;
  \substack{
    \text{ordinary set of smooth differential $n$-forms on $\mathbb{R}^n$}
  }
  \,,
  \end{aligned}
\end{equation}
where precomposition of plots by smooth functions is given by pullback of differential forms.

Moreover, it is straightforward to construct further smooth sets from given ones, following the logic of their plots. For instance, given a pair $\mathbf{X}, \mathbf{Y} \in \mathrm{SmthSet}$, their Cartesian product $\mathbf{X} \times \mathbf{Y} \in \mathrm{SmthSet}$ must --  as plots --  have pairs of plots into the factors, which is already the definition:
\begin{equation}
  \label{ProductOfSmoothSets}
  \mathrm{Plt}\bracket({
    \mathbb{R}^n,
    \mathbf{X} 
     \times
    \mathbf{Y}
  })
  :=
  \mathrm{Plt}\bracket({
    \mathbb{R}^n,
    \mathbf{X}
  })
  \times
  \mathrm{Plt}\bracket({
    \mathbb{R}^n,
    \mathbf{Y}
  })
  \mathrlap{\,.}
\end{equation}

Finally, in the category of smooth sets, the motivating problem \cref{AskingForTheInternalHom} is naturally resolved: For any pair $\mathbf{X}, \mathbf{Y} \in \mathrm{SmthSet}$ we have that the set of smooth maps between them becomes itself a smooth set, with plots
\begin{equation}
\label
{InternalHomOfSmoothSets}
  \begin{aligned}
  \mathbf{Map}\bracket({
    \mathbf{X},
    \mathbf{Y}
  })
  & 
  \in
  \mathrm{SmthSet}
  \\
  \mathrm{Plt}\bracket({
    \mathbb{R}^n,
    \mathbf{Map}\bracket({
      \mathbf{X},
      \mathbf{Y}
    })
  })
    & 
    :=
    \mathrm{SmthSet}\bracket({
      \mathbb{R}^n 
        \times
      \mathbf{X},
      \mathbf{Y}
    })
    \mathrlap{\,.}
  \end{aligned}
\end{equation}
Standard results imply that this evident definition already implies natural isomorphisms
\begin{equation}
\label
{CartesianClosureOfSmoothSets}
  \mathbf{Map}\bracket({
    \mathbf{X}
    \times
    \mathbf{Y},
    \mathbf{Z}
  })
  \simeq
  \mathbf{Map}\bracket({
    \mathbf{X},
    \mathbf{Map}\bracket({
      \mathbf{Y},
      \mathbf{Z}
    })
  })  
  \mathrlap{\,.}
\end{equation}
This exhibits the category $\mathrm{SmthSet}$ as being \emph{Cartesian closed}, one of the key properties of what is called a \emph{convenient category of spaces} (going back to \cite{Steenrod1967}, cf. \cite{BaezHoffnung2011}).
In fact, $\mathrm{SmthSet}$ is ``particularly convenient'' in that it is \cite{Sc13-dcct} a \emph{cohesive topos} in the sense of \cite{Lawvere2007}. We begin describing this in \cref{OnConcreteSmoothSets} and will exhibit the full structure in \cref{OnCohesion}.

\subsubsection
{Concreteness}
\label
{OnConcreteSmoothSets}

We discuss the \emph{concrete} objects among the smooth sets from \cref{OnSmoothSets}, identified with \emph{diffeological spaces} (for which cf. \cite{IglesiasZemmour2013}).

Traditionally, mathematical notions of \emph{space} tend to be conceived as sets of points equipped with further structure (topology, smooth structure, etc.). But smooth sets $\mathbf{X}$ \cref{CategoryOfSmoothSets} exist whose underlying set of points, $\mathrm{Plt}\bracket({\mathbb{R}^0,\mathbf{X}})$, does not support all their higher-dimensional plots. 

For example, with $n \geq 1$ the smooth moduli set $\mathbf{\Omega}^n_{\mathrm{dR}}(\ast)$ \cref{SmoothModuliSetOfDifferentialNForms} has a single point (the zero $n$-form on $\mathbb{R}^0$) but an infinite set of higher $k$-dimensional plots! 

But we may neatly characterize those smooth sets that \emph{are} underlying sets of points with extra structure, as the \emph{concrete} smooth sets, commonly known as \emph{diffeological spaces}. This proceeds as follows.

\paragraph
{The Flat Modality}
First, observe that the assignment of underlying sets of points is a functor
\begin{equation}
  \begin{tikzcd}[row sep=-2pt, 
    column sep=0pt
  ]
    \mathrm{SmthSet}
    \ar[
      rr, 
      "{ \mathrm{Pnt} }"
    ]
    &&
    \mathrm{Set}
    \\
    \mathbf{X}
    &\longmapsto&
    \mathrm{Plt}\bracket({
      \ast,
      \mathbf{X}
    })
    \mathrlap{\,,}
  \end{tikzcd}
\end{equation}
which has a fully faithful left adjoint: 
\begin{equation}
  \begin{tikzcd}
    \mathrm{SmthSet}
    \ar[
      rr,
      phantom,
      shift left=8pt,
      "{ \bot }"{scale=.7}
    ]
    \ar[
      rr,
      "{
        \mathrm{Pnt}
      }"{description}
    ]
    &&
    \mathrm{Set}
    \mathrlap{\,.}
    \ar[
      ll,
      hook',
      shift right=16pt,
      "{
        \mathrm{Dsc}
      }"{description}
    ]
  \end{tikzcd}
\end{equation}
Namely,  
$\mathrm{Dsc}\bracket({S})$ is the \emph{discrete smooth structure} on $S \in \mathrm{Set}$, whose only smooth plots are constant maps:
\begin{equation}
    \mathrm{Plt}\bracket({
      \mathbb{R}^n,
      \mathrm{Dsc}\bracket(S)
    })
    =
    S
    \mathrlap{\,.}
\end{equation}
The composite functor
\begin{equation}
\label
{FlatModalityOnSmoothSets}
  \flat
  :
  \begin{tikzcd}
    \mathrm{SmthSet}
    \ar[
      r,
      "{ \mathrm{Pnt} }"
    ]
    &
    \mathrm{Set}
    \ar[
      r,
      hook,
      "{ 
        \mathrm{Dsc}
      }"
    ]
    &
    \mathrm{SmthSet}
  \end{tikzcd}
\end{equation}
sends a smooth set to its underlying set of points, regarded as a discrete smooth set:
\begin{equation}
  \mathrm{Plt}\bracket({
    \mathbb{R}^n,
    \mathbf{X}
  })
  \simeq
  \mathrm{Hom}\bracket({
    \ast, 
    \mathbf{X}
  })
  \mathrlap{\,.}
\end{equation}
By the adjunction property, this comes with an evident natural transformation:
\begin{equation}
  \begin{tikzcd}[row sep=-2pt, 
    column sep=10pt
  ]
    \flat
    \mathbf{X}
    \ar[
      rr,
      "{
        \epsilon^\flat_{\mathbf{X}}
      }"
    ]
    &&
    \mathbf{X}
    \\
    \mathrm{Hom}\bracket({
      \ast,
      \mathbf{X}
    })
    \ar[
      rr
    ]
    &&
    \mathrm{Hom}\bracket({
      \mathbb{R}^n,
      \mathbf{X}
    })
    \\
    \bracket({
      \ast 
      \overset{x}{\to}
      \mathbf{X}
    })
    &\longmapsto&
    \bracket({
      \mathbb{R}^n
      \to
      \ast 
      \overset{x}{\to}
      \mathbf{X}
    })
    \mathrlap{\,,}
  \end{tikzcd}
\end{equation}
which has the universal property that it uniquely factors smooth maps out of any $\flat \mathbf{Y}$:
\begin{equation}
  \begin{tikzcd}
    \flat \mathbf{Y}
    \ar[
      r,
      dashed,
      "{ \exists! }"
    ]
    \ar[
      rr,
      uphordown,
      "{ 
        \forall 
      }"{description}
    ]
    &
    \flat \mathbf{X}
    \ar[
      r,
      "{
        \epsilon
          ^\flat
          _{\mathbf{X}}
      }"
    ]
    &
    \mathbf{X}
    \mathrlap{\,.}
  \end{tikzcd}
\end{equation}
In words, this just says that a map to $\mathbf{X}$ from a discrete set of points factors through the underlying set of points of $\mathbf{X}$.

\paragraph
{The Sharp Modality}

But the functor $\mathrm{Pnt}$, furthermore, has a fully faithful right adjoint:
\begin{equation}
\label
{DscPntChtAdjointTriple}
  \begin{tikzcd}
    \mathrm{SmthSet}
    \ar[
      rr,
      phantom,
      shift left=8pt,
      "{ \bot }"{scale=.7}
    ]
    \ar[
      rr,
      "{
        \mathrm{Pnt}
      }"{description}
    ]
    \ar[
      rr,
      phantom,
      shift right=8pt,
      "{ \bot }"{scale=.7}
    ]
    &&
    \mathrm{Set}
    \mathrlap{\,.}
    \ar[
      ll,
      hook',
      shift right=16pt,
      "{
        \mathrm{Dsc}
      }"{description}
    ]
    \ar[
      ll,
      hook',
      shift left=16pt,
      "{
        \mathrm{Cht}
      }"{description}
    ]
  \end{tikzcd}
\end{equation}
Here 
$\mathrm{Cht}\bracket({S})$ is the \emph{chaotic smooth structure} on $S \in \mathrm{Set}$, for which \emph{all} maps from the discrete underlying set of a probe space count as smooth plots:
  \begin{equation}
  \begin{aligned}
    \mathrm{Plt}\bracket({
      \mathbb{R}^n,
      \mathrm{Cht}\bracket(S)
    })
    & =
    \mathrm{Hom}\bracket({
      \flat \mathbb{R}^n,
      \mathrm{Dsc}(S)
    })
    \mathrlap{\,.}
    \end{aligned}
  \end{equation}
This chaotic smooth structure is typically not of interest in itself, but it serves to characterize concrete smooth sets, as follows.

The composite functor
\begin{equation}
  \sharp
  :
  \begin{tikzcd}
    \mathrm{SmthSet}
    \ar[r, "{\mathrm{Pnt}}"]
    &
    \mathrm{Set}
    \ar[
      r,
      hook,
      "{ \mathrm{Cht} }"
    ]
    &
    \mathrm{SmthSet}
  \end{tikzcd}
\end{equation}
sends a smooth set to the chaotic smooth structure on its underlying set of points,
\begin{equation}
  \mathrm{Plt}\bracket({
    \mathbb{R}^n,
    \sharp \mathbf{X}
  })
  \simeq
  \mathrm{Hom}\bracket({
    \flat \mathbb{R}^n,
    \mathbf{X}
  })
  \mathrlap{\,.}
\end{equation}
By the adjunction property, this receives a natural transformation from the identity, given by restricting $\mathbb{R}^n$-plots to all points in $\mathbb{R}^n$: 
\begin{equation}
\label
{TheSharpUnit}
  \begin{tikzcd}[row sep=-2pt, 
    column sep=10pt
  ]
    \mathbf{X}
    \ar[
      rr,
      "{ 
        \eta^{\sharp}_{\mathbf{X}} 
       }"
    ]
    &&
    \sharp \mathbf{X}
    \\
    \mathrm{Hom}\bracket({
      \mathbb{R}^n, 
      \mathbf{X}
    })
    \ar[
      rr,
      "{ 
        (
          \epsilon
            ^\flat
            _{\mathbb{R}^n}
        )^\ast 
      }"
    ]
    &&
    \mathrm{Hom}\bracket({
      \flat \mathbb{R}^n, 
      \mathbf{X}
    })
    \\
    \big({
    \mathbb{R}^n 
    \overset{\phi}{\to}
    \mathbf{X}
    }\big)
    &\longmapsto&
    \big({\,
    \flat \mathbb{R}^n
    \overset
      {\epsilon^\flat_{\mathbb{R}^n}}
      {\longrightarrow}
    \mathbb{R}^n 
    \overset{\phi}{\to}
    \mathbf{X}
    \,}\big)
    \mathrlap{\,,}
  \end{tikzcd}
\end{equation}
which has the universal property that it uniquely factors smooth maps into any $\sharp \mathbf{Y}$:
\begin{equation}
\label
{UniversalPropertyOfSharpUnit}
\begin{tikzcd}
  \mathbf{X}
  \ar[
    r,
    "{
      \eta^\sharp_{\mathbf{X}}
    }"
  ]
  \ar[
    rr,
    uphordown,
    "{
      \forall
    }"{description}
  ]
  &
  \sharp \mathbf{X}
  \ar[
    r,
    dashed,
    "{ \exists ! }"
  ]
  &
  \sharp \mathbf{Y}
\end{tikzcd}
\end{equation}

\paragraph
{Concretification}
\label
{OnConcretification}
With this, we may now say that $\mathbf{X} \in \mathrm{SmthSet}$ is \emph{concrete} iff  $\eta^\sharp_{\mathbf{X}}$ \cref{TheSharpUnit} is a monomorphism:
\begin{equation}
\label
{Concreteness}
  \mathbf{X} 
  \;
  \text{concrete}
  \;\;\;
  \Leftrightarrow
  \;\;\;
  \begin{tikzcd}
    \mathbf{X}
    \ar[
      r, 
      hook,
      "{ 
        \eta^\sharp
          _{\mathbf{X}}
      }"
    ]
    &
    \sharp \mathbf{X}
    \,,
  \end{tikzcd}
\end{equation}
because this means that the smooth probe maps $\inlinetikzcd{\mathbb{R}^n \ar[r] \& \mathbf{X}}$ form a subset of the maps of underlying points from $\flat \mathbb{R}^n$ to $\flat \mathbf{X}$. Such concrete smooth sets are equivalently known as \emph{diffeological spaces} (for which cf. \cite{IglesiasZemmour2013}):
\begin{equation}
  \bracketmid\{{
    \mathbf{X} \in 
    \mathrm{SmthSet}
  }{
    \text{$\mathbf{X}$ concrete}
  }
  \}
  \simeq
  \text{diffeological spaces}
  \mathrlap{\,.}
\end{equation}

Better yet, every smooth set $\mathbf{X}$ has a universal approximation by a concrete one, its \emph{concretification} $\sharp_{1} \mathbf{X}$, given by the epi/mono factorization (cf. \cite[\S 4.4]{Borceux1994-I}) of $\eta^\sharp_{\mathbf{X}}$ through its image:
\begin{equation}
\label
{Concretification}
  \begin{tikzcd}
    \mathbf{X}
    \ar[
      r,
      ->>,
      "{
        \eta^{\sharp_1}
          _{\mathbf{X}}
      }"
    ]
    \ar[
      rr,
      uphordown,
      "{
        \eta^\sharp
          _{\mathbf{X}}
      }"
    ]
    &
    \sharp_{1}
    \mathbf{X}
    \ar[
      r,
      hook
    ]
    &
    \sharp\mathbf{X}
    \mathrlap{\,.}
  \end{tikzcd}
\end{equation}
In words, the plots of $\sharp_1 \mathbf{X}$ are those of $\mathbf{X}$ after forgetting all information that disappears when restricting to points of probe spaces.

A key example of concretification is that of smooth moduli sets of differential forms, see \cref{ConcretifiedSmoothDeRhamComplex}.

Noticing that $\mathrm{Pnt}\bracket({\sharp_1 \mathbf{X}}) \simeq \mathrm{Pnt}\bracket({\mathbf{X}})$, naturally, we have natural isomorphisms
\begin{equation}
  \begin{aligned}
    \sharp \,
    \sharp_1 \mathbf{X}
    & \simeq
    \sharp \mathbf{X}
    \mathrlap{\,,}
    \\
    \flat 
    \, \sharp_1 \mathbf{X}
    & \simeq
    \flat \mathbf{X}
    \mathrlap{\,.}
  \end{aligned}
\end{equation}

In variation of \cref{UniversalPropertyOfSharpUnit},  a smooth map to a concrete smooth set \cref{Concreteness} factors uniquely  through the concretification \cref{Concretification}:
\footnote{
  The universal property \cref{UniversalPropertyOfConcretification} follows from the orthogonal epi/mono factorization in $\mathrm{SmthSet}$, which implies a unique dashed lift in the following diagram:
  \[
    \begin{tikzcd}[
      ampersand replacement=\&
    ]
      \mathbf{X}
      \ar[
        r,
        "{ f }"
      ]
      \ar[
        d,
        ->>,
        "{
          \eta^{\sharp_1}
            _{\mathbf{X}}
        }"
      ]
      \&
      \sharp_1
      \mathbf{Y}
      \ar[
        d,
        equals,
      ]
      \\
      \sharp_1 \mathbf{X}
      \ar[
        d,
        hook
      ]
      \ar[
        r,
        dashed
      ]
      \&
      \sharp_1 \mathbf{Y}
      \ar[
        d,
        hook
      ]
      \\[-5pt]
      \sharp \mathbf{X}
      \ar[
        r,
        "{
          \sharp f
        }"
      ]
      \&
      \sharp \mathbf{Y}
      \mathrlap{\,.}
    \end{tikzcd}
  \]
}
\begin{equation}
\label
{UniversalPropertyOfConcretification}
\begin{tikzcd}
  \mathbf{X}
  \ar[
    r,
    "{
      \eta^{\sharp_1}
        _{\mathbf{X}}
    }"
  ]
  \ar[
    rr,
    uphordown,
    "{
      \forall
    }"{description}
  ]
  &
  \sharp_1 \mathbf{X}
  \ar[
    r,
    dashed,
    "{
      \exists!
    }"
  ]
  &
  \sharp_1 \mathbf{Y}
\end{tikzcd}
\end{equation}

We highlight that the points of a concretification are the original points:
\begin{equation}
\label
{PointsOfConcretification}
  \flat \sharp_1
  \mathbf{X}
  \simeq
  \flat \mathbf{X}
\end{equation}
This is clear on plots. An abstract way to see it is that the image factorization in a topos is the colimit of a limit (the coequalizer of the kernel pair) and both are preserved by the left and right adjoint functor $\flat$.

\paragraph
{Cohesion}
Finally, there is a further left adjoint functor to the triple \cref{DscPntChtAdjointTriple}, which may be understood as taking smooth sets to their sets of \emph{connected components}. The resulting adjoint quadruple \cite{Sc13-dcct} exhibits $\mathrm{SmthSet}$ as a \emph{cohesive topos} in the sense of \cite{Lawvere2007}:
\begin{equation}
\label
{AdjointQuadruple}
  \begin{tikzcd}
    \mathrm{SmthSet}
    \ar[
      rr,
      shift left=30pt,
      "{
        \mathclap{\times}
      }"{description, pos=0}
    ]
    \ar[
      rr,
      phantom,
      shift left=24pt,
      "{ \bot }"{scale=.7}
    ]
    \ar[
      rr,
      phantom,
      shift left=8pt,
      "{ \bot }"{scale=.7}
    ]
    \ar[
      rr,
      "{
        \mathrm{Pnt}
      }"{description}
    ]
    \ar[
      rr,
      phantom,
      shift right=8pt,
      "{ \bot }"{scale=.7}
    ]
    &&
    \mathrm{Set}
    \mathrlap{\,.}
    \ar[
      ll,
      hook',
      shift right=16pt,
      "{
        \mathrm{Dsc}
      }"{description}
    ]
    \ar[
      ll,
      hook',
      shift left=16pt,
      "{
        \mathrm{Cht}
      }"{description}
    ]
  \end{tikzcd}
\end{equation}

But this further left adjoint functor comes into its own only after generalization from the cohesive 1-topos of smooth sets to the cohesive $\infty$-topos of smooth $\infty$-groupoids. This is what we turn to next in \cref{OnHomotopy}.

\subsection
{Homotopy}
\label
{OnHomotopy}

\subsubsection
{Gauged Sets}
\label
{OnGaugedSets}

Where the term \emph{group} is a historically evolved shorthand for \emph{group of symmetries} of a single object, the term \emph{groupoid} is to be understood as the \emph{consistent collection of symmetry transformations} between possibly several objects. In a context specifically of \emph{gauge symmetries} we have, as a slogan, that \emph{groupoids are sets with gauge equivalences} and \emph{higher groupoids are sets with higher gauge-of-gauge equivalences}.  Or for short: \emph{Groupoids are gauged sets}. 

For example, where the collection of scalar field configurations on Minkowski spacetime $\mathbb{R}^{1,d}$ forms the plain set
\begin{equation}
  \bracketmid\{
    { \Phi }
    {
      \substack{
      \Phi \,\in\, 
        C^\infty({
          \mathbb{R}^{1,d},
          \mathbb{C}
        })
      }
    }
  \}
  \,,
\end{equation}
the field configurations of a gauge field with gauge group $G$ (and Lie algebra $\mathfrak{g}$) form the groupoid which we may schematically but suggestively denote as
\begin{equation}
  \left\{
    \begin{tikzcd}[
      column sep=-5pt
    ]
      A
      \ar[
        r,
        bend left=10,
        "{ g }"
      ]
      &[18pt]
      A'
      &
      =
      g (\mathrm{d} + A) g^{-1}
    \end{tikzcd}
  \,
  \middle\vert
  \,
    \substack{
      A,A' \,\in\, 
      \Omega^1_{\mathrm{dR}}
      \scaledbracket({
        \mathbb{R}^{1,d}
        ;
        \mathfrak{g}
      })
      \\
      g \,\in\, 
      C^\infty
      \scaledbracket({
        \mathbb{R}^{1,d},
        G
      })
    }
  \right\}
  \,,
\end{equation}
while the configurations of the B-field form the 2-groupoid, which in this schematic but suggestive vein we may denote as:
\begin{equation}
  \left\{
  \begin{tikzcd}[
    column sep=-5pt
  ]
    & 
    B' \!=\! B + \mathrm{d}a
    \ar[
      dr,
      "{ a' }"
    ]
    \ar[
      d,
      Rightarrow,
      shorten=6pt,
      "{ \lambda }"
    ]
    \\
    B
    \ar[
      ur,
      "{
        a
      }"
    ]
    \ar[
      rr,
      "{ 
        a + a'
        + 
        \lambda \mathrm{d}
        \lambda^{-1}
      }"{swap}
    ]
    &{}&
    B''
    \!=\!
    B' + \mathrm{d}a'
  \end{tikzcd}
  \,\middle\vert\,
  \substack{
    B, B', B'' \,\in\,
    \Omega^2_{\mathrm{dR}}\scaledbracket({
      \mathbb{R}^{1,d}
    })
    \\
    a, a' \,\in\,
    \Omega^1_{\mathrm{dR}}\scaledbracket({
      \mathbb{R}^{1,d}
    })
    \\
    \lambda 
    \,\in\,
    C^\infty\scaledbracket({
      \mathbb{R}^{1,d},
      \mathrm{U}(1)
    })
  }
  \right\}
  \mathrlap{\,.}
\end{equation}
And so on.

To make this precise, it is expedient to stick to the strategy of defining generalized spaces by the sets of \emph{plots} by basic \emph{probe spaces} which they admit. In the present case, as the above examples indicate, the elementary \emph{$n$-fold gauge transformation} should be thought of as the $n$-dimensional generalization of the cellular triangle, known as the cellular \emph{$n$-simplex} $\Delta^n$. The cellular maps between these cellular $n$-simplices (mapping vertices to vertices, edges to edges, etc.) form a category accordingly known as the \emph{simplex category} $\Delta$.

Hence, a higher gauged set $\mathcal{X}$ should be characterized by the system of plots $\mathrm{Plt}\bracket({ \Delta^n, \mathcal{X} })$ of the shape of the standard simplices, forming a functor
\begin{equation}
  \begin{tikzcd}[sep=0pt]
    \Delta^{\mathrm{op}}
    \ar[
      rr
    ]
    &&
    \mathrm{Set}
    \\
    \Delta^n
    &\longmapsto&
    \mathrm{Plt}\bracket({
      \Delta^n,
      \mathcal{X}
    })
    \mathrlap{\,.}
  \end{tikzcd}
\end{equation}
This suggests that (the systems of plots of) higher gauged sets form the functor category
\begin{equation}
  \mathrm{SSet}
  :=
  \mathrm{Func}\bracket({
    \Delta^{\mathrm{op}},
    \mathrm{Set}
  })
\end{equation}
known as the category of \emph{simplicial sets} (cf. \cite{Friedman2012}).

But in more detail, we should still require that $n$-fold gauge transformations may consistently be composed and have inverses, up to higher gauge transformations. To this end, we recognize the boundary $\partial \Delta^n$ of the $n$-simplex with its $k$th $(n-1)$-face $\mathrm{d}_k\bracket({\Delta^n})$ removed, denoted
\begin{equation}
  \Lambda^n_k 
    :=
  \partial \Delta^n 
    \setminus
  \mathrm{d}_k \Delta^n
  \in
  \mathrm{SSet}
  \,,
\end{equation}
as an abstract situation of $n$ composable $(n-1)$-transformations, some of them inverted. Then the requirement of composites and inverses in a simplicial set $\mathcal{X}$ means that every image $\inlinetikzcd{\Lambda^n_k \ar[r] \& \mathcal{X}}$ of this situation in $\mathcal{X}$ may be completed to a full $n$-simplex $\inlinetikzcd{\Delta^{n} \ar[r] \& \mathcal{X}}$, hence that $\mathcal{X}$ admits all dashed lifts in diagrams of the following form:
\begin{equation}
\label
{KanFibrancyCondition}
  \text{$\mathcal{X}$ Kan complex}
  \;\;\;\;
  \Leftrightarrow
  \;\;\;\;
  \forall_{
    \substack{
      n \geq 1,
      \\
      k \leq n
    }
  }
  \;\;
  \begin{tikzcd}[row sep=15pt, 
    column sep=25pt
  ]
    \Lambda^n_k
    \ar[
      r, 
      "{ \forall }"{pos=.4}
    ]
    \ar[
      d,
      hook
    ]
    &
    \mathcal{X}  \mathrlap{\,.}
    \\
    \Delta^{n}
    \ar[
      ur,
      dashed,
      "{ \exists }"{swap}
    ]
  \end{tikzcd}
\end{equation}
We write
\begin{equation}
  \mathrm{KanSSet}
  \subset
  \mathrm{SSet}
\end{equation}
for the full subcategory of simplicial sets on those that satisfy this \emph{Kan fibrancy condition} \cref{KanFibrancyCondition}.

By an \emph{$\infty$-category} we may understand a category whose hom-sets are promoted to (hence which is \emph{enriched in}) such Kan simplicial sets. In such an $\infty$-category we have a notion of morphisms being \emph{homotopy equivalences} if they have inverses up to gauge transformation.

There is a fairly evident notion of \emph{homotopy groups} of such \emph{Kan fibrant simplicial sets}. Let
\begin{equation}
  W_{\mathrm{wh}} 
  \subset 
  \mathrm{Mor}\bracket({
    \mathrm{KanSSet}
  })
\end{equation}
be the class of maps that induce isomorphisms on all these homotopy groups: the \emph{simplicial weak homotopy equivalences}. 
Then there is an $\infty$-category which is the universal solution to turning weak homotopy equivalences into actual homotopy equivalences, denoted
\begin{equation}
  \mathrm{Grpd}_\infty
  :=
  L^{W_{\mathrm{wh}}} \mathrm{KanSSet}
  \mathrlap{\,.}
\end{equation}
This is the \emph{$\infty$-category of $\infty$-groupoids}, making precise the idea of the \emph{gauged category of gauge sets}.

Ordinary (ungauged) sets are faithfully included here as the \emph{0-truncated} $\infty$-groupoids ($\mathrm{Set} \simeq \mathrm{Grpd}_0$), with the inclusion having a left adjoint (\emph{0-truncation}):
\begin{equation}
\label
{0TruncationAdjunction}
  \begin{tikzcd}
    \mathrm{Grpd}_\infty
    \ar[
      r, 
      shift left=12pt,
      "{ \tau_0 }"
    ]
    \ar[
      r,
      phantom,
      shift left=6pt,
      "{ \bot }"{scale=.7}
    ]
    &
    \mathrm{Set}
    \mathrlap{\,.}
    \ar[
      l,
      hook'
    ]
  \end{tikzcd}
\end{equation}
By adjunction, the composite $\infty$-functor
\begin{equation}
\label
{ZeroTruncationModality}
  [-]_0
  :
  \begin{tikzcd}
    \mathrm{Grpd}_\infty
    \ar[
      r,
      "{ \tau_0 }"
    ]
    &
    \mathrm{Set}
    \ar[r, hook]
    &
    \mathrm{Grpd}_\infty
  \end{tikzcd}
\end{equation}
comes with a natural transformation 
\begin{equation}
\label
{0truncationUnit}
  \begin{tikzcd}
    \mathllap{
    \substack{
      \text{gauged}
      \\
      \text{set}
    }}
    \;\;
    \mathcal{X}
    \ar[
      rr, 
      "{ 
        \eta^{[-]_0}
          _{\mathcal{X}} 
      }"
    ]
    &&
    \bracket[{\mathcal{X}}]_0
    \;\;
    \mathrlap{
    \substack{
      \text{set of gauge}
      \\
      \text{equivalence classes}
    }}
  \end{tikzcd}
\end{equation}
with the universal property that it uniquely factors maps to 0-truncated $\infty$-groupoids:
\begin{equation}
\label
{UniversalPropertyOfZeroTruncation}
  \begin{tikzcd}[
    column sep=large
  ]
    \mathcal{X}
    \ar[
      r,
      "{
        \eta^{[-]_0}
          _{\mathcal{X}}
      }"
    ]
    \ar[
      rr,
      uphordown,
      "{ \forall }"{description}
    ]
    &
    \bracket[{\mathcal{X}}]_0
    \ar[
      r,
      dashed,
      "{ \exists! }"
    ]
    &
    \bracket[{\mathcal{Y}}]_0
    \mathrlap{\,.}
  \end{tikzcd}
\end{equation}

As indicated in \cref{0truncationUnit}, with  $\mathcal{X} \in \mathrm{Grpd}_\infty$ thought of as a gauge set, its 0-truncation $\bracket[{\mathcal{X}}]_0 \in \mathrm{Set}$ is its set of \emph{gauge equivalence classes}.

\subsubsection
{Coherence}

The grand insight of $\infty$-category theory is that it proceeds essentially along the lines of ordinary category theory (a point originally made in \cite{Joyal2008}), with equalities generalized to homotopies and uniqueness generalized to contractible $\infty$-groupoids of choices.

For instance:  
\begin{itemize}
\item 
A \emph{Cartesian/pullback square} over a pair of coincident maps of (smooth) sets
\begin{equation}
  \begin{tikzcd}[row sep=small]
    & 
    \mathbf{Y}
    \ar[
      d
    ]
    \\
    \mathbf{X}
    \ar[r]
    &
    \mathbf{B}
  \end{tikzcd}
\end{equation}
is a commuting square (meaning that the diagonal composite maps are \emph{equal})
\begin{equation}
  \begin{tikzcd}[row sep=small]
    \mathbf{X} 
      \!\times_{_{\mathbf{B}}}\!\!
    \mathbf{Y}
    \ar[r]
    \ar[d]
    \ar[
      dr,
      phantom,
      "{ \lrcorner }"{pos=.1}
    ]
    & 
    \mathbf{Y}
    \ar[
      d
    ]
    \\
    \mathbf{X}
    \ar[r]
    &
    \mathbf{B}
    \mathrlap{\,,}
  \end{tikzcd}
\end{equation}
which is universal with this property, in that for any other such square there is a \emph{unique} dashed comparison map
\begin{equation}
  \begin{tikzcd}[row sep=14pt]
    \mathbf{Q}
    \ar[
      drr,
      bend left=15
    ]
    \ar[
      ddr,
      bend right=15
    ]
    \ar[
      dr,
      dashed
    ]
    &[-20pt]
    \\[-10pt]
    &
    \mathbf{X} 
      \!\times_{_{\mathbf{B}}}\!\!
    \mathbf{Y}
    \ar[r]
    \ar[d]
    \ar[
      dr,
      phantom,
      "{ \lrcorner }"{pos=.1}
    ]
    & 
    \mathbf{Y}
    \ar[
      d
    ]
    \\
    &
    \mathbf{X}
    \ar[r]
    &
    \mathbf{B}
  \end{tikzcd}
\end{equation}
so that the two resulting triangles commute.

\item
A \emph{homotopy Cartesian/pullback square} over a pair of coincident maps of (smooth, cf. below) $\infty$-groupoids
\begin{equation}
  \begin{tikzcd}
    & 
    \boldsymbol{\mathcal{Y}}
    \ar[
      d
    ]
    \\
    \boldsymbol{\mathcal{X}}
    \ar[
      r
    ]
    &
    \boldsymbol{\mathcal{B}}
  \end{tikzcd}
\end{equation}
is a square that commutes up to a specified \emph{homotopy} (often notationally suppressed!)
\begin{equation}
\label
{HomotopyPullback}
  \begin{tikzcd}
    \boldsymbol{\mathcal{X}}
    \!\times_{_{\boldsymbol{\mathcal{B}}}}\!\!
    \boldsymbol{\mathcal{Y}}
    \ar[r, "{\ }"{swap,name=s, pos=.8}]
    \ar[d, "{\ }"{name=t, pos=.8}]
    \ar[
      from=s,
      to=t,
      Rightarrow
    ]
    \ar[
      dr,
      phantom,
      "{ \lrcorner }"{pos=.0}
    ]
    & 
    \boldsymbol{\mathcal{Y}}
    \ar[
      d
    ]
    \\
    \boldsymbol{\mathcal{X}}
    \ar[
      r
    ]
    &
    \boldsymbol{\mathcal{B}}
    \mathrlap{\,,}
  \end{tikzcd}
\end{equation}
which is universal with this property in that for any other such square the $\infty$-groupoid of compatible dashed comparison data 
\begin{equation}
\label
{UniversalPropertyOfHomotopyPullback}
  \begin{tikzcd}
    \boldsymbol{\mathcal{Q}}
    \ar[
      drr,
      bend left=15,
      ""{swap, name=s1, pos=.4}
    ]
    \ar[
      ddr,
      bend right=25,
      ""{name=t2, pos=.45}
    ]
    \ar[
      dr,
      dashed,
      ""{name=t1, pos=.7},
      ""{swap, name=s2, pos=.7}
    ]
    \ar[
      from=s1,
      to=t1,
      Rightarrow,
      dashed
    ]
    \ar[
      from=s2,
      to=t2,
      Rightarrow,
      dashed
    ]
    &[-15pt]
    \\[-5pt]
    &
    \boldsymbol{\mathcal{X}}
    \!\times_{_{\boldsymbol{\mathcal{B}}}}\!\!
    \boldsymbol{\mathcal{Y}}
    \ar[r, "{\ }"{swap,name=s}, pos=.8]
    \ar[d, "{\ }"{name=t, pos=.8}]
    \ar[
      dr,
      phantom,
      "{ \lrcorner }"{pos=.0}
    ]
    \ar[
      from=s,
      to=t,
      Rightarrow
    ]
    & 
    \boldsymbol{\mathcal{Y}}
    \ar[
      d
    ]
    \\
    &
    \boldsymbol{\mathcal{X}}
    \ar[
      r
    ]
    &
    \boldsymbol{\mathcal{B}}
    \mathrlap{\,,}
  \end{tikzcd}
\end{equation}
is \emph{contractible}.

\end{itemize}

Several ingenious ways have been found to handle such \emph{homotopy coherence}. A generally useful one is \emph{model category} theory, which is a toolbox to cleverly model such situations in (for our case) the ordinary category of simplicial presheaves $\mathrm{Fun}\bracket({\mathrm{CartSp}^{\mathrm{op}}, \mathrm{SSet}})$ such that the homotopy coherent universal construction (say a homotopy pullback) is modeled by the corresponding ordinary universal construction (say an ordinary pullback). A concise introduction to these methods, aimed at our context, is in \cite[\S 1]{FSS23-Char}.

\subsection
{Geometric Homotopy}
\label
{GeometricHomotopy}

\subsubsection
{Smooth Gauged Sets}
\label
{OnSmoothInfinityGroupoids}

The unification, as advertised in \cref{GeomHomotopySchematics}, of the differential geometry of smooth sets (\cref{OnSmoothSets}) with the homotopy theory of gauged sets ($\infty$-groupoids, \cref{OnGaugedSets}) is now immediate:

A \emph{smooth gauged set} (\emph{smooth $\infty$-groupoid} or \emph{smooth $\infty$-stack}) should be a generalized space whose structure is detected by plotting out probe spaces which are formal products $\mathbb{R}^n \times \Delta^k$ of a Cartesian space (for probing the smooth structure) with a cellular simplex (for probing the higher gauge structure). Such formal products are the objects in the product site $\mathrm{CartSp} \times \Delta$ and hence (the system of plots of) a smooth gauged set $\boldsymbol{\mathcal{X}}$ is encoded by a functor of the form
\begin{equation}
  \begin{tikzcd}[
   row sep=-3pt, column sep=0pt
  ]
    \bracket({
    \mathrm{CartSp} 
      \times
    \Delta
    })^{\mathrm{op}}
    \ar[rr]
    &&
    \mathrm{Set}
    \\
    \mathbb{R}^n 
      \times
    \Delta^k
    &\longmapsto&
    \mathrm{Plt}\bracket({
      \mathbb{R}^n \times \Delta^k,
      \boldsymbol{\mathcal{X}}
    })
    \mathrlap{\,,}
  \end{tikzcd}
\end{equation}
or equivalently of the form
\begin{equation}
  \begin{tikzcd}[
   row sep=-3pt, column sep=0pt
  ]
    \mathrm{CartSp}^{\mathrm{op}}
    \ar[rr]
    &&
    \mathrm{SSet}
    \\
    \mathbb{R}^n
    &\longmapsto&
    \bracket({
      \Delta^k 
      \mapsto
      \mathrm{Plt}\bracket({
        \mathbb{R}^n 
        \times
        \Delta^k,
        \boldsymbol{\mathcal{X}}
      })
    })
    \mathrlap{\,.}
  \end{tikzcd}
\end{equation}
As such, this is known as a \emph{simplicial presheaf} on the site $\mathrm{CartSp}$. In more detail, the simplicial sets of plots of a smooth gauged set should be Kan complexes \cref{KanFibrancyCondition}, hence of the form
\begin{equation}
  \begin{tikzcd}
    \mathrm{CartSp}^{\mathrm{op}}
    \ar[
      rr
    ]
    &&
    \mathrm{KanSSet}
    \ar[
      r,
      hook
    ]
    &
    \mathrm{SSet}
    \mathrlap{\,.}
  \end{tikzcd}
\end{equation}
As before, there is redundancy in this description: For detecting the smooth structure, it ought to be sufficient to consider germs of probes, and for properly recognizing the gauged structure we need to regard weak homotopy equivalences of simplicial sets of plots as actual homotopy equivalences. Thus, denoting by
\begin{equation}
\label
{ClassOfLocalWeakHomotopyEquivalences}
  W_{\mathrm{lwh}}
  \subset
  \mathrm{Mor}\bracket({
    \mathrm{Func}\bracket({
      \mathrm{CartSp}^{\mathrm{op}},
      \mathrm{KanSSet}
    })
  })
\end{equation}
the class of such \emph{local} (germ-wise) \emph{weak homotopy equivalences}, the $\infty$-category of \emph{smooth gauged sets} is the simplicial localization
\begin{equation}
\label
{InfinityCategoryOfSmoothInfinityGroupoids}
  \mathrm{SmthGrpd}_\infty
  :=
  L^{W_{\mathrm{lwh}}}
  \mathrm{Func}\bracket({
    \mathrm{CartSp}^{\mathrm{op}},
    \mathrm{KanSSet}
  })
  \mathrlap{\,.}
\end{equation}

As a fundamental example, for a surjective submersion $p : \inlinetikzcd{ U \ar[r, ->>] \& X }$ of smooth manifolds (such as an open cover), its {\v C}ech nerve is
\begin{equation}
  \begin{aligned}
  \boldsymbol{\mathcal{U}} 
  & \in \mathrm{SmthGrpd}_\infty
  \\
  \mathrm{Plt}\bracket({
    \mathbb{R}^n \times \Delta^k,
    \boldsymbol{\mathcal{U}}
  })
  &
  :=
  C^\infty\bracket({
    \mathbb{R}^n,
    U^{\times_X^k}
  })  
  \end{aligned}
\end{equation}
and the comparison map
$$
  \begin{tikzcd}[
    row sep=-3pt, column sep=small
  ]
    \boldsymbol{\mathcal{U}}
    \ar[rr]
    &&
    X
    \\
    C^\infty\bracket({
      \mathbb{R}^n,
      U^{\times_X^k}
    })  
    \ar[rr, "{ p_\ast }"]
    &&
    C^\infty\bracket({
      \mathbb{R}^n,
      X
    })  
  \end{tikzcd}
$$
is a local weak homotopy equivalence.

In evident generalization of \cref{0TruncationAdjunction}, we have that smooth sets form a reflective sub-$\infty$-category of smooth $\infty$-groupoids, which we denote by the same symbols:
\begin{equation}
\label
{Smooth0TruncationAdjunction}
  \begin{tikzcd}
    \mathrm{SmthGrpd}_\infty
    \ar[
      r, 
      shift left=12pt,
      "{ \tau_0 }"
    ]
    \ar[
      r,
      phantom,
      shift left=6pt,
      "{ \bot }"{scale=.7}
    ]
    &
    \mathrm{SmthSet}
    \mathrlap{\,.}
    \ar[
      l,
      hook'
    ]
  \end{tikzcd}
\end{equation}
with the corresponding 0-truncation modality, generalizing \cref{ZeroTruncationModality}
\begin{equation}
\label
{SmoothZeroTruncationModality}
  [-]_0
  :
  \begin{tikzcd}
    \mathrm{SmthGrpd}_\infty
    \ar[
      r,
      "{ \tau_0 }"
    ]
    &
    \mathrm{SmthSet}
    \ar[r, hook]
    &
    \mathrm{SmthGrpd}_\infty
    \mathrlap{\,.}
  \end{tikzcd}
\end{equation}
This is just given by plotwise 0-truncation (``followed by sheafification'', which however we have absorbed in the localization \cref{InfinityCategoryOfSmoothInfinityGroupoids}):
\begin{equation}
\label
{PlotsOfSMooth0Truncation}
  \mathrm{Plt}\bracket({
    \mathbb{R}^n,
    \bracket[{
      \boldsymbol{\mathcal{X}}
    }]_0
  })
  \simeq
  \bracket[{
    \mathrm{Plt}\bracket({
      \mathbb{R}^n,
      \boldsymbol{\mathcal{X}}
    })
  }]_0
  \mathrlap{\,.}
\end{equation}

This $\infty$-category \cref{InfinityCategoryOfSmoothInfinityGroupoids} is an $\infty$-topos. In particular, this means that it is \emph{Cartesian closed}, in generalization of \cref{InternalHomOfSmoothSets,CartesianClosureOfSmoothSets}, in that for all $\boldsymbol{\mathcal{X}}, \boldsymbol{\mathcal{Y}} \in \mathrm{SmthGrpd}_\infty$ we have the smooth $\infty$-groupoid of smooth maps between them (the \emph{mapping stack})
\begin{equation}
  \begin{aligned}
    \mathbf{Map}\bracket({
      \boldsymbol{\mathcal{X}},
      \boldsymbol{\mathcal{Y}}
    })
    &
    \in
    \mathrm{SmthGrpd}_\infty
    \\
    \mathrm{Plt}\bracket({
    \mathbb{R}^n,
    \mathbf{Map}\bracket({
      \boldsymbol{\mathcal{X}},
      \boldsymbol{\mathcal{Y}}
    })    
    })
    &
    =
    \mathrm{Hom}\bracket({
      \mathbb{R}^n
      \times
      \boldsymbol{\mathcal{X}},
      \boldsymbol{\mathcal{Y}}
    })
    \,,
  \end{aligned}
\end{equation}
where now $\mathrm{Hom}(-,-)$ denotes the plain hom $\infty$-groupoid between smooth $\infty$-groupoids.

But as an $\infty$-topos, $\mathrm{Grpd}_\infty$ is special: it is \emph{cohesive}. This is what we discuss next.

\subsubsection
{Cohesion}
\label
{OnCohesion}

Above, we motivated the construction of $\mathrm{SmthGrpd}_\infty$ \cref{InfinityCategoryOfSmoothInfinityGroupoids} by imagining probing these higher smooth spaces by plotting out little smooth balls inside them. The intuition is that these balls probe how the spatial constituents smoothly \emph{hang together}, hence how there is a smooth ``cohesion'' among them. This is the \emph{differential geometric} aspect of smooth gauged sets. 

Remarkably, beyond motivating the definition, this cohesion is visible in the abstract properties of the $\infty$-topos, namely in that it carries an adjoint quadruple of the form \cref{AdjointQuadruple}, but now promoted to $\infty$-functors over $\infty$-groupoids \cite[Prop. 4.4.8]{Sc13-dcct}:
\begin{equation}
\label
{AdjointQuadrupleOfInftyFunctors}
  \begin{tikzcd}
    \mathrm{SmthGrpd}_\infty
    \ar[
      rr,
      shift left=32pt,
      "{
        \mathclap{\times}
      }"{description, pos=0},
      "{
        \mathrm{Shp}
      }"{description}
    ]
    \ar[
      rr,
      phantom,
      shift left=24pt,
      "{ \bot }"{scale=.7}
    ]
    \ar[
      rr,
      phantom,
      shift left=8pt,
      "{ \bot }"{scale=.7}
    ]
    \ar[
      rr,
      "{
        \mathrm{Pnt}
      }"{description}
    ]
    \ar[
      rr,
      phantom,
      shift right=8pt,
      "{ \bot }"{scale=.7}
    ]
    &&
    \mathrm{Grpd}_\infty
    \mathrlap{\,.}
    \ar[
      ll,
      hook',
      shift right=16pt,
      "{
        \mathrm{Dsc}
      }"{description}
    ]
    \ar[
      ll,
      hook',
      shift left=16pt,
      "{
        \mathrm{Cht}
      }"{description}
    ]
  \end{tikzcd}
\end{equation}
Here, the bottom three functors are just as in \cref{DscPntChtAdjointTriple} in \cref{OnConcreteSmoothSets} and have the analogous interpretation: They witness the underlying ($\infty$-groupoids of) points and the discrete and chaotic smooth structures on these, generalized straightforwardly from smooth sets to smooth $\infty$-groupoids.

\paragraph
{The Shape Modality}
More profound is the generalization embodied by the further left adjoint functor $\mathrm{Shp}$ in \cref{AdjointQuadrupleOfInftyFunctors}, to be pronounced the \emph{shape} operation. This may be understood as contracting all cohesive blobs of points (the $\mathbb{R}^k$) to points, while retaining their global intersection structure.

The corresponding composite functor we denote by
\begin{equation}
\label{TheShapeModality}
  \shape
  :
  \begin{tikzcd}
    \mathrm{SmthGrpd}_\infty
    \ar[
      r,
      "{ \mathrm{Shp} }"
    ]
    &
    \mathrm{Grpd}_\infty
    \ar[
      r,
      hook,
      "{ \mathrm{Dsc} }"
    ]
    &
    \mathrm{SmthGrpd}_\infty
  \end{tikzcd}
\end{equation}
where ``$\shape$'' is (not the integral symbol but) the \emph{esh} symbol from the phonetic alphabet: The pronunciation of ``shape'' is denoted: \textipa{/SeIp/}.

We may think of $\shape \boldsymbol{\mathcal{X}}$ as being the \emph{smooth path $\infty$-groupoid} of $\boldsymbol{\mathcal{X}}$, as follows:

Consider the cosimplicial object $\boldsymbol{\Delta}$ of \emph{extended smooth simplices} 
\footnote{
 The usual bounded (smooth) simplices are as in \cref{ExtendedSmoothSimplices} except for the additional condition $\forall_i : x^i \geq 0$, which makes them manifolds with boundary. The extended simplices  \cref{ExtendedSmoothSimplices} are a ``collared'' version of these ordinary simplices, which improves their technical properties without changing their homotopy theoretic content.
}
\begin{equation}
\label
{ExtendedSmoothSimplices}
  \begin{tikzcd}[
    sep=0pt
  ]
    \Delta
    \ar[rr]
    &&
    \mathrm{SmthMfd}
    \subset
    \mathrm{SmthGrpd}_\infty
    \\
    \Delta[k]
    &\longmapsto&
    \boldsymbol{\Delta}^k
    :=
    \bracketmid\{{
      \vec x \in \mathbb{R}^{k+1}
    }{
      \sum_i x^i = 1
    }\}
    \simeq
    \mathbb{R}^k
  \end{tikzcd}
\end{equation}
which on injections $\inlinetikzcd{\Delta[k] \ar[r, hook] \& \Delta[k+1]}$ is given by inserting a zero coordinate, and on surjections $\inlinetikzcd{ \Delta[k+1] \ar[r, ->>] \& \Delta[k] }$ is given by adding consecutive coordinates.
Then \parencites[Prop. 1.3]{PavlovEtAl2024}[Prop. 4.14]{Bunk2022}[\S 5]{AmabelEtAl2021}[Prop. 8.36]{BunkShahbazi2023}[Prop. 12.10]{Pavlov2024}:
\begin{equation}
\label
{SimplicialFormulaForShape}
  \shape \boldsymbol{\mathcal{X}}
  \simeq
  \varinjlim_{k}
  \,
  \mathbf{Map}\bracket({
    \boldsymbol{\Delta}^k,
    \boldsymbol{\mathcal{X}}
  })
  \simeq
  \mathrm{Dsc}\bracket({
    \varinjlim_{k}
    \,
    \boldsymbol{\mathcal{X}}\bracket({
      \Delta^k
    })
  })
  \mathrlap{\,.}
\end{equation}
For instance, when $\boldsymbol{\mathcal{X}} \defneq \mathbf{X}$ is 0-truncated (a smooth set), then the homotopy colimit in \cref{SimplicialFormulaForShape} is equivalently the \emph{smooth singular simplicial set}
\begin{equation}
\label
{SmoothSingularSet}
  \left.
  \substack{
    \mathbf{X}
    \,\in\,
    \mathrm{SmthSet}
  }
  \right\}
  \;\;\;
  \vdash
  \;\;\;
  \left\{
  \begin{aligned}
    \mathrm{Sing}\bracket({
      \mathbf{X}
    })
    &
    \in
    \mathrm{SSet}
    \to
    \mathrm{Grpd}_\infty
    \\
    \mathrm{Plt}\bracket({
      \Delta[k],
      \mathrm{Sing}\bracket({
        \mathbf{X}
      })
    })
    & =
    \mathrm{Hom}\bracket({
      \boldsymbol{\Delta}^k,
      \mathbf{X}
    })
    \phantom{
    \left.
    \substack{
      \mathbf{X}
      \,\in\,
      \mathrm{SmthSet}
    }
  \right\}
  \;\;\;
  \vdash    
    }
  \end{aligned}
  \right.
\end{equation}
in that
\begin{equation}
\label
{ShapeEquivalentToSingularComplex}
  \left.
  \substack{
    \mathbf{X}
    \,\in\,
    \mathrm{SmthSet}
  }
  \right\}
  \;\;\;
  \Rightarrow
  \;\;\;
  \shape \mathbf{X}
  \simeq
  \mathrm{Dsc}
  \,
  \mathrm{Sing}\bracket({
    \mathbf{X}
  })
  \mathrlap{\,.}
  \phantom{
  \left.
  \substack{
    \mathbf{X}
    \,\in\,
    \mathrm{SmthSet}
  }
  \right\}
  \;\;\;
  \Rightarrow
  }
\end{equation}
This way, one may generally understand $\shape(-)$ as a generalization of the traditional $\mathrm{Sing}(-)$ to smooth $\infty$-groupoids (cf. \cite[pp. 155]{SS26-Bun}).
Moreover, by adjunction, there is a natural transformation (the \emph{shape unit})
\begin{equation}
\label
{ShapeUnit}
  \begin{tikzcd}
    \boldsymbol{\mathcal{X}}
    \ar[
      rr,
      "{
        \eta^{\shape}
          _{\boldsymbol{\mathcal{X}}}
      }"
    ]
    &&
    \shape
    \boldsymbol{\mathcal{X}}
  \end{tikzcd}
\end{equation}
which, under the equivalence \cref{SimplicialFormulaForShape}, is the inclusion of constant paths. Under the shape operation, the shape unit is an equivalence:
\begin{equation}
\label
{IdempotencyViaShapeUnit}
  \begin{tikzcd}
    \shape
    \boldsymbol{\mathcal{X}}
    \ar[
      rr,
      "{
        \shape
        \eta^{\shape}
          _{\boldsymbol{\mathcal{X}}}
        \,\simeq\,
        \eta^{\shape}
          _{\shape \boldsymbol{\mathcal{X}}}
      }",
      "{
        \sim
      }"{swap}
    ]
    &&
    \shape
    \shape
    \boldsymbol{\mathcal{X}}
    \mathrlap{\,.}
  \end{tikzcd}
\end{equation}

Further important properties of the shape operation are:
\begin{enumerate}
\item
\textbf{(Partial exactness).}
Shape preserves homotopy fiber products \cref{HomotopyPullback} at least:
\begin{enumerate}
\item over geometrically discrete objects \parencites[Prop. 4.3.8]{SS26-Bun}[Thm. 3.8.19]{Sc13-dcct}:
\begin{equation}
\label
{ShapePreservesHomotopyFibersOverDiscrete}
  \left.
  \substack{
    \boldsymbol{\mathcal{X}},
    \boldsymbol{\mathcal{Y}}
    \,\in\,
    \mathrm{SmthGrpd}_\infty
    \\
    \mathcal{B} \,\in\,
    \mathrm{Dsc}\scaledbracket({
      \mathrm{Grpd}_\infty
    })
  }
  \right\}
  \;\;
  \Rightarrow
  \;\;\
  \shape\bracket({
    \boldsymbol{\mathcal{X}}
    \times_{\mathcal{B}}
    \boldsymbol{\mathcal{Y}}
  })
  \simeq
  \bracket({
    \shape
    \boldsymbol{\mathcal{X}}
  })
  \times_{\mathcal{B}}
  \bracket({
    \shape
    \boldsymbol{\mathcal{Y}}
  })  
  \mathrlap{\,,}
  \phantom{
  \left.
  \substack{
    \boldsymbol{\mathcal{X}},
    \boldsymbol{\mathcal{Y}}
    \,\in\,
    \mathrm{SmthGrpd}_\infty
    \\
    \mathcal{B} \,\in\,
    \mathrm{Dsc}\scaledbracket({
      \mathrm{Grpd}_\infty
    })
  }
  \right\}
  \;\;
  \Rightarrow  
  }
\end{equation}
\item of 0-truncated objects if one of the maps becomes a Kan fibration under the smooth singular simplicial complex $\mathrm{Sing}$ \cref{SmoothSingularSet}:
\footnote{
 The implication \cref{ShapePreservesFiberProductsWithSingFibrantMaps} follows because $\mathrm{Sing}$ \cref{SmoothSingularSet} evidently preserves 1-categorical fiber products and since (cf. \cite[Prop. A.2.4.4]{Lurie2009} with \cite[\S II.8.6]{GoerssJardine2009}) in $\mathrm{SSet}$ these present homotopy fiber products as soon as one of the maps is a Kan fibration.
}
\begin{equation}
\label
{ShapePreservesFiberProductsWithSingFibrantMaps}
  \left.
  \substack{
  \adjustbox{scale=.9}{
  \begin{tikzcd}[
    ampersand replacement=\&,
    row sep=-5pt,
    column sep=10pt
  ]
    \mathbf{X}
    \ar[
      dr,
      shorten=-2pt,
      "{f}"{pos=.4}
    ]
    \&\&
    \mathbf{Y}
    \ar[
      dl,
      shorten=-2pt,
      "{ g }"{pos=.4,swap}
    ]
    \\
    \&
    \mathbf{B}
  \end{tikzcd}
  }
  \in\, 
  \mathrm{SmthSet}
  \\
  \mathrm{Sing}(f)
  \text{ is Kan fibration}
  }
  \right\}
  \;\Rightarrow\;
  \shape\bracket({
    \mathbf{X} 
    \times_{\mathbf{B}}
    \mathbf{Y}
  })
  \simeq
  \bracket({
    \shape \mathbf{X}
  })
  \underset{
    \shape \mathbf{B}
  }{\times}
  \bracket({
    \shape \mathbf{Y}
  })\,.
  \phantom{
  \left.
  \substack{
  \adjustbox{scale=.9}{
  \begin{tikzcd}[
    ampersand replacement=\&,
    row sep=-3pt,
    column sep=10pt
  ]
    \mathbf{X}
    \ar[
      dr,
      shorten=-2pt,
      "{f}"{pos=.4}
    ]
    \&\&
    \mathbf{Y}
    \ar[
      dl,
      shorten=-2pt,
      "{ g }"{pos=.4,swap}
    ]
    \\
    \&
    \mathbf{B}
  \end{tikzcd}
  }
  \in\, 
  \mathrm{SmthSet}
  \\
  \mathrm{Sing}(f)
  \text{ is Kan fibration}
  }
  \right\}  
  }\,.
\end{equation}
\end{enumerate}
\item
\textbf{(Smooth Oka principle).}
Shape preserves mappings out of smooth manifolds \parencites[Prop. 1.4]{PavlovEtAl2024}[Thm. 4.3.54]{SS26-Bun}[Prop. 7.2.27]{Clough2023}:
\begin{equation}
\label
{SmoothOkaPrinciple}
  \hspace{-.7cm}
  \left.
  \substack{
    X \,\in\, \mathrm{SmthMfd}
    \\
    \boldsymbol{\mathcal{Y}}
    \,\in\,
    \mathrm{SmthGrpd}_\infty
  }
  \right\}
  \;\;\;
  \Rightarrow
  \;\;\;
  \shape\, 
  \mathbf{Map}\bracket({
    X,
    \boldsymbol{\mathcal{Y}}
  })
  \simeq
  \mathbf{Map}\bracket({
    X,
    \shape \boldsymbol{\mathcal{Y}}
  })
  \simeq
  \mathbf{Map}\bracket({
    \shape X,
    \shape \boldsymbol{\mathcal{Y}}
  })
  \mathrlap{\,.}
\end{equation}
\end{enumerate}

\paragraph
{The Flat Modality}

The understanding of the flat modality \cref{FlatModalityOnSmoothSets} on gauged smooth sets
\begin{equation}
\label
{FlatModalityOnGaugedSmoothSets}
  \begin{tikzcd}
    \mathrm{SmthSet}
    \ar[
      rr,
      uphordown,
      "{
        \flat
      }"{description}
    ]
    \ar[
      r,
      "{ \mathrm{Pnt} }"
    ]
    &
    \mathrm{Set}
    \ar[
      r,
      hook,
      "{ 
        \mathrm{Dsc}
      }"
    ]
    &
    \mathrm{SmthSet}
  \end{tikzcd}
\end{equation}
is more straightforward. We highlight the following immediate but important point:
By \cref{PlotsOfSMooth0Truncation}, the flat modality commutes with the 0-truncation modality \cref{SmoothZeroTruncationModality}:
\begin{equation}
\label
{FlatCommutesWith0Truncation}
  \bracket[{
  \flat 
  \boldsymbol{\mathcal{X}}
  }]_0
  \simeq
  \flat
  \bracket[{
    \boldsymbol{\mathcal{X}}
  }]_0
  \mathrlap{\,.}
\end{equation}

\section
{Higher Maxwell-type Equations of Motion}
\label
{OnHigherMaxwellTypeEquationsOfMotion}

We make precise the notion of higher Maxwell-type equations of motion (EoM) and how they are equivalently solved by closed differential forms with coefficients in a characteristic $L_\infty$-algebra.

\subsection
{Flux Densities}

\subsubsection
{Differential Form Moduli}
\label
{OnUniversalDeRhamComplex}

\paragraph
{Universal de Rham Complex}
The smooth moduli sets of differential $n$-forms $
  \mathbf{\Omega}^n_{\mathrm{dR}}(\ast)
  \in
  \mathrm{SmthSet}
$
\cref{SmoothModuliSetOfDifferentialNForms}
canonically carry the further structure of smooth $\mathbb{R}$-vector spaces (by plotwise addition and multiplication):
\begin{equation}
  \mathbf{\Omega}^n_{\mathrm{dR}}(\ast)
  \in
  \mathrm{Vec}_{{}_{\mathbb{R}}}\bracket({
    \mathrm{SmthSet}
  })
  \mathrlap{\,.}
\end{equation}
Moreover, equipped with the universal wedge products (given plot-wise by the ordinary wedge product):
\begin{equation}
  \begin{tikzcd}
    \mathbf{\Omega}^{n_1}_{\mathrm{dR}}(\ast)
    \times
    \mathbf{\Omega}^{n_2}_{\mathrm{dR}}(\ast)
    \ar[
      rr,
      "{ (-) \wedge (-) }"
    ]
    &&
    \mathbf{\Omega}^{n_1 + n_2}_{\mathrm{dR}}(\ast)
  \end{tikzcd}
\end{equation}
and universal de Rham differential (again plotwise given by the ordinary de Rham differential),
\begin{equation}
  \begin{tikzcd}
    \mathbf{\Omega}^{n}
      _{\mathrm{dR}}(\ast)
    \ar[
      r,
      "{ \mathrm{d} }"
    ]
    &
    \mathbf{\Omega}^{n+1}_{\mathrm{dR}}(\ast)
    \mathrlap{\,,}
  \end{tikzcd}
\end{equation}
these smooth vector spaces form the \emph{universal de Rham complex}:
\begin{equation}
  \mathbf{\Omega}^\bullet_{\mathrm{dR}}(\ast)
  \in
  \mathrm{dgcAlg}_{_{\mathbb{R}}}\bracket({
    \mathrm{SmthSet}
  })
  \mathrlap{\,.}
\end{equation}

\paragraph
{De Rham moduli over a Smooth Set}
More generally, given $\mathbf{X} \in \mathrm{SmthSet}$, we say that its plain \emph{de Rham complex} is
\begin{equation}
\label
{DeRhamComplexOfSmoothSet}
  \begin{aligned}
  \Omega^\bullet_{\mathrm{dR}}\bracket({
    \mathbf{X}
  })
  & :=
  \flat \mathbf{Map}\bracket({
    \mathbf{X},
    \mathbf{\Omega}^\bullet_{\mathrm{dR}}(\ast)
  })
  \\
  & =
  \mathrm{Hom}\bracket({
    \mathbf{X},
    \mathbf{\Omega}^\bullet_{\mathrm{dR}}(\ast)
  })  
  \in
  \mathrm{dgcAlg}_{_{{\mathbb{R}}}}\bracket({
    \mathrm{Set}
  })
  \mathrlap{\,,}
  \end{aligned}
\end{equation}
and its \emph{smooth de Rham complex} is
\begin{equation}
\label
{SmoothDeRhamComplex}
  \mathbf{\Omega}^\bullet_{\mathrm{dR}}
  \bracket({
    \mathbf{X}
  })
  :=
  \mathbf{Map}\bracket({
    \mathbf{X},
    \mathbf{\Omega}^\bullet_{\mathrm{dR}}(\ast)
  })
  \in
  \mathrm{dgcAlg}_{_{\mathbb{R}}}\bracket({
    \mathrm{SmthSet}
  })
  \mathrlap{\,.}
\end{equation}

We note that, by the formula \cref{InternalHomOfSmoothSets} for $\mathbf{Map}(-,-)$, we have:
\begin{equation}
\label
{PlotsOfSmoothDeRhamComplexOfSmoothSet}
  \begin{aligned}
  \mathrm{Plt}\bracket({
    \mathbb{R}^n,
    \mathbf{\Omega}^\bullet_{\mathrm{dR}}
    \bracket({
      \mathbf{X}
    })
  })
  & 
  \defneq
  \mathrm{Plt}\bracket({
    \mathbb{R}^n,
    \mathbf{\Omega}^\bullet_{\mathrm{dR}}
    \bracket({
      \mathbf{X}
    })
  })
  \\
  & 
  \defneq
  \mathrm{Plt}\bracket({
    \mathbb{R}^n,
    \mathbf{Map}\bracket({
      \mathbf{X},
      \mathbf{\Omega}^\bullet_{\mathrm{dR}}(\ast)
    })
  })
  \\
  &
  =
  \mathrm{Hom}\bracket({
    \mathbb{R}^n \times \mathbf{X},
    \mathbf{\Omega}^\bullet_{\mathrm{dR}}(\ast)
  })
  \\
  & =
  \Omega^\bullet_{\mathrm{dR}}\bracket({
    \mathbb{R}^n \times \mathbf{X}    
  })
  \mathrlap{\,.}
  \end{aligned}
\end{equation}

\paragraph
{Concrete Smooth de Rham complex}
The smooth de Rham complex \cref{SmoothDeRhamComplex} is far from concrete \cref{Concreteness}.
It follows from \cref{PlotsOfSmoothDeRhamComplexOfSmoothSet} that its \emph{concretification} \cref{Concretification} has as $\mathbb{R}^n$-plots 
the vertical differential forms relative to $\mathbb{R}^n$, hence 
the smoothly $\mathbb{R}^n$-parameterized differential forms on $\mathbf{X}$ (\cite[Prop. 1.266]{Sc13-dcct}):
\begin{equation}
\label
{ConcretifiedSmoothDeRhamComplex}
 \begin{aligned}
    \sharp_{1}
    \mathbf{\Omega}^\bullet
      _{\mathrm{dR}}
    \bracket({
      \mathbf{X}
    })
    & 
    \in
    \mathrm{SmthSet}
    \\
    \mathrm{Plt}\bracket({
      \mathbb{R}^n,
      \sharp_{1}
      \mathbf{\Omega}^\bullet_{\mathrm{dR}}
      \bracket({
        \mathbf{X}
      })      
    })
    &
    \simeq
    \Omega^\bullet_{\mathrm{vert}/\mathbb{R}^n}
    \bracket({
      \mathbb{R}^n 
      \times
      \mathbf{X}
    })
    \\
    & \simeq
    C^\infty\bracket({
      \mathbb{R}^n,
      \Omega^\bullet_{\mathrm{dR}}
      \bracket({
        \mathbf{X}
      })
    })
    \mathrlap{\,.}
  \end{aligned}
\end{equation}

\paragraph
{Compactly Supported de Rham complex}
For $X \in \mathrm{SmthMfd}$ write $\mathrm{Cmp}(X)$ for the category whose objects are compact submanifolds with boundary and whose morphisms are inclusions. 
The smooth de Rham complex of differential forms on $X$ which are \emph{compactly supported} or \emph{vanishing at infinity} is the colimit over such $K$ of the moduli of differential forms whose pullback $\iota^\ast_K$ to the complement $X \setminus K^{\circ}$ (of the interior of $K$) vanishes: 
\begin{equation}
\label
{DifferentialFormModuliVanishingAtInfinity}
  \begin{aligned}
  \mathbf{\Omega}^\bullet_{%
    \mathrm{dR},
    \mathrm{c}
  }\bracket({
    X
  })
  &
  :=
  \varinjlim_{K \in \mathrm{Cmp}(X)}
  \,
  \mathbf{\Omega}^\bullet_{\mathrm{dR}}\bracket({
    X, X \setminus K^\circ
  })
  \\[-5pt]
  & 
  :=
  \varinjlim_{K \in \mathrm{Cmp}(X)}
  \,
  \mathrm{fib}
  \Big({
    \inlinetikzcd{
      \mathbf{\Omega}^\bullet_{\mathrm{dR}}
      \bracket({
        X
      })
      \ar[
        r,
        "{ 
          \iota^\ast_K
        }"
      ]
      \&
      \mathbf{\Omega}^\bullet_{\mathrm{dR}}
      \bracket({
        X \setminus K^\circ
      })
    }
  }\Big)
  \mathrlap{\,.}
  \end{aligned}
\end{equation}

\subsubsection
{Flux Density Moduli}

Consider a finite set $I$ and an $I$-indexed set of degrees:
\begin{equation}
\label{degrees}
 \mathrm{deg}_{(-)}
 :
 \begin{tikzcd}
   I
   \ar[r]
   &
   \mathbb{N}
   \mathrlap{\,.}
 \end{tikzcd}
\end{equation}
Given $i \in I$, write
\begin{equation}
  \begin{tikzcd}
  b^{(i)}
  \in 
  \Omega_{\mathrm{dR}}
    ^{\mathrm{deg}_i}
    \bracket({
      \mathbf{\Omega}_{\mathrm{dR}}
        ^{\mathrm{deg}_i}(\ast)
    })
    \ar[r, hook]
    &
  \Omega_{\mathrm{dR}}
    ^{\bullet}
    \bracket({
      \mathbf{\Omega}_{\mathrm{dR}}
        ^{\mathrm{deg}_i}(\ast)
    })
  \end{tikzcd}
\end{equation}
for the universal $n$-form, the one given by the identity map
\begin{equation}
\label
{UniversalDifferentialForm}
  b^{(i)}
  \defneq 
  \mathrm{id}_{
    \mathbf{\Omega}
      ^{\mathrm{deg}_i}
      _{\mathrm{dR}}(\ast)
  }
  \mathrlap{\,.}
\end{equation}
Via pullback along the product projection maps
\begin{equation}
  \begin{tikzcd}
    \prod_j 
    \mathbf{\Omega}
      ^{\mathrm{deg}_j}
      _{\mathrm{dR}}
    (\ast)
    \ar[
      r,
      "{
        \mathrm{pr}_i
      }"
    ]
    &
    \mathbf{\Omega}
      ^{\mathrm{deg}_i}
      _{\mathrm{dR}}
    (\ast)\,,
  \end{tikzcd}
\end{equation}
these induce differential forms
\begin{equation}
\label
{UniversalDiffFormsOnProduct}
  \mathrm{pr}_i^\ast
  b^{(i)}
  \in
  \prod_j
  \mathbf{\Omega}
    ^{\mathrm{deg}_j}
    _{\mathrm{dR}}
  (\ast)
\end{equation}
on their joint product moduli set. 
This way, for $\mathbf{X} \in \mathrm{SmthSet}$, a map
\begin{equation}
\label
{FluxDensities}
  \begin{tikzcd}
    \mathbf{X}
    \ar[
      r, 
      "{ \vec B }"
    ]
    &
    \prod_i
    \mathbf{\Omega}
      ^{\mathrm{deg}_i}
      _{\mathrm{dR}}(\ast)
  \end{tikzcd}
\end{equation}
is equivalently an $i \in I$-indexed set of $\mathrm{deg}_i$-forms on $\mathbf{X}$:
\begin{equation}
  \vec B
  =
  \bracket({
    B^{(i)} 
      := 
    \vec B^\ast
    \mathrm{pr}_i^\ast
    b^{(i)}
  })_{i \in I}
  \in 
  \prod_i
  \Omega^{\mathrm{deg}_i}_{\mathrm{dR}}\bracket({
    \mathbf{X}
  })
  \mathrlap{\,.}
\end{equation}

In this manner, 
\begin{equation}
  \prod_j \mathbf{\Omega}^{\mathrm{deg}_j}_{\mathrm{dR}}(\ast)
  \in
  \mathrm{SmthSet}
\end{equation}
is the \emph{smooth moduli set} for \emph{flux densities} in degrees $\mathrm{deg}_\bullet$ on $\mathbf{X}$.

Next, we impose equations of motion on such flux densities

\subsection
{Equations of Motion}
\label
{OnEquationsOfMotion}

\subsubsection
{Higher Maxwell-type EoM}

Let $X^{1,d}$ be a Lorentzian spacetime manifold, hence in particular a smooth manifold equipped with a pseudo-Riemannian metric with respect to which we have the Hodge star operator
\begin{equation}
  \star 
  :
  \inlinetikzcd{
    \Omega^\bullet_{\mathrm{dR}}
    \bracket({
      X^{1,d}
    })
    \ar[r, "{ \sim }"]
    \&
    \Omega^{1+d -\bullet}_{\mathrm{dR}}
    \bracket({
      X^{1,d}
    })
   \mathrlap{\,.}
  }
\end{equation}
For sets $I$ of flux densities in degrees $\mathrm{deg}_i$ \cref{degrees}
\begin{equation}
\label
{FluxDensitiesOnX}
  \vec F
  :=
  \bracket\{{
    F^{(i)}
    \in
    \Omega^{\mathrm{deg}_i}
      _{\mathrm{dR}}
    (X)
  }\}\,,
\end{equation}
we say \cite[\S 2.1]{SS24-Phase} that \emph{higher Maxwell-type equations of motion} are
systems of equations of the form
\begin{subequations}
\label
{HigherMaxwellTypeEoM}
\begin{align}
  \label{HigherBianchiIdentities}
  \forall_i 
  \;\;:\;\;
  \mathrm{d}\, F^{(i)}
  &=
  P^{(i)}\bracket({
    \vec F
  })
  \mathrlap{\,,}
  \\
  \label{HigherHodgeDuality}
  \forall_i 
  \;\;:\;\;
  \star\, F^{(i)}
  &=
  \mu^{(i)}\bracket({
    \vec F
  })
  \mathrlap{\,,}
\end{align}
\end{subequations}
where:
\begin{enumerate}
\item
  the $P^{(i)}$ are graded-symmetric polynomials in $I$ variables of degrees $\mathrm{deg}_i$,
\item
  $\mu$ is a linear automorphism on the linear span of $I$.
\end{enumerate}
Here, the first equations \cref{HigherBianchiIdentities} are called the \emph{higher Bianchi identities} and the second equations \cref{HigherHodgeDuality} are the \emph{higher duality relations}.

\smallskip 
Basic examples are the following (where we indicate the degrees of differential forms in their subscripts):
\begin{subequations}
\label
{ExamplesOfHigherMaxwellEquations}
\def\arraycolsep{20pt}
\begin{alignat}{3}
\label
{EoMOf4DVacuumElectromagnetism}
\adjustbox{
      rndfbox=4pt, scale=0.7
    }{\begin{tabular}{c} 
    \text{4D Vacuum Maxwell}
    \\
    \text{(gauge sector of 4D SuGra)}
    \end{tabular} 
  }
  &&\phantom{--}&
    \begin{aligned}
      \mathrm{d}\, F_2 & = 0
      \\
      \mathrm{d}\, G_2 & = 0
      \mathrlap{\,,}
    \end{aligned}
    &\phantom{--}&
    \star\, F_2 = G_2
    \mathrlap{\,,}
  \\[+7pt]
  \label{ChiralFluxEoM}
 \adjustbox{
      rndfbox=4pt, scale=0.7
    }{\begin{tabular}{c}  
  \text{Chiral Flux in $D= 4k\!+\!2$}
  \end{tabular}
  }
  &&&
    \mathrm{d}\, F_{2k+1} = 0
    \,,
    &&
    \star\, F_{2k+1} \!=\! F_{2k+1}
    \mathrlap{\,.}
  \\[+7pt]
  \label{EoMOf5DMCSTheory}
  \adjustbox{
      rndfbox=4pt, scale=0.7
    }{\begin{tabular}{c} 
    \text{5D Maxwell-Chern-Simons} 
    \\
    \text{(gauge sector of 5D SuGra)}  
    \end{tabular} 
  }
  &&&
    \begin{aligned}
      \mathrm{d}\, F_2 & = 0
      \\
      \mathrm{d}\, G_3 & = 
      \tfrac{1}{2}
      F_2 \wedge F_2
      \mathrlap{\,,}
    \end{aligned}
    &&
    \star\, F_2 = G_3
    \mathrlap{\,,}
  \\[+7pt]
  \label{EoMOfMassiveIIA}
 \adjustbox{
      rndfbox=4pt, scale=0.7
    }{\begin{tabular}{c} 
    \text{10D massive IIA NS/RR-Fluxes} 
    \\
    \text{(gauge sector in massive 10D SuGra)} 
 \end{tabular}
  }
  &&&
    \begin{aligned}
      \mathrm{d}\, F_{2\bullet}
      & 
      =
      H_3 \wedge F_{2\bullet - 2}
      \\
      \mathrm{d}\, H_3 & = 0
      \\
      \mathrm{d}\, H_7 & = 
      \tfrac{1}{2}
      \textstyle{\sum_i}
      (-1)^i
      F_{2i}\wedge F_{8-2i}
      \mathrlap{\,,}
    \end{aligned}
    &&
    \begin{aligned}
      \star\, F_{2\bullet} 
      & = F_{10-2\bullet}
      \\
      \star\, H_3 & = H_7
      \mathrlap{\,,}
    \end{aligned}
  \\[+7pt]
  \label{EoMOfIIA}
  \adjustbox{
      rndfbox=4pt, scale=0.7
    }{\begin{tabular}{c} 
    \text{10D actual IIA NS/RR-Fluxes} 
    \\
    \text{(gauge sector of 10D IIA SuGra)} 
    \end{tabular} 
  }
  &&&
    \begin{aligned}
      \mathrm{d}\, F_{2}
      & 
      =0
      \\
      \mathrm{d}\, F_4 
      &= 
      H_3 \wedge F_2
      \\
      \mathrm{d}\, F_6 
      & =
      H_3 \wedge F_4
      \\
      \mathrm{d}\, H_3 & = 0
      \\
      \mathrm{d}\, H_7 & = 
      \tfrac{1}{2}
      F_4 \wedge F_4
      -
      F_2 \wedge F_6
      \mathrlap{\,,}
    \end{aligned}
    &&
    \begin{aligned}
      \star\, F_{2\bullet} 
      & = F_{10-2\bullet}
      \\
      \star\, H_3 & = H_7
      \mathrlap{\,,}
    \end{aligned}
  \\[+7pt]
  \label{11DSuGraEoM}
  \adjustbox{
      rndfbox=4pt, scale=0.7
    }{\begin{tabular}{c} 
   \text{11D Maxwell-Chern-Simons} 
   \\
   \text{(gauge sector of 11D SuGra)}
   \end{tabular}
  }
  &&&
    \begin{aligned}
      \mathrm{d}\, G_4 & = 0
      \\
      \mathrm{d}\, G_7 & = 
      \tfrac{1}{2} G_4 \wedge G_4
      \mathrlap{\,,}
    \end{aligned}
    &&
    \star\, G_4 = G_7
    \mathrlap{\,.}
\end{alignat}
\end{subequations}

\emph{Not} an example are the equations of motion of nonabelian Yang-Mills theory with nontrivial structure constants $(f^i_{j k})$, since their Bianchi identities
\begin{equation}
  \mathrm{d}
  \, F^i
  =
  \sum_{j,k}
  \,
  f^i_{j k}
  \,
  A^j \wedge F^k
\end{equation}
involve \emph{gauge potential} forms $A^i$ that are not in general globally defined as required in \cref{FluxDensitiesOnX}. 

Due to this restriction, we speak of \emph{Maxwell-type} (hence: abelian Yang-Mills type) equations of motion. But beware that, even so, the nonlinear Bianchi identities that we admit, through the polynomials $P^{(i)}$ in \cref{HigherBianchiIdentities}, make such Maxwell-type fluxes in general have quantization laws in \emph{nonabelian cohomology} --- this is the topic of \cref{OnTheGlobalPhaseSpace}.

Also \emph{not} immediately an example of \cref{HigherMaxwellTypeEoM} are Chern-Simons type equations of motion that require flux densities to vanish, $F^{(i)} = 0$. For such gauge fields, their entire non-trivial content is in the flux-quantized gauge potentials, which we come to in \cref{OnTheGlobalPhaseSpace}. But this case is subsumed by a natural generalization which we discuss in \cref{OnChernSimonsType}.

\paragraph
{Initial Cauchy Data}

Now consider the special case that spacetime is globally hyperbolic, $X^{1,d} \simeq \mathbb{R}^{1,0} \times X^d$, and consider the set of germs of solutions to such higher Maxwell-type EoM \cref{HigherMaxwellTypeEoM}
around the spatial Cauchy surface 
\begin{equation}
\label
{CauchySurfaceInclusion}
  \begin{tikzcd} 
    \{0\} \times X^d 
    \ar[r, hook, "{ \iota }"] 
    &
    X^{1,d}  
  \end{tikzcd}
  \mathrlap{\,,}
\end{equation}
hence:
\begin{equation}
\label
{SolutionSetOfHigherMaxwellEquations}
  \mathrm{Sol}\bracket({X^d})
  :=
  \bracketmid\{{
    \epsilon \in \mathbb{R}_+
    ,\,
    \bracket({
    F^{(i)} \in 
    \Omega^{\mathrm{deg}_i}_{\mathrm{dR}}
    \bracket({
      (-\epsilon, +\epsilon)
      \times
      X^d
    })
    })_{i \in I}
  }{
    \substack{
    \mathrm{d}\, \vec F
    =
    \vec P\scaledbracket({\vec F}),
    \\
    \star \vec F
    =
    \vec\mu\scaledbracket({
      \vec F
    })
    }
  }
  \}
  \Big/\!\sim
  \mathrlap{\,,}
\end{equation}
where $\bracket({\epsilon,\vec F}) \sim \bracket({\epsilon',\vec F'})$ iff there exists $0 < \epsilon'' < \epsilon, \epsilon'$ such that $\vec F = \vec F'$ after joint restriction to $(-\epsilon'', +\epsilon'') \times X^d$.

Observing that Hodge duality exchanges purely spatial forms with those having a temporal $\mathrm{d}t$-factor, one finds \cite[Thm 2.2 \& \S A]{SS24-Phase} that these germs of solutions are equivalently encoded in Cauchy data on $X^d$, which has the same form as \cref{FluxDensitiesOnX,HigherMaxwellTypeEoM} except that the Hodge duality relation disappears:
\begin{equation}
\label
{IsomorphismWithCauchyData}
  \begin{tikzcd}
    \mathrm{Sol}\bracket({X^d})
    \ar[
      r, 
      "{ \sim }"{swap},
      "{ \iota^\ast }"
    ]
    &
    \bracketmid\{{
      B^{(i)}
      \in
      \Omega^{\mathrm{deg}_i}_{\mathrm{dR}}
      \bracket({
        X^d
      })
    }{
      \mathrm{d}\, \vec B
      =
      \vec P\bracket({ \vec B })
    }\}
    \mathrlap{\,.}
  \end{tikzcd}
\end{equation}
The conditions on the right of  \cref{IsomorphismWithCauchyData} are the higher \emph{Gauss laws}, being the spatial components of the higher Bianchi identities \cref{HigherBianchiIdentities}, while the Hodge duality relation \cref{HigherHodgeDuality} is absorbed into witnessing the existence of an inverse to \cref{IsomorphismWithCauchyData}.

\subsubsection{
\texorpdfstring
{Characteristic $L_\infty$-Algebra}
{Characteristic L-infinity Algebra}
}
\label
{OnCharacteristicLInfinityAlgebra}

We proceed to further identify this Cauchy data \cref{IsomorphismWithCauchyData} with closed differential forms with coefficients in a characteristic $L_\infty$-algebra (\cref{OnCharacteristicLInfinityAlgebra}). For this purpose we now require (in \cref{OnUniversalExteriorDifferentialSystems}) that the higher Maxwell-type EoM are suitably universal (which is the case for all the examples \cref{ExamplesOfHigherMaxwellEquations} of interest, but deserves a formalization).

\paragraph
{Universal Exterior Differential Systems}
\label
{OnUniversalExteriorDifferentialSystems}

Recall that on $X \in \mathrm{SmthMfd}$, an \emph{exterior differential system} (cf. \cite{BryantEtAl1991,McKay2019}) is a dgc-ideal in the de Rham complex $\mathscr{I} \subset \Omega^\bullet_{\mathrm{dR}}(X)$. Its \emph{integral solutions} are submanifold embeddings along which this ideal pulls back to zero, forming the solution set
\begin{equation}
  \mathrm{Sol}\bracket({
    X,\mathscr{I}
  })
  :=
  \bracketmid\{{
    \inlinetikzcd{
      Y \ar[r, hook, "{ \phi }"]
      \&
      X
    }
  }{
    \phi^\ast \mathscr{I} = 0
  }\}
  \mathrlap{\,.}
\end{equation}
This is a way of encoding (solutions to) differential equations with variables in $X$.

Our goal is to formulate \emph{universal} such differential equations which do not depend on the choice of domain $X$ but only on an abstract set $i \in I$ of $\mathrm{deg}_i$-forms \cref{degrees} on any domain. But with the universal de Rham complex (\cref{OnUniversalDeRhamComplex}) in hand, this is immediate: We take such \emph{universal exterior differential systems} to be dgc-ideals $\mathscr{I}$ in the de Rham complex \cref{DeRhamComplexOfSmoothSet} on the smooth moduli set \cref{FluxDensities} of $\mathrm{deg}_i$-forms, $i \in I$:
\begin{equation}
\label
{UniversalExteriorDifferentialSystem}
  \mathscr{I}
  \subset
  \Omega^\bullet_{\mathrm{dR}}\bracket({
    \prod_i
    \mathbf{\Omega}
      ^{\mathrm{deg}_i}
      _{\mathrm{dR}}
    (\ast)
  })
  \mathrlap{\,,}
\end{equation}
and say that the solution set on $\mathbf{X} \in \mathrm{SmthSet}$ is
\begin{equation}
\label
{SolutionSetForUniversalExteriorDiffSystem}
  \mathrm{Sol}\bracket({
    \mathbf{X},
    \mathscr{I}
  })
  :=
  \bracketmid\{{
    \inlinetikzcd{
      \mathbf{X}
      \ar[r, "{ \phi }"]
      \&
      \prod_i
      \mathbf{\Omega}
        ^{\mathrm{deg}_i}
        _{\mathrm{dR}}
      (\ast)      
    }
  }{
    \phi^\ast \mathscr{I} = 0
  }\}
  \mathrlap{\,.}
\end{equation}

Specifically, we say that:
\begin{enumerate}
\item
\emph{Higher Gauss laws} in degrees $\mathrm{deg}_i$, $i \in I$ \cref{degrees} are universal exterior differential systems \cref{UniversalExteriorDifferentialSystem} which are \emph{algebraically} generated as
\begin{equation}
\label
{TheDifferentialIdeal}
  \mathscr{I}
  =
  \bracket\langle{
    \mathrm{d}
    \,
    \mathrm{pr}_i^\ast
    b^{(i)}
    -
    P^{(i)}\bracket({
      \bracket({
        \mathrm{pr}_j^\ast 
        b^{(j)}
      })_{j \in I}
    })
  }\rangle_{i \in I}
  \subset
  \Omega^\bullet_{\mathrm{dR}}\bracket({
    \prod_i 
    \mathbf{\Omega}_{\mathrm{dR}}
      ^{\mathrm{deg}_i}(\ast)
  })
  \mathrlap{\,,}
\end{equation}
for graded-symmetric (wedge product) polynomials $P^{(i)}$ in the set $I$ of $\mathrm{deg}$-graded variables, evaluated here on the universal differential forms 
\cref{UniversalDiffFormsOnProduct}.

\item \emph{Flux densities} $\vec B$ \cref{FluxDensities} on $\mathbf{X}$ \emph{solve} the Gauss law \cref{TheDifferentialIdeal} if they are in the corresponding solution set \cref{SolutionSetForUniversalExteriorDiffSystem}, hence if
\begin{equation}
  \vec B^\ast \mathcal{I}
  =
  0
  \in
  \mathbf{\Omega}^\bullet_{\mathrm{dR}}\bracket({
    \mathbf{X}
  })
  \mathrlap{\,,}
\end{equation}
which means equivalently that
\begin{equation}
  \forall_i
  \;:\;
  \mathrm{d}\, B^{(i)}
  =
  P^{(i)}\bracket({
    \bracket\{{
      B^{(j)}
    }\}_{j \in I}
  })
  \mathrlap{\,.}
\end{equation}
\end{enumerate}

\paragraph
{Characteristic $L_\infty$-Algebra}
For given degrees \cref{degrees} such that all $\mathrm{deg}_i \geq 1$, let
\begin{equation}
  \wedge^1 \mathfrak{a}^\vee
  :=
  \bracket\langle{
    \beta^{(i)}
  }\rangle
  \in
  \mathrm{Vec}_{_{\mathbb{R}}}^{\mathbb{N}}
\end{equation}
denote  the $\mathbb{N}$-graded vector space spanned by elements $\beta^{(i)}$ in degree $\mathrm{deg}_{i}$, and denote by
\begin{equation}
\label
{CharacteristicGradedGrassmannAlgebra}
  \wedge^\bullet \mathfrak{a}^\vee
  :=
  \mathrm{Sym}\bracket({
   \wedge^1 \mathfrak{a}^\vee
  })
\end{equation}
its graded-symmetric algebra, cf. \cref{CharacteristicGradedGrassmannAlgebra}.

\begin{lemma}
\label[lemma]
{ExistenceOfDifferentialForCharacteristicLAlg}
  Given a pregeometric higher Maxwell-type 
  EoM \cref{TheDifferentialIdeal}, in degrees $\mathrm{deg}$ \cref{degrees}, a differential $\mathrm{d}$ \textup{(graded derivation of degree +1 that squares to 0)} on $\wedge^\bullet \mathfrak{a}^\vee$ \cref{CharacteristicGradedGrassmannAlgebra} is given on generators by the formulas
  \begin{equation}
    \label{TheEoMsOnFreeGenerators}
    \forall_i 
    \;:\;
    \mathrm{d}\,
    \beta^{(i)}
    =
    P^{(i)}\bracket({
      \bracket\{{
        \beta^{(j)}
      }\}_{j \in I}
    })
    \mathrlap{\,,}
  \end{equation}
  in that it follows that $\mathrm{d}^2 = 0$.
\end{lemma}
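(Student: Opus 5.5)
The plan is to construct $\mathrm{d}$ formally and then transport the question of $\mathrm{d}^2=0$ to the universal de Rham complex of \cref{OnUniversalDeRhamComplex}, where it becomes the closure of the universal exterior differential system \cref{TheDifferentialIdeal} under $\mathrm{d}$. I read ``pregeometric'' as exactly this closure: the ideal $\mathscr{I}$, although only algebraically generated by the Gauss-law relations, is a dgc-ideal (closed under $\mathrm{d}$), as demanded in \cref{UniversalExteriorDifferentialSystem}.

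\emph{Step 1 (existence of the derivation).} Since $\wedge^\bullet\mathfrak{a}^\vee$ \cref{CharacteristicGradedGrassmannAlgebra} is free graded-commutative on the $\beta^{(i)}$, the assignment \cref{TheEoMsOnFreeGenerators} extends uniquely to a graded derivation of degree $+1$. This is well defined in degree because $P^{(i)}$ has degree $\mathrm{deg}_i+1$.

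\emph{Step 2 (reduction to generators).} Because $\mathrm{d}$ is odd, $\mathrm{d}^2=\tfrac12[\mathrm{d},\mathrm{d}]$ is again a graded derivation. Hence it suffices to show, for each $i$, that the polynomial
\[
  Q^{(i)}\bracket({\{\beta^{(j)}\}_j}) := \mathrm{d}\,P^{(i)}\bracket({\{\beta^{(j)}\}_j})
\]
vanishes, where $\mathrm{d}$ is applied via the Leibniz rule and \cref{TheEoMsOnFreeGenerators}.

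\emph{Step 3 (comparison map).} Let $A:=\Omega^\bullet_{\mathrm{dR}}\bracket({\prod_j\mathbf{\Omega}^{\mathrm{deg}_j}_{\mathrm{dR}}(\ast)})$. By the pregeometric assumption, $A/\mathscr{I}$ is a dgc-algebra. Define the graded algebra map
\[
  \psi:\wedge^\bullet\mathfrak{a}^\vee\longrightarrow A/\mathscr{I},
  \qquad
  \beta^{(i)}\mapsto[\mathrm{pr}_i^\ast b^{(i)}],
\]
using the universal forms \cref{UniversalDiffFormsOnProduct}. By construction of $\mathscr{I}$ we have $\psi(\mathrm{d}\beta^{(i)})=\mathrm{d}\,\psi(\beta^{(i)})$. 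Two derivations that agree on generators, pulled back along an algebra map, agree everywhere, so $\psi$ is a dg-map. Therefore
\[
  \psi\bracket({Q^{(i)}})=\psi\bracket({\mathrm{d}^2\beta^{(i)}})=\mathrm{d}^2\psi\bracket({\beta^{(i)}})=0 .
\]
Concretely, this says $Q^{(i)}\bracket({\{\mathrm{pr}_j^\ast b^{(j)}\}_j})\in\mathscr{I}$.

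\emph{Step 4 (injectivity on polynomials in the $b$'s alone), the main obstacle.} It remains to show that a polynomial in the universal forms $\mathrm{pr}_j^\ast b^{(j)}$ alone, not involving their differentials, lies in $\mathscr{I}$ only if it is zero. I would first prove that the forms $\mathrm{pr}_j^\ast b^{(j)}$ and $\mathrm{d}\,\mathrm{pr}_j^\ast b^{(j)}$ are algebraically independent in $A$, i.e.\ that they generate a free graded-commutative subalgebra $F$. To do this, take a nonzero polynomial relation and evaluate it on a plot $\mathbb{R}^n\to\prod_j\mathbf{\Omega}^{\mathrm{deg}_j}_{\mathrm{dR}}(\ast)$ with $n$ large. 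By \cref{PlotsOfSmoothDeRhamComplexOfSmoothSet}, such a plot is a tuple of forms on $\mathbb{R}^n\times\mathbb{R}^n$, and I would choose these with generic coefficients, for instance as sums of monomials in independent coordinates, so that the relation fails.

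Granting freeness, the quotient of $F$ by the ideal generated by $\mathrm{d}\,\mathrm{pr}_j^\ast b^{(j)}-P^{(j)}(\ldots)$ is free on the $\mathrm{pr}_j^\ast b^{(j)}$, via the substitution $\mathrm{d}b\mapsto P(b)$. So the composite $\mathbb{R}[\{b^{(j)}\}]\to F/(\mathscr{I}\cap F)$ is injective. The delicate point is that $\mathscr{I}\cap F$ might be larger than the ideal generated inside $F$. To handle this I would use a retraction: choose the generic plots so that evaluation along them defines an algebra map $A\to\Omega^\bullet(\text{probe})$ that kills $\mathscr{I}$ and is injective on polynomials in the $b$'s. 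The natural candidate is to take the probe forms themselves to solve the Gauss laws generically, e.g.\ via the free model. If this works, it forces $Q^{(i)}=0$, and hence $\mathrm{d}^2=0$.
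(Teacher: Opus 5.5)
Your Steps 1--3 are sound. Step 3 is a cleaner version of the paper's own computation: the paper also reduces $\mathrm{d}^2=0$ to the formal identity $\sum_k P^{(k)}\partial_k P^{(i)}=0$, obtains this expression on the universal forms from the differential-ideal property, and then evaluates on constant ``free'' forms $B^{(j)}_{\mathrm{free}}$ on some $\mathbb{R}^N$.

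The gap is in how you propose to close Step 4. Suppose evaluation $\vec B^\ast : A\to\Omega^\bullet_{\mathrm{dR}}(\mathbb{R}^N)$ along a plot kills $\mathscr{I}$. Then $\mathrm{d}B^{(j)}=P^{(j)}(\vec B)$ holds on the whole probe, and therefore automatically $Q^{(i)}(\vec B)=\mathrm{d}\,P^{(i)}(\vec B)=\mathrm{d}\,\mathrm{d}B^{(i)}=0$. So no such evaluation can ever detect a nonzero $Q^{(i)}$. The existence of ``generic'' solutions of the Gauss laws is itself the integrability condition you are trying to prove, and appealing to ``the free model'' $\mathrm{CE}(\mathfrak{a})$ presupposes it. As proposed, the final step is circular.

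The missing idea is that the test map need not be a dg-map. Since $\mathscr{I}$ is \emph{algebraically} generated by $r_j:=\mathrm{d}\,\mathrm{pr}_j^\ast b^{(j)}-P^{(j)}\big(\{\mathrm{pr}_k^\ast b^{(k)}\}_k\big)$, every graded-algebra map out of $A$ that kills the $r_j$ kills $\mathscr{I}$.

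Take $\phi:=\mathrm{ev}_0\circ\vec B^\ast : A\to\wedge^\bullet(\mathbb{R}^N)^\ast$, the evaluation at the origin of forms pulled back along the plot $B^{(j)}:=c^{(j)}+\tfrac{1}{\mathrm{deg}_j+1}\,\iota_{x^\mu\partial_\mu}P^{(j)}(\vec c)$. Here the $c^{(j)}$ are constant forms whose distinct monomials of degree $\leq\mathrm{deg}_i+2$ are linearly independent. Then $B^{(j)}|_0=c^{(j)}$ and $\mathrm{d}B^{(j)}=P^{(j)}(\vec c)$, so $\phi(r_j)=0$. Imposing the Gauss law at a single point is only a first-order condition, and it can always be met.

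Your Step 3 conclusion $Q^{(i)}\big(\{\mathrm{pr}_j^\ast b^{(j)}\}_j\big)\in\mathscr{I}$ then gives $Q^{(i)}(\vec c)=0$, hence $Q^{(i)}=0$. This makes both the algebraic-independence argument and the worry about $\mathscr{I}\cap F$ unnecessary. It is also exactly the justification that the paper's evaluation on $B_{\mathrm{free}}$ leaves implicit, since constant forms do not themselves satisfy the Gauss laws.
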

\begin{proof}
  By the graded derivation property of $\mathrm{d}$, we are equivalently claiming that the expressions
  \begin{equation}
    \label{JacobiExpressionOnFreeGenerators}
    \mathrm{d}
    \bracket({
    P^{(i)}\bracket({
      \bracket\{{
        \beta^{(j)}
      }\}_{j \in I}
    })
    })
    =
    \sum_{k \in I}
    P^{(k)}
    \frac{\partial}{\partial \beta^{(k)}}
    P^{(i)}\bracket({
      \bracket\{{
        \beta^{(j)}
      }\}_{j \in I}
    })    
  \end{equation}
  vanish for all $i \in I$ (where on the right we have the sum over graded partial derivatives). Since the $\beta^{(j)}$ are free generators, this is the case iff the polynomials on the right vanish identically as functions (for all possible arguments), hence iff
  \begin{equation}
    \label{VanishingDifferentialsOfPolynomials}
    \forall_i
    \;\;\;\;
    \sum_{k \in I}
    P^{(k)}
    \frac{\partial}{\partial \beta^{(k)}}
    P^{(i)}
    =
    0
    \mathrlap{\,.}
  \end{equation}
  This is to be shown.
  
  Now, the fact that $\mathcal{I}$ is a differential ideal \cref{TheDifferentialIdeal} implies analogously that 
  \begin{equation}
    \forall_i
    \;\;\;\;
    \mathrm{d}
    \bracket({
    P^{(i)}\bracket({
      \bracket\{{
        \mathrm{pr}_j^\ast
        b^{(j)}
      }\}_{j \in I}
    })
    })
    =
    \sum_{k \in I}
    P^{(k)}
    \frac{\partial}{
      \partial
      \mathrm{pr}_k^\ast
      b^{(k)}
    }
    P^{(i)}\bracket({
      \bracket\{{
        \mathrm{pr}_j^\ast
        b^{(j)}
      }\}_{j \in I}
    })    
    =
    0
    \mathrlap{\,.}
  \end{equation}
  This being an expression in $\Omega^{\mathrm{deg}_i + 1}_{\mathrm{dR}}\bracket({ \prod_j \mathbf{\Omega}^{\mathrm{deg}_j}_{\mathrm{dR}}(\ast) })$, its vanishing implies that for all $U \in \mathrm{SmthMfd}$ and all $\bracket({ B^{(j)} \in \Omega_{\mathrm{dR}}^{\mathrm{deg}_j}(U) })_{j \in I}$ we have
  \begin{equation}
    \label{JacobiIdentityOnGivenFluxDensities}
    \sum_{k \in I}
    P^{(k)}
    \frac{\partial}{\partial B^{(k)}}
    P^{(i)}\bracket({
      \bracket\{{
        B^{(j)}
      }\}_{j \in I}
    })    
    =
    0
    \mathrlap{\,.}
  \end{equation}
  In particular, we may take
  $
    U 
      :=
    \mathbb{R}^{N}
  $
  with $N$ so large that there exist constant differential forms $B^{(j)}_{\mathrm{free}} \in \Omega^{\mathrm{deg}_j}_{\mathrm{dR}}\bracket(U)$, $j \in I$ (sums of wedge products of differentials of the canonical coordinate functions), with the property that all their nontrivial wedge monomials of total degree less than or equal to $\mathrm{deg}_i + 1$ are non-vanishing except for reasons of graded-commutativity. Hence specialized to these, \cref{JacobiIdentityOnGivenFluxDensities} implies \cref{VanishingDifferentialsOfPolynomials}, thereby the vanishing of \cref{JacobiExpressionOnFreeGenerators} and thus the claim \cref{TheEoMsOnFreeGenerators}.
\end{proof}

With the terminology recalled in \cref{OnLInfinityAlgebras}, we have:
\begin{definition}
\label[definition]
{CharacteristicLInfinityAlgebra}
  The dgc-algebra implied by \cref{ExistenceOfDifferentialForCharacteristicLAlg}, to be denoted
  \begin{equation}
    \mathrm{CE}(\mathfrak{a})
    :=
    \bracket({
      \wedge^\bullet\mathfrak{a}^\vee,
      \mathrm{d}
    })
    \in
    \mathrm{dgcAlg}_{_{\mathbb{R}}}\bracket({
      \mathrm{Set}
    })
    \mathrlap{\,,}
  \end{equation}
  is the Chevalley-Eilenberg algebra of an $L_\infty$-algebra
  \begin{equation}
  \label
  {TheCharacteristicLInfinityAlgebra}
    \mathfrak{a}
    \in 
    L_\infty\mathrm{Alg}_{_{\mathbb{R}}}
    \bracket({
      \mathrm{Set}
    })
    \mathrlap{\,,}
  \end{equation}
  which we call the \emph{characteristic $L_\infty$-algebra} of the given pregeometric higher Maxwell-type EoMs.
\end{definition}

The grand conclusion of this section is thus:
\begin{theorem}[{\cite[\S 2.1]{SS24-Phase}}]
\label[theorem]
{GaussLawAsLInfinityClosure}
  The solution set \cref{SolutionSetOfHigherMaxwellEquations} of higher Maxwell-type equations \cref{HigherMaxwellTypeEoM} that are universal in the above sense \cref{TheDifferentialIdeal}
  is isomorphic to the set of closed differential forms \cref{ClosedLInfinityValuedDifferentialForms} on the Cauchy surface with coefficients in the characteristic $L_\infty$-algebra \cref{TheCharacteristicLInfinityAlgebra}:
  \begin{equation}
  \begin{tikzcd}
    \mathrm{Sol}
    \ar[
      r, 
      "{ \sim }",
      "{
        \text{\cref{IsomorphismWithCauchyData}}
      }"{swap}
    ]
    &
    \bracketmid\{{
      B^{(i)}
      \in
      \Omega^{\mathrm{deg}_i}_{\mathrm{dR}}
      \bracket({
        X^d
      })
    }{
      \mathrm{d}\, \vec B
      =
      \vec P\bracket({ \vec B })
    }\}
    \ar[
      r,
      "{ \sim }"
    ]
    &
    \Omega^1_{\mathrm{cl}}\bracket({
      X^d;
      \mathfrak{a}
    })
    \mathrlap{\,.}
  \end{tikzcd}
  \end{equation}
\end{theorem}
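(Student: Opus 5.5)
The theorem composes two isomorphisms, and I will handle them separately. The first, $\iota^\ast : \mathrm{Sol}(X^d) \xrightarrow{\sim} \{\vec B \mid \mathrm{d}\vec B = \vec P(\vec B)\}$, is \cref{IsomorphismWithCauchyData}, which is already established in \cite[Thm 2.2 \& \S A]{SS24-Phase}; I take it as given. So the work is the second isomorphism, identifying Gauss-law Cauchy data with closed $\mathfrak{a}$-valued forms.

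For that, I unfold the definition of $\Omega^1_{\mathrm{cl}}(X^d;\mathfrak{a})$ (to be recalled in \cref{ClosedLInfinityValuedDifferentialForms}). The standard definition is
\[
  \Omega^1_{\mathrm{cl}}(X^d;\mathfrak{a}) := \mathrm{Hom}_{\mathrm{dgcAlg}}\big(\mathrm{CE}(\mathfrak{a}),\, \Omega^\bullet_{\mathrm{dR}}(X^d)\big).
\]
The map I propose sends $\vec B$ to the unique morphism of graded-commutative algebras $\phi_{\vec B}$ with $\phi_{\vec B}(\beta^{(i)}) := B^{(i)}$. Since $\wedge^\bullet\mathfrak{a}^\vee$ is free graded-commutative on the $\beta^{(i)}$ (\cref{CharacteristicGradedGrassmannAlgebra}), and each $\beta^{(i)}$ sits in degree $\mathrm{deg}_i$, graded-algebra maps out of it correspond bijectively to tuples $\big(B^{(i)} \in \Omega^{\mathrm{deg}_i}_{\mathrm{dR}}(X^d)\big)_{i\in I}$.

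It remains to see that $\phi_{\vec B}$ intertwines the differentials exactly when the Gauss law holds. Both $\mathrm{d}\circ\phi_{\vec B}$ and $\phi_{\vec B}\circ\mathrm{d}$ are $\phi_{\vec B}$-derivations of degree $+1$. Two such derivations agree iff they agree on generators. On generators, the condition reads
\[
  \mathrm{d}B^{(i)} = \phi_{\vec B}\big(P^{(i)}(\vec\beta)\big) = P^{(i)}(\vec B),
\]
using \cref{TheEoMsOnFreeGenerators} and the fact that $\phi_{\vec B}$ is multiplicative. This is precisely the Gauss law. Hence the map restricts to a bijection between Gauss-law solutions and dgc-algebra maps, i.e.\ closed $\mathfrak{a}$-valued forms.

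The only point needing care is that $\mathrm{CE}(\mathfrak{a})$ really is a dgc-algebra, so that "dgc morphisms" is meaningful. This is exactly \cref{ExistenceOfDifferentialForCharacteristicLAlg}, which gives $\mathrm{d}^2=0$ using the universality of the Gauss law \cref{TheDifferentialIdeal}. I expect no real obstacle here. The one thing to check is the degree bookkeeping: each $\beta^{(i)}$ has degree $\mathrm{deg}_i \geq 1$, so the graded signs in $P^{(i)}$ match those of the wedge product on forms.

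Finally, both identifications are natural under pullback along maps of $X^d$. Replacing $X^d$ by $\mathbb{R}^n\times X^d$ then upgrades the statement to one about smooth moduli sets, if desired.
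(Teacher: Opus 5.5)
Your proposal is correct and follows the paper's reasoning: the first isomorphism is imported from \cite[Thm 2.2 \& \S A]{SS24-Phase} as \cref{IsomorphismWithCauchyData}, and the second comes from the definition \cref{ClosedLInfinityValuedDifferentialForms} of closed $\mathfrak{a}$-valued forms as dgc-maps out of $\mathrm{CE}(\mathfrak{a})$. That algebra is free on the $\beta^{(i)}$, and its differential \cref{TheEoMsOnFreeGenerators} is well-defined by \cref{ExistenceOfDifferentialForCharacteristicLAlg}; your derivation-on-generators check just makes explicit the step the paper leaves implicit.
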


\begin{remark}
[Two notions of higher gauge theory]
\label[remark]
{TwoNotionsOfHigherGaugeTheory}
Beware (cf. \cref{TwoApproachesToHigherGaugeTheory}) that the role of $L_\infty$-algebras here is different from the ``higher principal connection'' approach to higher gauge theory (\parencites[\S 11-12]{Schreiber2005}{BaezSchreiber2007}{FSSt12-DiffClasses}{SSS12}{SchreiberWaldorf2013}{BorstenEtAl2025}[\S 2]{Alfonsi2025HigherGeometry}{RistSaemannWolf2026}{Waldorf2026}{BunkEtAl2026}): There $L_\infty$-algebras serve as the coefficients of gauge potential forms which are generically not flat/closed --- while here they serve as coefficients of the flux densities and their closure is no less than the electromagnetic Gauss law constraint putting the theory on shell.
\end{remark}

\begin{SCfigure}[.7][htb]
\caption{\label
{TwoApproachesToHigherGaugeTheory} 
There are two approaches to globally modeling ``higher gauge fields'': Either as higher generalization of connections on principal/fiber bundles, or as nonabelian generalization of ordinary differential cohomology. Here we are concerned with the latter (from \cite[\S 9]{FSS23-Char}).
}
\adjustbox{
  rndfbox=4pt, scale=0.9
}{
\hspace{-6pt}
\begin{tblr}{
  colspec={c||c|c},
  row{odd} = {gray!40},
  colsep=3pt
}
  &
  \textbf{Principal}
  &
  \textbf{Cohomological}
  \\
  \hline
  \def\arraystretch{.85}
  \begin{tabular}{@{}c@{}}
   Specified
    \\
   $L_\infty$-algebra
  \end{tabular}
  &
  \def\arraystretch{.85}
  \begin{tabular}{@{}c@{}}
    coefficient of
    \\
    gauge potentials
    \\
    (connections)
  \end{tabular}
  &
  \def\arraystretch{.85}
  \begin{tabular}{@{}c@{}}
    coefficient of
    \\
    flux densities
    \\
    (curvatures)
  \end{tabular}
  \\
  \def\arraystretch{.85}
  \begin{tabular}{@{}c@{}}
    Flatness /
    \\
    closure
  \end{tabular}
  &
  non-generic
  &
  \def\arraystretch{.85}
  \begin{tabular}{@{}c@{}}
    Gauss law /
    \\
    Bianchi identity
  \end{tabular}
  \\
  Examples
  &
  \def\arraystretch{.85}
  \begin{tabular}{@{}c@{}}
    heterotic SuGra /
    \\
    B-field on T-folds
  \end{tabular}
  &
  \def\arraystretch{.85}
  \begin{tabular}{@{}c@{}}
    type I/II SuGra 
    \\
    / 11D SuGra
  \end{tabular}
\end{tblr}
}

\end{SCfigure}

But with the electromagnetic Gauss law constraint on initial Cauchy data of flux densities understood as an $L_\infty$-closure condition, by \cref{GaussLawAsLInfinityClosure}, we have opened the door to understanding the ``quantization'' (discretization) of flux densities, in generality. This is the topic of \cref{OnFluxQuantization}.

\section
{Global Phase Space of Higher Gauge Fields}
\label
{OnTheGlobalPhaseSpace}

\subsection
{Flux Quantization}
\label
{OnFluxQuantization}

We have characterized (\cref{GaussLawAsLInfinityClosure}) the solution space of higher flux densities. This generalizes the historical situation of \cite{Maxwell1865}, when Maxwell's equations for electromagnetic flux were formulated. Later, \cite{Dirac1931} noticed that the electromagnetic \emph{gauge potential} exhibits a global ``quantization'' (discretization) of the admissible total flux, and our task now is to discuss the corresponding generalization of that \emph{flux quantization} via \emph{gauge potentials}.

\subsubsection
{Rational Homotopy}
\label
{OnRationalHomotopy}

\paragraph
{Smooth Moduli Set of On-Shell Fluxes}
In \cref{GaussLawAsLInfinityClosure} we identified the solution set of Maxwell-type equations around a Cauchy surface $X^d$ as the set of closed $\mathfrak{a}$-valued differential forms for a characteristic $L_\infty$-algebra $\mathfrak{a}$:
\begin{equation}
  \mathrm{Sol}
  \simeq
  \Omega^1_{\mathrm{cl}}\bracket({
    X^d; \mathfrak{a}
  })
  \mathrlap{\,.}
\end{equation}
This means that there is actually a \emph{smooth moduli set} of solutions/closed forms, inside the smooth de Rham complex \cref{SmoothDeRhamComplex} from \cref{OnUniversalDeRhamComplex}:
\begin{equation}
\label
{SmoothModuliSetOfOnShellFLuxDensities}
  \begin{aligned}
    \mathbf{\Omega}^1_{\mathrm{cl}}
    \bracket({
      X^d; \mathfrak{a}
    })
    & 
    \in
    \mathrm{SmthSet}
    \\
    \mathrm{Plt}\bracket({
      \mathbb{R}^n,
      \mathbf{\Omega}^1_{\mathrm{cl}}
      \bracket({
        X^d; 
        \mathfrak{a}
      })
    })
    & 
    :=
    \Omega^1_{\mathrm{dR}}\bracket({
      \mathbb{R}^n
      \times
      X^d;
      \mathfrak{a}
    })
    \mathrlap{\,,}
  \end{aligned}
\end{equation}
hence, by \cref{InternalHomOfSmoothSets}, with:
\smallskip 
\begin{equation}
\label
{ModuliSetOfClosedFormsOnManifold}
  \mathbf{\Omega}^1_{\mathrm{cl}}
  \bracket({
    X^d;
    \mathfrak{a}
  })
  \simeq
  \mathbf{Map}\bracket({
    X^d,
    \mathbf{\Omega}^1_{\mathrm{dR}}
    (\ast)
  })
  \mathrlap{\,.}
\end{equation}

\paragraph
{Shape of On-Shell Flux Moduli}
The shape \cref{TheShapeModality} of this smooth moduli set is, by \cref{ModuliSetOfClosedFormsOnManifold,SmoothOkaPrinciple,ShapeEquivalentToSingularComplex}:
\begin{equation}
\label
{ShapeOfModuliOfClosedForms}
  \begin{aligned}
  \shape
  \mathbf{\Omega}^1_{\mathrm{cl}}
  \bracket({X^d; \mathfrak{a}})
  & \simeq
  \shape
  \mathbf{Map}\bracket({
    X^d,
    \mathbf{\Omega}^1_{\mathrm{cl}}
    (\ast,\mathfrak{a})
  })
  \\
  & \simeq
  \mathbf{Map}\bracket({
    \shape
    X^d,
    \shape
    \mathbf{\Omega}^1_{\mathrm{cl}}
    (\ast,\mathfrak{a})
  })
  \\
  & \simeq
  \mathrm{Dsc}
  \,
  \mathrm{Map}\bracket({
    \mathrm{Sing}\, X^d,
    \mathrm{Sing}\,
      \mathbf{\Omega}^1_{\mathrm{cl}}
      (\ast;\mathfrak{a})
  })
  \mathrlap{\,,}
  \end{aligned}
\end{equation}
with
\smallskip 
\begin{equation}
  \begin{aligned}
  \mathrm{Sing}\bracket({
    \mathbf{\Omega}^1_{\mathrm{cl}}(\ast)
  })
  & 
  \in
  \mathrm{SSet} \to 
  \mathrm{Grpd}_\infty
  \\
  \mathrm{Plt}\bracket({
  \Delta[k],
  \mathrm{Sing}\bracket({
    \mathbf{\Omega}^1_{\mathrm{cl}}(\ast)
  })
  })
  &=
  \Omega^1_{\mathrm{cl}}\bracket({
    \boldsymbol{\Delta}^k;
    \mathfrak{a}
  })
  \mathrlap{\,.}
  \end{aligned}
\end{equation}

\paragraph
{Rationalizing Classifying Spaces}
This \emph{simplicial set of closed $\mathfrak{a}$-valued differential forms} is a famous construction in \emph{rational homotopy theory} (RHT, for which cf.  \parencites{FHT2000}{Hess2007}[\S 5]{FSS23-Char}):

For a simply connected $\mathcal{A} \in \mathrm{Grpd}_\infty$, its \emph{$\mathbb{R}$-rationalization} is $L^{\mathbb{R}} \mathcal{A} \in \mathrm{Grpd}$ whose homotopy groups are the $\mathbb{R}$-rationalization of those of $\mathcal{A}$:
\begin{equation}
  \forall_{n \geq 2}
  \;:\;
  \pi_n\bracket({
    L^{\mathbb{R}}
    \mathcal{A}
  })
  \simeq
  \pi_n\bracket({
    \mathcal{A}
  })
  \otimes_{_{\mathbb{Z}}}
  \mathbb{R}
\end{equation}
and which is equipped with a map
\begin{equation}
\label
{RRationalizationUnit}
  \begin{tikzcd}
    \mathcal{A}
    \ar[
      r,
      "{
        \eta^{\mathbb{R}}
          _{\mathcal{A}}
      }"
    ]
    &
    L^{\mathbb{R}}
    \mathcal{A}
  \end{tikzcd}
\end{equation}
that induces the canonical homomorphisms, 
\begin{equation}
  \begin{tikzcd}
    \pi_n\bracket({
      \mathcal{A}
    })
    \ar[
      r,
      "{
        \scaledbracket({
          \eta^{\mathbb{R}}
          _{\mathcal{A}}
        })_\ast
      }"
    ]
    &
    \pi_n\bracket({
      \mathcal{A}
    })
    \otimes_{_{\mathbb{Z}}}
    \mathbb{R}
  \end{tikzcd}
\end{equation}
and is suitably universal with that property. 

The \emph{fundamental theorem of dg-algebraic RHT} says that for a simply connected $\mathcal{A}$ whose rational cohomology/homotopy groups are finite-dimensional, there exists an essentially unique minimal $L_\infty$-algebra 
\begin{equation}
\label
{RealWhiteheadBracketLInfinityAlgebra}
  \begin{aligned}
    \mathfrak{l}
    \mathcal{A}
    &
    \in
    L_\infty\mathrm{Alg}
      _{_{\mathbb{R}}}
    \\
    \bracket({
      \mathfrak{l}
      \mathcal{A}
    })_\bullet
    & 
    \simeq
    \pi_\bullet\bracket({
      \Omega \mathcal{A}
    })
    \otimes_{_{\mathbb{Z}}}
    \mathbb{R}
  \end{aligned}
\end{equation}
whose brackets are the $\mathbb{R}$-rationalized \emph{higher Whitehead brackets} of $\mathcal{A}$ (and whose CE-algebra is called \emph{minimal Sullivan model} of $\mathcal{A}$)
such that:
\footnote{
  The equivalence \cref{FundamentalTheoremOfDGAlgRHT} is 
  traditionally discussed for polynomial differential forms on bounded simplices; but it holds just as well for the smooth differential forms on extended simplices used here, since these still satisfy (cf. \cite[Cor. 9.9]{GriffithsMorgan2013}) the relevant \emph{extension lemma} (cf. \parencites[Prop. 5.10]{FSS23-Char}{nLab:ExtensionLemmaForDifferentialForms}).
}
\begin{equation}
\label
{FundamentalTheoremOfDGAlgRHT}
  L^{\mathbb{R}}
  \mathcal{A}
  \simeq
  \mathrm{Sing}\,
  \mathbf{\Omega}^1_{\mathrm{cl}}
  (\ast; \mathfrak{l}\mathcal{A})
  \mathrlap{\,.}
\end{equation}
Regarding this under the tacit faithful embedding $\mathrm{Dsc} : \inlinetikzcd{ \mathrm{Grpd}_\infty \ar[r, hook] \& \mathrm{SmthGrpd}_\infty }$, it reads equivalently:
\begin{equation}
  L^\mathbb{R}
  \mathcal{A}
  \simeq
  \shape
  \mathbf{\Omega}^1_{\mathrm{cl}}
  \bracket({
    \ast; 
    \mathfrak{l}\mathcal{A}
  })
  \mathrlap{\,.}
\end{equation}
In this form it immediately generalizes, by the smooth Oka principle \cref{SmoothOkaPrinciple} and using \cref{ModuliSetOfClosedFormsOnManifold}, to:
\footnote{
   Our one-line proof, via cohesion, of
   \cref{ParameterizedFundamentalRHTTheorem} 
   reproduces the result of \cite[Thm. 1.4]{Berglund2015} (cf. also
   \parencites[Thm. 8.1]{Lazarev2013}[Thm. 3.1]{BuijsFelixMurillo2011})
   for the special case when the domain $X$ is a smooth manifold and for rationalization over the real numbers.
}
\begin{equation}
\label
{ParameterizedFundamentalRHTTheorem}
  \left.
  \substack{
    X \,\in\, \mathrm{SmthMfd}
  }
  \right\}
  \;\;\;
  \Rightarrow
  \;\;\;
  \mathbf{Map}\bracket({
    X
    ,
    L^{\mathbb{R}}
    \mathcal{A}
  })
  \simeq
  \shape
  \mathbf{\Omega}^1_{\mathrm{cl}}
  \bracket({
    X;
    \mathfrak{l}\mathcal{A}
  })
  \mathrlap{\,.}
  \hphantom{
  \left.
  \substack{
    X \,\in\, \mathrm{SmthMfd}
  }
  \right\}
  \;\;\;
  \Rightarrow  
  }
\end{equation}

\paragraph
{Restriction to Solitonic Flux Densities}
An important variant of \cref{ParameterizedFundamentalRHTTheorem} is  \cref{OnShapeOfSolitonicFluxesIsPointedMaps} below.
The \emph{solitonic} on-shell flux densities are those on-shell flux densities \cref{SmoothModuliSetOfOnShellFLuxDensities} which on a Cauchy surface $X$ are \emph{compactly supported} or \emph{vanishing at infinity} according to \cref{DifferentialFormModuliVanishingAtInfinity}: 
\begin{equation}
\label
{SolitonicOnShellFluxDensities}
  \mathbf{\Omega}^1_{\mathrm{cl},\mathrm{c}}
  \bracket({
    X
    ;\,
    \mathfrak{a}
  })
  :=
  \varinjlim_{K \in \mathrm{Cmp}(X)}
  \,
  \mathrm{fib}
  \Big(
    \inlinetikzcd{
      \mathbf{\Omega}^1_{\mathrm{cl}}
      \bracket({
        X; \mathfrak{a}
      })
      \ar[
        r,
        "{ \iota_K^\ast }"
      ]
      \&
      \mathbf{\Omega}^1_{\mathrm{cl}}
      \bracket({
        X \setminus K^\circ; \mathfrak{a}
      })      
    }
  \Big)
  \mathrlap{\,.}
\end{equation}

We assume now that the Cauchy surface manifold $X$ is non-pathologic at infinity. One way of saying this is that it is \emph{finitary} in that it is the interior of a compact manifold with boundary.

In this case the compactly supported maps out of $X$ with coefficients in a pointed space $\mathcal{Y}$,
\begin{equation}
\label
{CompactlySupportedMapingSpace}
  \mathbf{Map}_c\bracket({
    X,
    \mathcal{Y}
  })
  :=
  \varinjlim_{K \in \mathrm{Cpt}(X)}
  \mathrm{fib}\Big(
    \inlinetikzcd{
      \mathbf{Map}\bracket({
        X, \mathcal{Y}
      })
      \ar[
        r,
        "{ \iota^\ast_K }"
      ]
      \&
      \mathbf{Map}\bracket({
        X - K^\circ, \mathcal{Y}
      })
    }
  \Big)
  \mathrlap{\,,}
\end{equation}
are equivalently (cf. \parencites[p. 6]{AyalaFrancis2025}) the \emph{pointed} maps out of the \emph{one-point compactification} $X_{\cpt}$ (cf. \parencites[\S 3]{James1984}[\S 2.2]{SS25-Flux}):
\begin{equation}
\label
{CmpactlySupportedMapsEquivalentToPointedMaps}
  \left.
  \substack{
    \text{finitary } X \in \mathrm{SmthMfd},
    \\
    \mathcal{Y} \in \mathrm{Grpd}^\ast_\infty
  }
  \right\}
  \;\Rightarrow\;
  \inlinetikzcd{
    \mathbf{Map}_c\bracket({
      X,
      \mathcal{Y}
    })
    \ar[
      r,
      "{ \sim }"
    ]
    \&
    \mathbf{Map}^\ast\bracket({
      X_{\cpt},
      \mathcal{Y}
    })
    \mathrlap{\,.}
  }
\end{equation}

\begin{proposition}
\label[proposition]
{OnShapeOfSolitonicFluxesIsPointedMaps}
  The shape of the moduli of solitonic on-shell flux densities \cref{SolitonicOnShellFluxDensities} on a finitary Cauchy surface $X$ is the \emph{pointed} mapping space: 
  \footnote{
    Our quick proof, via cohesion, of \cref{ShapeOfSolitonicFluxesIsPointedMaps} reproduces essentially the result of \cite[Thm. 1.2]{BuijsFelixMurillo2011} for the special case that the domain $X$ is a smooth manifold and for rationalization over the real numbers.
  }
  \begin{equation}
    \label
    {ShapeOfSolitonicFluxesIsPointedMaps}
    \mathbf{Map}^\ast\bracket({
      X,
      L^{\mathbb{R}}
      \mathcal{A}
    })
    \simeq
    \shape
    \mathbf{\Omega}^1_{\mathrm{cl},\mathrm{c}}
    \bracket({
      X
      ;\,
      \mathfrak{l}\mathcal{A}
    })
    \mathrlap{\,.}
  \end{equation}
\end{proposition}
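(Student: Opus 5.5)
The plan is to push the shape modality through the defining colimit of fibers \cref{SolitonicOnShellFluxDensities} and then apply the parameterized fundamental theorem \cref{ParameterizedFundamentalRHTTheorem} termwise. The result should then match the compactly supported mapping space \cref{CompactlySupportedMapingSpace}, which \cref{CmpactlySupportedMapsEquivalentToPointedMaps} identifies with the pointed mapping space out of $X_{\cpt}$. The left-hand side of \cref{ShapeOfSolitonicFluxesIsPointedMaps} is to be read as $\mathbf{Map}^\ast\bracket({X_{\cpt}, L^{\mathbb{R}}\mathcal{A}})$, pointed at the base point given by the zero form.

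\textbf{Step 1: colimit.} Since $\shape$ is a left adjoint, it preserves the filtered colimit over $K \in \mathrm{Cmp}(X)$. (For finitary $X$ one may moreover restrict to a cofinal sequence of collar complements.)

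\textbf{Step 2: fibers.} Here we need that $\shape$ preserves the fiber of
\[
  \iota_K^\ast \colon \mathbf{\Omega}^1_{\mathrm{cl}}\bracket({X;\mathfrak{l}\mathcal{A}}) \longrightarrow \mathbf{\Omega}^1_{\mathrm{cl}}\bracket({X\setminus K^\circ;\mathfrak{l}\mathcal{A}})
\]
over the zero form. All objects are $0$-truncated smooth sets, so I would invoke \cref{ShapePreservesFiberProductsWithSingFibrantMaps}. It then suffices to show that $\mathrm{Sing}(\iota_K^\ast)$ is a Kan fibration. A horn $\Lambda^n_k \to \mathrm{Sing}$ of the source, together with an $n$-simplex filler downstairs, amounts to a closed $\mathfrak{l}\mathcal{A}$-valued form on $\boldsymbol{\Delta}^n\times(X\setminus K^\circ)$ that is prescribed on the horn $\Lambda^n_k \times X$. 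We must extend this jointly to a closed form on $\boldsymbol{\Delta}^n \times X$.

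I would construct this lift in three moves. First, choose a smooth retraction of $\boldsymbol{\Delta}^n$ onto a collared neighbourhood of the horn. Second, use that $X\setminus K^\circ \hookrightarrow X$ is a cofibration with a collar, since $K$ is a compact submanifold with boundary. Third, glue via the extension lemma for closed forms (cf.\ the footnote to \cref{FundamentalTheoremOfDGAlgRHT}), which also underlies the Kan property of $\mathrm{Sing}\,\mathbf{\Omega}^1_{\mathrm{cl}}$. \textbf{This is the step I expect to be the main obstacle.} The difficulty is that closedness for nonlinear $\mathfrak{l}\mathcal{A}$ is a Maurer--Cartan condition, so naive partition-of-unity gluing does not preserve it. I would try first to pull back along a smooth deformation retraction of $\boldsymbol{\Delta}^n\times X$ onto $\Lambda^n_k\times X \cup \boldsymbol{\Delta}^n\times (X\setminus K^\circ)$. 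Such a retraction exists up to collars, and since pullback preserves closedness, this sidesteps the nonlinearity.

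\textbf{Step 3: termwise identification.} With Steps 1 and 2 in hand, we get
\[
  \shape \mathbf{\Omega}^1_{\mathrm{cl},\mathrm{c}} \simeq \varinjlim_K \mathrm{fib}\bracket({\shape\mathbf{\Omega}^1_{\mathrm{cl}}(X;\mathfrak{l}\mathcal{A}) \to \shape\mathbf{\Omega}^1_{\mathrm{cl}}(X\setminus K^\circ;\mathfrak{l}\mathcal{A})}).
\]
Apply \cref{ParameterizedFundamentalRHTTheorem} to both $X$ and the manifold $X\setminus K^\circ$, using naturality in the domain manifold, which is clear from the Oka-principle derivation. This identifies the map in the fiber with
\[
  \iota_K^\ast \colon \mathbf{Map}\bracket({X,L^{\mathbb{R}}\mathcal{A}}) \to \mathbf{Map}\bracket({X\setminus K^\circ,L^{\mathbb{R}}\mathcal{A}}),
\]
with the zero form corresponding to the base point. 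The resulting colimit is by definition $\mathbf{Map}_c\bracket({X,L^{\mathbb{R}}\mathcal{A}})$ as in \cref{CompactlySupportedMapingSpace}.

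\textbf{Step 4: conclusion.} Since $X$ is finitary, \cref{CmpactlySupportedMapsEquivalentToPointedMaps} identifies $\mathbf{Map}_c\bracket({X,L^{\mathbb{R}}\mathcal{A}})$ with the pointed mapping space $\mathbf{Map}^\ast\bracket({X_{\cpt},L^{\mathbb{R}}\mathcal{A}})$, which is the claim.
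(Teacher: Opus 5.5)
Your Steps 1--4 follow the paper's proof line for line. You pull $\shape$ through the colimit over $K$, then through each fiber via \cref{ShapePreservesFiberProductsWithSingFibrantMaps}, identify the terms by \cref{ParameterizedFundamentalRHTTheorem}, and finish with \cref{CmpactlySupportedMapsEquivalentToPointedMaps}. Your reading of the left-hand side as $\mathbf{Map}^\ast(X_{\cpt}, L^{\mathbb{R}}\mathcal{A})$ is also how the paper uses the result afterwards.

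The one place you go beyond the paper is the Kan-fibration hypothesis in Step 2, which the paper takes for granted. There your proposed mechanism does not work.

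\emph{Why the retraction fails.} For $n \geq 2$ there is no smooth retraction of $\boldsymbol{\Delta}^n$ onto $\Lambda^n_k$. The horn faces are $n$ hyperplanes meeting transversally at the vertex $k$. A smooth map restricting to the identity on each of them has invertible differential at that vertex, so it is a local diffeomorphism there and cannot land in their union. The same argument at points $(k,x)$ with $x \in K^\circ$ rules out a smooth retraction of $\boldsymbol{\Delta}^n \times X$ onto $\Lambda^n_k \times X \cup \boldsymbol{\Delta}^n \times (X \setminus K^\circ)$. Working ``up to collars'' does not rescue this, because the horn data are prescribed exactly and a strict lifting problem must match them exactly.

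\emph{The standard argument.} It is dg-algebraic, and it also disposes of your Maurer--Cartan worry. Put $Y := X \setminus K^\circ$. The lifting problem asks you to lift a dgc-map $\mathrm{CE}(\mathfrak{l}\mathcal{A}) \to P := \Omega^\bullet_{\mathrm{dR}}(\Lambda^n_k \times X) \times_{\Omega^\bullet_{\mathrm{dR}}(\Lambda^n_k \times Y)} \Omega^\bullet_{\mathrm{dR}}(\boldsymbol{\Delta}^n \times Y)$ along the restriction $\Omega^\bullet_{\mathrm{dR}}(\boldsymbol{\Delta}^n \times X) \to P$.

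This restriction map is surjective, by three linear operations on plain forms:
\begin{enumerate}
\item Extend the horn data to $\boldsymbol{\Delta}^n \times X$ by the inclusion--exclusion formula built from the coordinate projections onto the faces.
\item The remaining discrepancy on $\boldsymbol{\Delta}^n \times Y$ vanishes on $\Lambda^n_k \times Y$; extend it across a collar of $\partial K$ to $\boldsymbol{\Delta}^n \times X$.
\item Subtract the same inclusion--exclusion formula applied to this extension's restriction to $\Lambda^n_k \times X$.
\end{enumerate}

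The restriction map is also a quasi-isomorphism, with both sides computing $H^\bullet_{\mathrm{dR}}(X)$. This follows from homotopy invariance and the Mayer--Vietoris sequence of the pullback $P$. That sequence is exact because $\Omega^\bullet_{\mathrm{dR}}(\boldsymbol{\Delta}^n \times Y) \to \Omega^\bullet_{\mathrm{dR}}(\Lambda^n_k \times Y)$ is onto.

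The minimal Sullivan algebra $\mathrm{CE}(\mathfrak{l}\mathcal{A})$ has the left lifting property against surjective quasi-isomorphisms. Hence the required closed $\mathfrak{l}\mathcal{A}$-valued form on $\boldsymbol{\Delta}^n \times X$ exists, and the Maurer--Cartan condition never has to be glued by hand.
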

\begin{proof}
We have the following sequence of natural equivalences:
\begin{alignat*}{2}
    &
    \shape 
    \mathbf{\Omega}^1_{\mathrm{cl},\mathrm{c}}
    \bracket({
      X^d;
      \mathfrak{l}
      \mathcal{A}
    })
    &\;\;&
    \\
    &
    \defneq
    {\mathcolor{purple}{\shape}}
    \,
      \varinjlim_{K}
      \,
      \mathrm{fib}
      \Big({
        \inlinetikzcd{
          \mathbf{\Omega}^\bullet_{\mathrm{cl}}
          \bracket({
            X; 
            \mathfrak{l}\mathcal{A}
          })
          \ar[
            r,
            "{ 
              \iota^\ast_K
            }"
          ]
          \&
          \mathbf{\Omega}^\bullet_{\mathrm{cl}}
          \bracket({
            X \setminus K^\circ;
            \mathfrak{l}\mathcal{A}
          })
        }
      }\Big)
      &&
     \text{\footnotesize 
       by \cref{SolitonicOnShellFluxDensities}
      }
     \\
     & \simeq
      \,
      \varinjlim_{K}
      \,
      {\mathcolor{purple}{\shape}}
      \mathrm{fib}
      \Big({
        \inlinetikzcd{
          \mathbf{\Omega}^\bullet_{\mathrm{cl}}
          \bracket({
            X;
            \mathfrak{l}\mathcal{A}
          })
          \ar[
            r,
            "{ 
              \iota^\ast_K
            }"
          ]
          \&
          \mathbf{\Omega}^\bullet_{\mathrm{cl}}
          \bracket({
            X \setminus K^\circ;
            \mathfrak{l}\mathcal{A}
          })
        }
      }\Big)
      &&
      \text{\footnotesize 
        since $\shape$ preserves colimits
      }
     \\
     & \simeq
      \,
      \varinjlim_{K}
      \,
      \mathrm{fib}
      \Big({
        \inlinetikzcd{
          {\mathcolor{purple}{\shape}}
          \mathbf{\Omega}^\bullet_{\mathrm{cl}}
          \bracket({
            X;
            \mathfrak{l}\mathcal{A}
          })
          \ar[
            r,
            "{ 
              \iota^\ast_K
            }"
          ]
          \&
          {\mathcolor{purple}{\shape}}
          \mathbf{\Omega}^\bullet_{\mathrm{cl}}
          \bracket({
            X \setminus K^\circ;
            \mathfrak{l}\mathcal{A}
          })
        }
      }\Big)
      &&
      \text{\footnotesize 
        using \cref{ShapePreservesFiberProductsWithSingFibrantMaps}
      }
      \\
      & \simeq
      \,
      \varinjlim_{K}
      \,
      \mathrm{fib}
      \Big({
        \inlinetikzcd{
          \mathbf{Map}\bracket({
            X^d,
            L^{\mathbb{R}}\mathcal{A}
          })
          \ar[
            r,
            "{ 
              \iota^\ast_K
            }"
          ]
          \&
          \mathbf{Map}\bracket({
            X^d - K^\circ,
            L^{\mathbb{R}}\mathcal{A}
          })
        }
      }\Big)
      &&
      \text{\footnotesize 
        by \cref{ParameterizedFundamentalRHTTheorem}
      }
      \\
      & 
      \simeq
      \mathbf{Map}^\ast\bracket({
        X^d,
        L^{\mathbb{R}}\mathcal{A}
      })
      &&
      \text{\footnotesize 
        by \cref{CmpactlySupportedMapsEquivalentToPointedMaps}.
      }
      \;\;\;\;\;\;\;\;\qedhere
\end{alignat*}
\end{proof}

\paragraph
{Admissible Classifying Spaces}
We hence ask for classifying spaces $\mathcal{A}$ whose real Whitehead $L_\infty$-algebra $\mathfrak{l}\mathcal{A}$ \cref{RealWhiteheadBracketLInfinityAlgebra} coincides with the characteristic $L_\infty$-algebra $\mathfrak{a}$ (\cref{CharacteristicLInfinityAlgebra}) of a given Maxwell-type higher gauge theory. 
For example:
\begin{enumerate}
\item
\colorbox{lightgray}{{$\mathfrak{l}
  B^2 \mathbb{Z}^2
$}}, cf. \cref{TheEMSpaces},
is the characteristic $L_\infty$-algebra of 4D vacuum electromagnetism \cref{EoMOf4DVacuumElectromagnetism}:
\begin{equation}
\label
{CharacteristicOfVacuumElectromagnetism}
  \mathrm{CE}\bracket({
    \mathfrak{l}B^2\mathbb{Z}^2
  })
  \simeq
  \mathbb{R}_{_{\mathrm{d}}}
  \left[
  \begin{aligned}
    & f_2
    \\
    & g_2
  \end{aligned}
  \right]
  \Big/
  \left(
  \begin{aligned}
    \mathrm{d}\, 
    f_2 & = 0
    \\
    \mathrm{d}\,
    g_2 & = 0
  \end{aligned}
  \right)
  \mathrlap{.}
\end{equation}

\item
\colorbox{lightgray}{{$\mathfrak{l} B^{2k+1} \mathbb{Z}$}}, cf. \cref{TheEMSpaces}, 
is the characteristic $L_\infty$-algebra of the chiral flux in $D = 4k+2$ \cref{ChiralFluxEoM}:
\begin{equation}
  \mathrm{CE}\bracket({
    \mathfrak{l}
    B^{2k+1}\mathbb{Z}
  })
  \simeq
  \mathbb{R}_{_{\mathrm{d}}}
  \big[
    c_{2k+1}
  \big]
  \big/
  \big(
    \mathrm{d}\,
    c_{2k+1}
    = 0
  \big)
  \mathrlap{.}
\end{equation}

\item
\colorbox{lightgray}{{$\mathfrak{l}S^2$}}, cf. \cref{Cohomotopy}, is the characteristic $L_\infty$-algebra of 5D Maxwell-Chern-Simons theory \cref{EoMOf5DMCSTheory}:
\begin{equation}
  \mathrm{CE}\bracket({
    \mathfrak{l}S^2
  })
  \simeq
  \mathbb{R}_{_{\mathrm{d}}}
  \left[
  \begin{aligned}
    & f_2
    \\
    & g_3
  \end{aligned}
  \right]
  \Big/
  \left(
  \begin{aligned}
    \mathrm{d}\, f_2
    & =
    0
    \\
    \mathrm{d}\, g_3
    & =
    \tfrac{1}{2} f_2^2
  \end{aligned}
  \right)
  \mathrlap{.}
\end{equation}
\item
\colorbox{lightgray}{{$\mathfrak{l}\big(\mathrm{KU}_0 \sslash B\mathrm{U}(1)\big)$}}, cf. \cref{ClassifyingFibrationForTwistedKTheory}, is almost the characteristic $L_\infty$-algebra of massive type IIA supergravity \cref{EoMOfMassiveIIA}, missing $H_7$ and its nonlinear Bianchi identity:
\begin{equation}
\label
{CharacteristicOfMassiveIIA}
  \mathrm{CE}\bracket({
    \mathfrak{l}\bracket({\mathrm{KU}_0 \sslash B\mathrm{U}(1)})
  })
  \simeq
  \mathbb{R}_{_{\mathrm{d}}}
  \left[
  \begin{aligned}
    & 
    f_{2\bullet}
    \\
    &
    h_3
  \end{aligned}
  \right]
  \Big/
  \left(
  \begin{aligned}
    \mathrm{d}\, f_{2\bullet}
    &=
    h_3 f_{2\bullet-2}
    \\
    \mathrm{d}\,
    h_3 
    & =
    0
  \end{aligned}
  \right)
  \mathrlap{.}
\end{equation}
\item
\colorbox{lightgray}{{$\mathfrak{l}
  \mathrm{Cyc}\bracket({
    S^4
  })
  $}}, cf. \cref{Cyclification}, is the characteristic $L_\infty$-algebra of actual IIA supergravity \cref{EoMOfIIA}, including $H_7$ and its Bianchi identity:
\begin{equation}
  \mathrm{CE}\bracket({
    \mathfrak{l}
    \mathrm{Cyc}\bracket({
      S^4
    })
  })
  \simeq
  \mathbb{R}_{_{\mathrm{d}}}
  \left[
  \begin{aligned}
    f_2
    \\
    f_4
    \\
    f_6
    \\
    h_3
    \\
    h_7
  \end{aligned}
  \right]
  \Big/
  \left(
  \begin{aligned}
    \mathrm{d}\, f_2 & = 0
    \\
    \mathrm{d}\, f_4 & = 
    h_3 f_2
    \\
    \mathrm{d}\, f_6 & =
    h_3 f_4
    \\
    \mathrm{d}\, h_3 & = 0
    \\
    \mathrm{d}\, h_7 & =
    \tfrac{1}{2}
    f_4 f_4 - 
    f_2 f_6
  \end{aligned}
  \right)
  \mathrlap{.}
\end{equation}
\item
\colorbox{lightgray}{{$\mathfrak{l}S^4$}}, cf. \cref{Cohomotopy}, is the characteristic $L_\infty$-algebra of 11D supergravity \cref{11DSuGraEoM}:
\begin{equation}
\label
{CEAlgebraOflS4}
  \mathrm{CE}\bracket({
    \mathfrak{l}
    S^4
  })
  \simeq
  \mathbb{R}_{_{\mathrm{d}}}
  \left[
  \begin{aligned}
    & g_4
    \\
    & g_7
  \end{aligned}
  \right]
  \Big/
  \left(
  \begin{aligned}
    \mathrm{d}\, 
    g_4 & = 0
    \\
    \mathrm{d}\,
    g_7 & =
    \tfrac{1}{2}
    g_4^2
  \end{aligned}
  \right)
  \mathrlap{.}
\end{equation}
\end{enumerate}

But beware that $\mathfrak{l}(-)$ forgets a lot of structure, namely all the ``torsion information''. Accordingly, there is always an infinitude of inequivalent classifying spaces $\mathcal{A}$ with the same rationalization $\mathfrak{l}\mathcal{A}$. 
For example, with any $K \in \mathrm{Grp}\bracket({\mathrm{FinSet}})$ we have 
\begin{equation}
  \mathfrak{l}\bracket({
    \mathcal{A}
    \times 
    B K
  })
  \simeq
  \mathfrak{l}\mathcal{A}
  \mathrlap{\,.}
\end{equation}

Accordingly, when we come to \emph{flux quantization} in \cref{OnNonabelianCharacter}, the actual choice of classifying space $\mathcal{A}$, which is \emph{admissible}, in that
\begin{equation}
  \mathfrak{l}\mathcal{A}
  \simeq
  \mathfrak{a}
\end{equation}
for the given characteristic Gauss law $L_\infty$-algebra $\mathfrak{a}$, is a substantial datum that needs to be provided/chosen in order to globally complete the higher gauge field theory.

\subsubsection
{Nonabelian Character}
\label
{OnNonabelianCharacter}

By combining the facts from \cref{OnRationalHomotopy}, we obtain the natural \emph{differential character maps}
\begin{equation}
\label
{DifferentialCharacterMap}
  \left.
  \substack{
    X \,\in\, \mathrm{SmthMfd}
    \\
    \mathcal{A} \,\in\,
    \mathrm{Grpd}
      ^{\mathrm{ft}_{\smash{\mathbb{Q}}}}
      _{[2,\infty]}
  }
  \right\}
  \;\;
  \vdash
  \;\;
  \begin{tikzcd}[column sep=40pt]
    \mathbf{Map}\bracket({
      X,
      \mathcal{A}
    })
    \ar[
      r,
      "{
        \scaledbracket({
          \eta^{\mathbb{R}}_{\mathcal{A}}
        })_\ast
      }",
      "{
        \text{\cref{RRationalizationUnit}}
      }"{swap}
    ]
    \ar[
      rr,
      uphordown,
      "{
        \mathbf{ch}^{\mathcal{A}}_X
      }"{description}
    ]
    &
    \mathbf{Map}\bracket({
      X,
      L^{\mathbb{R}}
      \mathcal{A}
    })
    \ar[
      r,
      "{ \sim }",
      "{
        \text{
     \cref{ParameterizedFundamentalRHTTheorem}
        }
      }"{swap}
    ]
    &
    \shape\mathbf{\Omega}^1_{\mathrm{cl}}
    \bracket({
      X;
      \mathfrak{l}\mathcal{A}
    })
  \end{tikzcd}
\end{equation}
(where on the left we mean that $\mathcal{A}$ is simply connected and of rational finite type).

These maps hence extract the underlying differential form (flux density!) shadow from maps to the classifying space (charges!).

To make this interpretation more tangible,
consider:
\begin{enumerate}
\item
The \emph{nonabelian cohomology} with coefficients in the loop $\infty$-group of $\mathcal{A}$ is
\begin{equation}
\label
{NonabelianCohomologyInDegree1}
  H^1\bracket({
    X;
    \Omega\mathcal{A}
  })
  :=
  \pi_0\,
  \mathbf{Map}\bracket({
    \shape X,
    \mathcal{A}
  })
\end{equation}
and, more generally, if $\mathcal{A}$ is not necessarily connected:
\begin{equation}
\label
{NonabelianCohomologyInDegree0}
  H^0\bracket({
    X;
    \mathcal{A}
  })
  :=
  \pi_0\,
  \mathbf{Map}\bracket({
    \shape X,
    \mathcal{A}
  })
  \mathrlap{\,.}
\end{equation}
(\parencites[Def. 6.0.6]{Toen2002}[Def. 6]{Lurie2014}[\S 2]{FSS23-Char}, more exposition in \parencites[\S 1]{SS25-TEC}[\S 4]{SS26-Orb}).

This subsumes:
\begin{enumerate}
\item 
Ordinary nonabelian cohomology with coefficients in a Lie group $G$, taking $\mathcal{A} \defneq B G$,
\begin{equation}
  H^1\bracket({
    X; G
  })
  \simeq
  \pi_0 \, 
  \mathrm{Map}\bracket({
    X, B G
  })
  \mathrlap{\,,}
\end{equation}
(cf. \cite[Ex. 2.2]{FSS23-Char}).

\item 
Ordinary abelian cohomology with coefficients $A \in \mathrm{AbGrp}$, taking $\mathcal{A} \defneq B^n A$:
\begin{equation}
  \begin{aligned}
  H^n\bracket({
    X;
    A
  })
  &
  \simeq
  H^0\bracket({
    X;
    B^n A
  })
  \end{aligned}
\end{equation}
(cf. \cite[Ex. 2.1]{FSS23-Char}).

\item 
Complex topological K-theory, taking $\mathcal{A} \defneq B \mathrm{U}(\infty) \times \mathbb{Z}$
\begin{equation}
  K^0\bracket({
    X
  })
  \simeq
  H^0\bracket({
    X;
    B \mathrm{U}(\infty) \times \mathbb{Z}
  })
  \mathrlap{\,,}
\end{equation}
(cf. \cite[Ex. 2.11]{FSS23-Char}).

\item 
Any Whitehead-generalized abelian cohomology theory $E^n(-)$, taking $\mathcal{A} \defneq E_n$ to be a stage in the corresponding spectrum of classifying spaces:
\begin{equation}
  E^n\bracket({X})
  \simeq
  H^0\bracket({
    X;
    E_n
  })
\end{equation}
(cf. \parencites[\S 3.4]{Kochman1996}[\S 12]{AguilarGitlerPrieto2002}[Ex. 2.10]{FSS23-Char}).
\item
\emph{$n$-Cohomotopy}, taking $\mathcal{A} \defneq S^n$:
\begin{equation}
  \pi^n(X)
  :=
  H^1\bracket({
    X; \Omega S^n
  })
  \mathrlap{\,.}
\end{equation}
\end{enumerate}
\item
The \emph{nonabelian de Rham cohomology}
\cite[Def. 6.3]{FSS23-Char} with coefficients in $\mathfrak{a} \in L_\infty \mathrm{Alg}^{\mathrm{ft}}_{\mathbb{R}}$ is the set of \emph{concordance classes} of closed forms:
\begin{equation}
\label
{NonabelianDeRhamCohomology}
  \begin{aligned}
  H^1_{\mathrm{dR}}\bracket({
    X;
    \mathfrak{a}
  })
  &
  :=
  \Omega^1_{\mathrm{cl}}\bracket({
    X;
    \mathfrak{a}
  })
  \big/
  \mathrm{concordance}\,.
  \end{aligned}
\end{equation}
This means that a pair of closed $\mathfrak{a}$-valued forms on $X^d$ are in the same cohomology class iff they are the boundary values of a closed $\mathfrak{a}$-valued form on $[0,1] \times X^d$.

By the discussion \cref{OnEquationsOfMotion}, this means that the nonabelian de Rham class of on-shell flux densities is preserved by time evolution, hence reflects \emph{conserved total flux} (\cite[\S 3.1]{SS25-Flux} cf. \cref{ConservedTotalFlux}):

\begin{figure}[htb]
\caption{\label{ConservedTotalFlux}
Time evolution of on-shell flux is a
\emph{concordance} of Cauchy data. The concordance class in nonabelian de Rham cohomology \cref{NonabelianDeRhamCohomology} is the \emph{conserved total flux}.
}
\centering
\adjustbox
{rndfbox=4pt, scale=0.9}{
  \includegraphics
   [width=13cm]
   {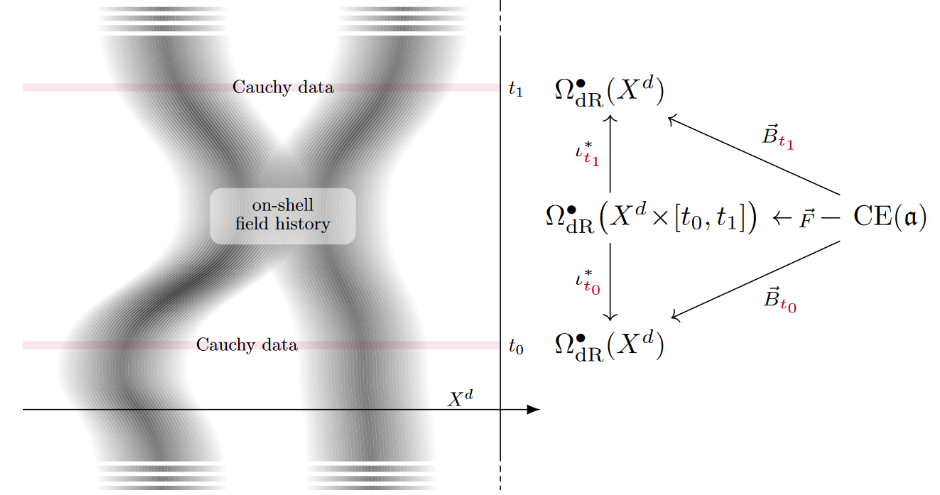}
}
\end{figure}

For $\mathfrak{a} \defneq \mathfrak{l}B^n \mathbb{Z}$ this definition \cref{NonabelianDeRhamCohomology} reproduces \cite[Prop. 6.4]{FSS23-Char} ordinary de Rham cohomology:
\begin{equation}
  H^1_{\mathrm{dR}}\bracket({
    X;
    \mathfrak{l}B^n \mathbb{Z}
  })
  \simeq
  H^n_{\mathrm{dR}}\bracket(X)\,.
\end{equation}

Generally, for $\mathfrak{a} \simeq \mathfrak{l}\mathcal{A}$ \cref{RealWhiteheadBracketLInfinityAlgebra},
the following \emph{nonabelian de Rham theorem} \cite[Thm. 6.5]{FSS23-Char}  naturally identifies nonabelian de Rham cohomology \cref{NonabelianDeRhamCohomology} with \emph{nonabelian real cohomology}:
\begin{equation}
  \begin{alignedat}{2}
    H^1_{\mathrm{dR}}\bracket({
      X;
      \mathfrak{l}\mathcal{A}
    })
    &
    \defneq
    \Omega^1_{\mathrm{cl}}\bracket({
      X;
      \mathfrak{l}\mathcal{A}
    })
    \big/
    \mathrm{concordance}
    &\;\;\;\;&
    \text{\footnotesize by \cref{NonabelianDeRhamCohomology}}
    \\
    & 
    \simeq  
    \pi_0\,
    \shape
    \mathbf{\Omega}
      ^1
      _{\mathrm{cl}}
    \bracket({
      X;
      \mathfrak{l}\mathcal{A}
    })
    &&
    \text{\footnotesize by \cref{ShapeEquivalentToSingularComplex}}
    \\
    & \simeq
    \pi_0\,
    \mathbf{Map}\bracket({
      X,
      L^{\mathbb{R}}
      \mathcal{A}
    })
    &&
    \text{\footnotesize by \cref{ParameterizedFundamentalRHTTheorem}}
    \\
    & \defneq
    H^1\bracket({
      X;
      \Omega L^{\mathbb{R}}
      \mathcal{A}
    })
    &&
    \text{\footnotesize by \cref{NonabelianCohomologyInDegree1}.}
  \end{alignedat}
\end{equation}
\end{enumerate}

Therefore, on $\pi_0$ the differential character map \cref{DifferentialCharacterMap} becomes a map from nonabelian integral to nonabelian real/de Rham cohomology, which as such is the plain \emph{nonabelian character map} \cite[Def. IV.2]{FSS23-Char}:
\begin{equation}
\label
{TheCharacterMap}
  \begin{tikzcd}[
    column sep=-3pt
  ]
    H^1\bracket({
      X;
      \Omega\mathcal{A}
    })
    \ar[
      rrrrrr,
      uphordown,
      "{
        \mathrm{ch}^{\mathcal{A}}_X
      }"
    ]
    &\simeq&
    \pi_0\,
    \mathbf{Map}\bracket({
      X;
      \mathcal{A}
    })
    \ar[
      rr,
      "{
        \scaledbracket({
          \mathbf{ch}
            ^{\mathcal{A}}
            _{X}
        })_\ast
      }"
    ]
    &\phantom{----}&
    \pi_0\,
    \shape 
    \mathbf{\Omega}^1_{\mathrm{dR}}
    \bracket({
      X;
      \mathfrak{l}
      \mathcal{A}
    })
    &\simeq&
    H^1_{\mathrm{dR}}\bracket({
      X;
      \mathfrak{l}
      \mathcal{A}
    })\,.
  \end{tikzcd}
\end{equation}

\noindent For example:
\begin{enumerate}
\item
$\mathrm{ch}^{B^n \mathbb{Z}}$ is the classical \emph{de Rham map},

\item 
$\mathrm{ch}^{\mathrm{KU}_0}$ is the \emph{Chern character} on K-theory,

\item 
$\mathrm{ch}^{E_n}$ is the \emph{Chern-Dold character} on the generalized abelian cohomology $E^n(-)$,

\item 
$\mathrm{ch}^{S^4}$ is a nonabelian character map from 4-Cohomotopy \cref{Cohomotopy} to classes of total C-field flux \cref{CEAlgebraOflS4}. 
\end{enumerate}

\noindent Hence, if we understand:
\begin{enumerate}
\item
nonabelian cohomology as the home of \emph{charge},

\item 
nonabelian de Rham cohomology as the home of \emph{total flux},
\end{enumerate}
then the character map \cref{TheCharacterMap} is understood as mapping charges to the total flux which they source. 

However, the charges in a cohomology theory are generally more constrained than total flux in real cohomology. For example, the image of the de Rham character
\begin{equation}
  \begin{tikzcd}
    H^n\bracket({
      X;
      \mathbb{Z}
    })
    \ar[
      r,
      "{
        \mathrm{ch}^{B^n \mathbb{Z}}_X
      }"
    ]
    &
    H^1_{\mathrm{dR}}(X)
  \end{tikzcd}
\end{equation}
is a \emph{lattice} inside a real vector space.

Therefore, asking that some total flux $\bracket[{\vec B}]$ \emph{lifts} through the character map, hence that it is the \emph{character image} of a charge $\chi$,
\begin{equation}
  \bracket[{
    \vec B
  }]
  =
  \mathrm{ch}^{\mathcal{A}}_{X}({
  \rchi
  })
  \mathrlap{\,,}
\end{equation}
is to ask that it is ``quantized'' (in the sense of \emph{discretized}) in some form. This is the coarse form of \emph{flux quantization}:
\begin{equation}
\label
{CoarseFLuxQunantizationDiagram}
  \begin{tikzcd}[row sep=7pt]
    &&
    H^1\bracket({
      X^d;
      \mathcal{A}
    })
    \ar[
      dd,
      "{
        \mathrm{ch}
          ^{\mathcal{A}}
          _{X}
      }"
    ]
    \\
    \ast
    \ar[
      dr,
      "{
        \vec B
      }"{name=flux},
      "{
        \text{flux}
      }"{swap, sloped}
    ]
    \ar[
      urr,
      bend left=20pt,
      dashed,
      "{ \chi }"{
        swap, 
        pos=.4,
        name=charge
      },
      "{
        \text{charge}
      }"{
        sloped, 
        pos=.4
      }
    ]
    \\
    &
    \Omega^1_{\mathrm{dR}}\bracket({
      X^d,
      \mathfrak{a}
    })
    \ar[
      r,
      ->>
    ]
    &
    H^1_{\mathrm{dR}}\bracket({
      X^d;
      \mathfrak{a}
    })
  \end{tikzcd}
\end{equation}
and the choice of 
\begin{equation}
\label
{FluxQuantizationLaw}
  \mathcal{A}
  \in 
  \mathrm{Grpd}_\infty
  \text{ with }
  \mathfrak{l}\mathcal{A}
  \simeq
  \mathfrak{a}
\end{equation}
is the corresponding \emph{flux quantization law}.

However, the quantization of total flux $\bracket[{\vec B}]$ alone is not sufficient for defining a local higher gauge field: We need to \emph{locally} impose structure on the actual flux $\vec B$ which \emph{implies} global charges quantizing the corresponding global flux as above. 

This is a curious request: Locally the flux may take any value (as long as it satisfies the equations of motion), but globally it must ``integrate'' to certain prescribed discrete values. But this request has a natural solution in cohesive homotopy theory; this is what we turn to now in \cref{OnGlobalCompletion}.

\subsection
{Global Completion}
\label
{OnGlobalCompletion}

We are now ready to put all the pieces together and construct the \emph{phase space} stack of globally completed higher gauge fields 
for a given flux quantization law (\cref{OnFluxQuantization}).

\subsubsection
{The Phase Space}
\label
{OnThePhaseSpace}

\paragraph
{The Phase Space As Such}
\label
{OnThePhaseSpaceAsSuch}

Namely, after a choice $\mathcal{A}$ of flux quantization \cref{FluxQuantizationLaw} for given Gauss laws $\mathfrak{a}$, 
we have now obtained a pair of coincident maps, one from the smooth moduli set of on-shell flux densities, the other from the moduli space of charges:
\begin{equation}
  \begin{tikzcd}[
    column sep=35pt
  ]
    & 
    \mathbf{Map}\bracket({
      X^d,
      \mathcal{A}
    })
    \ar[
      d,
      "{
        \mathbf{ch}
          ^{\mathcal{A}}
      }"{pos=.4, swap},
      "{
        \text{\cref{TheCharacterMap}}
      }"
    ]
    \\
    \mathbf{\Omega}^1_{\mathrm{cl}}
    \bracket({
      X^d;
      \mathfrak{a}
    })
    \ar[
      r,
      "{
        \eta^{\shape}
      }",
      "{
        \text{\cref{ShapeUnit}}
      }"{swap}
    ]
    &
    \shape 
    \mathbf{\Omega}^1_{\mathrm{cl}}
    \bracket({
      X^d;
      \mathfrak{l}
      \mathcal{A}
    })    \mathrlap{\,.}
  \end{tikzcd}
\end{equation}
This is the local refinement of the corresponding coincident maps in \cref{CoarseFLuxQunantizationDiagram}, in that it reproduces the latter on gauge equivalence classes \cref{0truncationUnit}:
\begin{equation}
  \begin{tikzcd}[
    column sep=20pt
  ]
    & 
    \mathbf{Map}\bracket({
      X^d,
      \mathcal{A}
    })
    \ar[
      d,
      "{
        \mathbf{ch}
          ^{\mathcal{A}}
      }"{pos=.4}
    ]
    \\
    \mathbf{\Omega}^1_{\mathrm{cl}}
    \bracket({
      X^d;
      \mathfrak{a}
    })
    \ar[
      r,
      "{
        \eta^{\shape}
      }"
    ]
    &
    \shape 
    \mathbf{\Omega}^1_{\mathrm{cl}}
    \bracket({
      X^d;
      \mathfrak{l}
      \mathcal{A}
    })
  \end{tikzcd}
  \;\;\;\;\;
  \inlinetikzcd{
    {}
    \ar[
      rr,
      |->,
      "{
        [-]_0 
      }"
    ]
    \&\&
    {}
  }
  \;\;\;
  \begin{tikzcd}[
    column sep=20pt
  ]
    & 
    H^1\bracket({
      X^d;
      \Omega \mathcal{A}
    })
    \ar[
      d,
      "{
        \mathrm{ch}
          ^{\mathcal{A}}
      }"{pos=.4}
    ]
    \\
    \mathbf{\Omega}^1_{\mathrm{cl}}
    \bracket({
      X^d;
      \mathfrak{a}
    })
    \ar[
      r,
      ->>
    ]
    &
    H^1_{\mathrm{dR}}\bracket({
      X^d;
      \mathfrak{l}\mathcal{A}
    })
    \mathrlap{\,.}
  \end{tikzcd}
\end{equation}

Therefore we may now ask for \emph{local} flux quantization by asking for a lift as above in \cref{CoarseFLuxQunantizationDiagram} but before passing to gauge equivalence classes, and hence up to a specified homotopy $\hat A$:
\begin{equation}
\label
{LocalFluxQuantization}
  \begin{tikzcd}
    &&
    \mathbf{Map}\bracket({
      X^d,
      \mathcal{A}
    })
    \ar[
      dd,
      "{
        \mathbf{ch}^{\mathcal{A}}
      }"
    ]
    \\
    \ast
    \ar[
      dr,
      dashed,
      "{
        \vec B
      }"{name=fluxes},
      "{
        \text{fluxes}
      }"{
        swap, 
        sloped,
        color=darkblue
      }
    ]
    \ar[
      urr,
      bend left=25,
      dashed,
      "{
        \rchi
      }"{
        swap,
        name=charges,
        pos=.4,
      },
      "{
        \text{charges}
      }"{
        sloped,
        pos=.4,
        color=darkblue
      }
    ]
    \ar[
      from=charges,
      to=fluxes,
      Rightarrow,
      dashed,
      "{ 
        \hat A 
      }"{
        pos=.55, swap
      },
      "{
        \text{potentials}
      }"{
        sloped, 
        swap, 
        pos=.4,
        yshift=-2pt,
        color=darkgreen
      }
    ]
    \\
    & 
    \mathbf{\Omega}^1_{\mathrm{cl}}
    \bracket({
      X^d;
      \mathfrak{a}
    })
    \ar[
      r,
      "{
        \eta^{\shape}
      }"
    ]
    &
    \shape
    \mathbf{\Omega}^1_{\mathrm{cl}}
    \bracket({
      X^d;
      \mathfrak{l}
      \mathcal{A}
    })    \mathrlap{\,.}
  \end{tikzcd}
\end{equation}

That homotopy $\hat A$ turns out to encode the \emph{gauge potentials}. In this way, such a diagram is the proper enhancement of the flux densities to a globally defined higher gauge field configuration.

The entirety of choices of such data forms, by the universal property \cref{UniversalPropertyOfHomotopyPullback}, the moduli stack, which is the homotopy fiber product \cref{HomotopyPullback} of our two maps:
\begin{equation}
\label
{PhaseSpaceAsPullback}
  \begin{tikzcd}[
    row sep=22pt
  ]
    &
    \mathbf{Phs}\bracket({
      X^d; 
      \mathcal{A}
    })
    \ar[
      r
    ]
    \ar[
      dd
    ]
    \ar[
      ddr,
      phantom,
      "{ \lrcorner }"{pos=.1}
    ]
    &
    \mathbf{Map}\bracket({
      X^d,
      \mathcal{A}
    })
    \ar[
      dd,
      "{
        \mathbf{ch}^{\mathcal{A}}
      }"
    ]
    \\
    \ast
    \ar[
      dr,
      dashed,
      "{
        \vec B
      }"{name=fluxes}
    ]
    \ar[
      ur,
      bend left=20,
      dashed,
      "{
        \scaledbracket({
          \rchi,
          \hat A
        })
      }"{
        sloped
      }
    ]
    \\
    & 
    \mathbf{\Omega}^1_{\mathrm{cl}}
    \bracket({
      X^d;
      \mathfrak{l}
      \mathcal{A}
    })
    \ar[
      r,
      "{
        \eta^{\shape}
      }"
    ]
    &
    \shape
    \mathbf{\Omega}^1_{\mathrm{cl}}
    \bracket({
      X^d;
      \mathfrak{a}
    }) \mathrlap{\,.}
  \end{tikzcd}
\end{equation}

This 
\begin{equation}
  \mathbf{Phs}\bracket({
    X^d;
    \mathcal{A}
  })
  \in
  \mathrm{SmthGrpd}_\infty
\end{equation}
is the globally completed \emph{phase space} of the global completion of our Maxwell-type higher gauge theory by flux quantization in $\mathcal{A}$-cohomology, in that it is the gauged smooth set of (Cauchy data for) \emph{on-shell} higher gauge fields, hence those solving their equations of motion (here: of Maxwell-type). 

We notice that the above homotopy pullback is the image of a homotopy pullback under $\mathbf{Map}\bracket({X^d; -})$, by \cref{ModuliSetOfClosedFormsOnManifold,ShapeOfModuliOfClosedForms}:
\begin{equation}
\label
{PhaseSpacePullbackAsImageUnderMap}
  \text{\cref{PhaseSpaceAsPullback}}
  \;\simeq\;
  \begin{tikzcd}[
   column sep=55pt,
   row sep=50pt
  ]
    \mathbf{Map}\bracket({
      X^d,
      \mathcal{A}_{\mathrm{dff}}
    })
    \ar[
      r,
      "{
        \mathbf{Map}\scaledbracket({
          X^d,
          \,
          p^{\mathcal{A}}
        })      
      }"
    ]
    \ar[
      d,
      "{
        \mathbf{Map}\scaledbracket({
          X^d,
          p^{\Omega}
        })
      }"
    ]
    \ar[
      dr,
      phantom,
      "{ \lrcorner }"{pos=.1}
    ]
    &
    \mathbf{Map}\bracket({
      X^d, 
      \mathcal{A}
    })
    \ar[
      d,
      "{
        \mathbf{Map}
        \scaledbracket({
          X^d,
          \mathbf{ch}^{\mathcal{A}}
        })
      }"
    ]
    \\
    \mathbf{Map}\bracket({
      X^d,
      \mathbf{\Omega}^1_{\mathrm{cl}}
      (\ast; \mathfrak{a})
    })
    \ar[
      r,
      "{
        \mathbf{Map}\scaledbracket({
          X^d,
          \eta^{\shape}
        })
      }"
    ]
    &
    \mathbf{Map}\bracket({
      X^d,
      \shape
      \mathbf{\Omega}^1_{\mathrm{cl}}
      \bracket({
        \ast;
        \mathfrak{l}
        \mathcal{A}
      })
    })
    \mathrlap{\,.}
  \end{tikzcd}
\end{equation}
Here in the top left we used that $\mathbf{Map}\bracket({X^d,-})$ preserves homotopy limits, and identified the \emph{differential moduli stack} of $\mathcal{A}$:
\begin{equation}
\label
{DifferentialModuliStack}
  \mathcal{A}_{\mathrm{dff}}
  :=
  \mathcal{A}
  \underset
    {
      \shape
      \mathbf{\Omega}^1_{\mathrm{cl}}
      \scaledbracket({
        \ast;
        \mathfrak{l}
        \mathcal{A}
      })
    }
    {\times}
    \mathbf{\Omega}^1_{\mathrm{cl}}
    \bracket({
      \ast;
      \mathfrak{l}\mathcal{A}
    })
    \in
    \mathrm{SmthGrpd}_\infty
    \mathrlap{\,.}
\end{equation}
This object neatly exhibits the combination of plain homotopy types ($\mathcal{A}$, on the left) with differential geometric spaces ($\mathbf{\Omega}^1_{\mathrm{cl}}(\ast;\mathfrak{a})$, on the right) that we achieve among smooth $\infty$-groupoids (cf. \cref{GeomHomotopySchematics}).

\paragraph
{The Gauge Potentials}
Among the most profound aspects of this construction is that the homotopies in \cref{LocalFluxQuantization} are really locally equivalent to higher gauge potentials as known in the traditional literature. This is the case but requires non-trivial analysis:

\begin{example}
\label[example]
{ExamplesOfDifferentialCohomology}
\begin{enumerate}
\item
\textbf{Differential ordinary cohomology}.
For ordinary cohomology coefficients, $\mathcal{A} \defneq B^{n+1} \mathbb{Z}$, \cite[Prop. 9.5]{FSS23-Char} shows that that the differential moduli stack \cref{DifferentialModuliStack} is that classifying ordinary differential $(n+1)$-cohomology (Deligne cohomology, higher $\mathrm{U}(1)$-bundle $(n-1)$-gerbes with connection, cf. \parencites[\S 2.5]{FSS13-CupCS}[\S 3.1]{FSS15-Stacky}[\S 2.8]{FRS2016})
\begin{equation}
  \bracket({
    B^{n+1} \mathbb{Z}
  })_{\mathrm{dff}}
  \simeq
  \mathbf{B}^n\mathrm{U}(1)_{\mathrm{conn}}
  \mathrlap{\,.}
\end{equation}
This shows that the above construction reproduces the usual global completions of abelian higher gauge fields, such as the ordinary magnetic field in ordinary differential 2-cohomology (\emph{Dirac charge quantization}, cf. \parencites{Alvarez1985}[\S 2]{Freed2002}{LazaroiuShahbazi2022b}), the Kalb-Ramond B-field in ordinary differential 3-cohomology (\parencites{Gawedzki1988}{FreedWitten1999}), and so on; see \cite[Ex. 3.10]{SS25-Flux} for further pointers.

\item 
\textbf{Differential K-theory}.
For $\mathcal{A} \defneq \mathrm{KU}_0$(or $\mathcal{A} \defneq \mathrm{KU}_0 \sslash B \mathrm{U}(1)$) flux-quantizing type IIA supergravity with the $H_7$-Bianchi disregarded \cref{CharacteristicOfMassiveIIA}, \cite[Exs. 9.2, 11.2]{FSS23-Char} shows that $\mathcal{A}_{\mathrm{dff}}$ \cref{DifferentialModuliStack} is the moduli stack for (twisted) differential K-theory, a candidate model for the gauge potentials of the RR-field in 10D supergravity (cf. \parencites{Freed2002}[\S 3]{Szabo2013}{GS22-KTheory}). 

\item \textbf{Differential abelian cohomology.}
Generally, for $\mathcal{A} \defneq E_n$ a stage in a spectrum $E_\bullet$ of spaces, representing a (Whitehead-generalized) abelian cohomology theory, $E^n_{\mathrm{dff}}$ is \cite[Ex. 9.1]{FSS23-Char} the moduli stack for differential $E$-cohomology (\cite[\S 4]{HopkinsSinger2005}, cf. \cite{Bunke2012}) in degree $n$.

But this construction generalizes now further to differential refinement of \emph{nonabelian} cohomology \cref{NonabelianCohomologyInDegree0}, such as:

\item 
\textbf{Differential 4-cohomotopy}.
For $\mathcal{A} \defneq S^4$ flux-quantizing 11D SuGra \cref{CEAlgebraOflS4},
the differential cohomology theory classified by $S^4_{\mathrm{dff}}$ \cref{DifferentialModuliStack} is \emph{differential 4-Cohomotopy} \parencites[\S 4]{FSS15-M5WZW}[\S 3.1]{Grady2021}. That this indeed induces the local gauge potentials (and their local gauge transformations) expected in 11D SuGra (cf. \cref{On11D10DSupergravity}) is shown in \parencites[Prop. 1.1]{GSS24-SuGra}[\S 4.1]{GSS25-M5}{Banerjee2025-Potentials}.

\end{enumerate}
\end{example}

\paragraph
{The Shape of Phase Space}

The shape of the phase space is the groupoid of \emph{global symmetries} exhibited by \emph{large gauge transformations}.

The properties \cref{IdempotencyViaShapeUnit,ShapePreservesHomotopyFibersOverDiscrete} immediately imply the following crucial fact:
\footnote{
  Moreover, in \cref{PhaseSpaceAsPullback,PhaseSpacePullbackAsImageUnderMap}, (i) the shape unit on $\mathrm{Phs}\bracket({X^d; \mathcal{A}})$ is equivalently the top map, and (ii) the shape of the left map is equivalently the right map:
  \\
  $
    \begin{tikzcd}[
      ampersand replacement=\&
    ]
      \mathcal{A}_{\mathrm{dff}}
      \ar[
        r,
        "{ p^{\mathcal{A}} }"{description}
      ]
      \ar[
        d,
        "{ p^{\Omega} }"
      ]
      \ar[
        dr,
        phantom,
        "{ \lrcorner }"
      ]
      \&
      \mathcal{A}
      \ar[
        d,
        "{ \mathbf{ch}^{\mathcal{A}} }"
      ]
      \\
      \mathbf{\Omega}^1_{\mathrm{cl}}({
        \ast;
        \mathfrak{a}
      })
      \ar[
        r,
        "{ \eta^{\shape}_\Omega }"
      ]
      \&
      \shape
      \mathbf{\Omega}^1_{\mathrm{cl}}({
        \ast;
        \mathfrak{a}
      })      
    \end{tikzcd}
    \;\;\simeq\;\;
    \begin{tikzcd}[
      ampersand replacement=\&
    ]
      \mathcal{A}_{\mathrm{dff}}
      \ar[
        r,
        "{
          \eta^{\shape}_{\mathcal{A}_{\mathrm{dff}}}
        }"{description}
      ]
      \ar[
        d,
        "{ p^\Omega }"
      ]
      \ar[
        dr,
        phantom,
        "{ \lrcorner }"{pos=.0}
      ]
      \&
      \shape \mathcal{A}_{\mathrm{dff}}
      \ar[
        d,
        "{
          \shape
          p^{\Omega}
        }"
      ]
      \\
      \mathbf{\Omega}^1_{\mathrm{cl}}
      ({
        \ast; \mathfrak{a}
      })
      \ar[
        r,
        "{
          \eta^{\shape}
           _{\Omega}
        }"
      ]
      \&
      \shape 
      \mathbf{\Omega}^1_{\mathrm{cl}}
      ({
        \ast; \mathfrak{a}
      })
      \mathrlap{\,.}
    \end{tikzcd}
  $

  Because, first \cref{IdempotencyViaShapeUnit,ShapePreservesHomotopyFibersOverDiscrete} imply the equivalence: 
  $
    \inlinetikzcd{
    \big({
      \eta^\shape_{\mathcal{A}}
    }\big)^{-1}
    \circ
    \shape p^{\mathcal{A}}
    :
    \shape \mathcal{A}_{\mathrm{dff}}
    \ar[
      r,
      "{ \sim }"
    ]
    \&
    \mathcal{A}
    }
  $.
  Combining this with naturality of the shape unit, 
  $
    \eta^{\shape}_{\mathcal{A}}
    \circ 
    p^{\mathcal{A}}
    \simeq
    \shape p^{\mathcal{A}}
    \circ 
    \eta^{\shape}_{\mathcal{A}_{\mathrm{dff}}}
  $,
   yields
  $
    \begin{tikzcd}[
      ampersand replacement=\&,
      column sep=25pt,
      row sep=-10pt
    ]
      \mathcal{A}_{\mathrm{dff}}
      \ar[
        rr,
        "{
          \eta^{\shape}
        }"
      ]
      \ar[
        dr,
        shorten=-2pt,
        "{
          p^{\mathcal{A}}
        }"{swap}
      ]
      \&\&[-17pt]
      \shape \mathcal{A}_{\mathrm{dff}}
      \ar[
        dl,
        shorten=-2pt,
        "{ \sim }"{
          sloped, 
          swap,
          pos=.3
        }
      ]
      \\
      \&
      \mathcal{A}
    \end{tikzcd}
    .
  $
  From this, commutativity 
  $
    \shape \eta^\shape_{\Omega}
    \circ
    \shape p^{\Omega}
    \simeq
    \shape \mathbf{ch}^{\mathcal{A}}
    \circ 
    \shape p^{\mathcal{A}}
  $
  and idempotency $\shape \mathbf{ch}^{\mathcal{A}} \simeq \mathbf{ch}^{\mathcal{A}}$,
  implies that $\shape p^{\Omega} \simeq \mathbf{ch}^{\mathcal{A}}$.
}
\emph{The shape of the phase space is the moduli space of the topological charges:}
\begin{equation}
\label
{ShapeOfPhaseSpace}
  \shape
  \mathbf{Phs}\bracket({
    X^d;
    \mathcal{A}
  })
  \simeq
  \mathbf{Map}\bracket({
    X^d;
    \mathcal{A}
  })
  \mathrlap{\,.}
\end{equation}

Hence the gauge equivalence classes \cref{0truncationUnit} of the shape \cref{TheShapeModality} of the phase space \cref{PhaseSpaceAsPullback} constitute the nonabelian cohomology \cref{NonabelianCohomologyInDegree1} of the Cauchy surface with coefficients in the flux quantization law $\mathcal{A}$ \cref{FluxQuantizationLaw}:
\begin{equation}
\label
{0TruncOfShapeOfPhaseSpaceIsCohomology}
  \bracket[{
    \shape
    \mathbf{Phs}\bracket({
      X^d;
      \mathcal{A}
    })
  }]_0
  \simeq
  H^1\bracket({
    X^d;
    \mathcal{A}
  })
  \mathrlap{\,.}
\end{equation}

\paragraph
{The Points of Phase Space}
If we forget the smooth structure on phase space, by applying the flat modality $\flat(-)$ \cref{FlatModalityOnGaugedSmoothSets}, and then pass to gauge equivalence classes \cref{0truncationUnit}, we get the discrete set of gauge equivalence classes of on-shell field configurations:
\begin{equation}
  \bracket[{
    \flat
    \mathbf{Phs}\bracket({
      X^d;
      \mathcal{A}
    })
  }]_0
  \in
  \mathrm{Set}\,.
\end{equation}
This is equivalently the cohomology of the Cauchy surface with coefficients in the differential moduli stack $\mathcal{A}_{\mathrm{dff}}$ \cref{DifferentialModuliStack}
\begin{equation}
\label
{DifferentialNonabelianCohomology}
  \begin{alignedat}{2}
  \bracket[{
    \flat
    \mathbf{Phs}\bracket({
      X^d;
      \mathcal{A}
    })
  }]_0
  & 
  \simeq
  \bracket[{
    \flat
    \mathbf{Map}\bracket({
      X^d,
      \mathcal{A}_{\mathrm{dff}}
    })
  }]_0
  &\;\;\;\;&
  \text{\footnotesize by \cref{PhaseSpacePullbackAsImageUnderMap}}
  \\
  & \simeq
  \bracket[{
    \mathrm{Hom}\bracket({
      X^d; 
      \mathcal{A}_{\mathrm{dff}}
    })
  }]_0
  \\
  &
  =:
  H^1_{\mathrm{dff}}\bracket({
    X^d;
    \Omega\mathcal{A}
  })
  \mathrlap{\,,}
  \end{alignedat}
\end{equation}
which is the \emph{differential nonabelian cohomology} \parencites[Def. 9.3]{FSS23-Char} of the Cauchy surface, with coefficients in $\mathcal{A}$.

For example:

\begin{enumerate}
\item 
  For \colorbox{lightgray}{{$\mathcal{A} \defneq B^2 \mathbb{Z}^2$}} flux-quantizing 4D vacuum Maxwell theory \cref{CharacteristicOfVacuumElectromagnetism}, the completed gauge fields have classes in (two copies of) \emph{ordinary differential cohomology in degree 2},
  \begin{equation}
    H^1_{\mathrm{dff}}\bracket({
      X^d;
      \Omega B^2 \mathbb{Z}^2
    })
    \simeq
    H^2_{\mathrm{dff}}\bracket({
      X^d;
      \mathbb{Z}^2
    })
  \end{equation}
  (cf. \cite[Prop. 9.5]{FSS23-Char}). 

  This is equivalently the usual flux quantization that one finds in modern textbooks (going back to \cite{Dirac1931,Schwinger1966,Zwanziger1968}), equivalently given by isomorphism classes of $\mathrm{U}(1)$-principal connections (cf. \parencites{Alvarez1985}[\S 7.1]{Brylinski1993}[\S 16.4e]{Frankel2011}[\S 2]{Freed2002}). 

  The consideration of quantization not just of the magnetic but also of the electric flux, hence of differential coefficients in $B^2 \mathbb{Z}^2$, is less widely appreciated but has been considered in \parencites{FreedMooreSegal2007}{FreedMooreSegal2007b}[Rem. 2.3]{BeckerBeniniSchenkelSzabo2017}{LazaroiuShahbazi2022a}{LazaroiuShahbazi2022b}[(21)]{SS24-Phase}.

  Of course, one may also choose not to integrally quantize the electric fluxes at all, hence to take the flux quantization law to be $B^2\mathbb{Z} \times B^2 \mathbb{Q}$. If one means to model real-world electromagnetism, then the ``right'' choice of flux quantization is a matter of comparison to experiment. 
  
  While integral magnetic flux quantization has been securely observed experimentally (in the form of flux quanta attached to \emph{Abrikosov vortices} in type II superconductors, cf. \cite[\S 5.3]{Timm2023}) the phenomenology of electric flux quantization may yet have to receive attention.

\item
 For \colorbox{lightgray}{{$\mathcal{A} \defneq B^3 \mathbb{Z}$}} flux-quantizing the NS field in 10D supergravity \cref{CharacteristicOfMassiveIIA}, the completed gauge fields have classes in \emph{ordinary differential cohomology in degree 3} (again by \cite[Prop. 9.5]{FSS23-Char}):
 \begin{equation}
   H^1_{\mathrm{dff}}\bracket({
     X^d;
     \Omega B^3 \mathbb{Z}
   })
   \simeq
   H^3_{\mathrm{dff}}\bracket({
     X^d;
     \mathbb{Z}
   })
   \mathrlap{\,.}
 \end{equation}
 This is the flux quantization traditionally considered for this \emph{NS B-field}, equivalently given by \emph{Deligne cohomology in degree 3} and by \emph{bundle gerbes with connection and curving} \parencites{Gawedzki1988}{FreedWitten1999}{Gawedzki2002}{CareyJohnsonMurray2004}{BonoraFerrariSavelli2008}{FerrariRuffino2012}{Grady2019}.

\item 
  For \colorbox{lightgray}{{$\mathcal{A} \defneq \mathrm{KU}_0$}} flux-quantizing massive 10D type IIA supergravity with trivial NS-field background \cref{CharacteristicOfMassiveIIA}, the completed gauge fields have classes \cite[Ex. 9.2]{FSS23-Char} in \emph{differential K-theory}:
  \begin{equation}
    H^1_{\mathrm{dff}}\bracket({
      X^d;
      \Omega \mathrm{KU}_0
    })
    \simeq
    K^0_{\mathrm{dff}}\bracket({
      X^d
    })
  \end{equation}
  (the twisting by the background NS-field is readily incorporated, too; we come to this in \cref{OnSourcesAndBranes,On11D10DSupergravity}.)
  
  This is the RR-flux quantization traditionally considered in the string theory literature \parencites[\S 3.7]{Freed2001}{Freed2002}{KahleValentino2014}{GS22-KTheory}, going back to suggestions in \cite{GreenHarveyMoore1996,MinasianMoore1997,MooreWitten2000}, cf. \cite[\S 4.1]{SS25-Flux}.

\item 
  For \colorbox{lightgray}{{$\mathcal{A} \defneq S^4$}} flux-quantizing 11D supergravity \cref{CEAlgebraOflS4}, the completed gauge fields have classes in \emph{differential 4-Cohomotopy} \parencites[\S 4]{FSS15-M5WZW}[Ex. 9.3]{FSS23-Char}[\S 3.1]{Grady2021}.

\end{enumerate}

\paragraph
{The Reduced Phase Space}

The \emph{reduced phase space} (cf. \cite{HenneauxTeitelboim1992}) is the (concrete, \cref{OnConcretification}) quotient of the phase space by gauge symmetries \cref{0truncationUnit}:
\begin{equation}
\label
{TheReducedPhaseSpace}
  \sharp_1 \bracket[{
    \mathbf{Phs}\bracket({
      X^d;
      \mathcal{A}
    })
  }]_0
  \in 
  \mathrm{SmthSet}
  \mathrlap{\,.}
\end{equation}

We now have the following fundamental result:
\begin{standout}
The points of the reduced phase space are the differential nonabelian cohomology classes of the Cauchy surface, with coefficients in the flux quantization law $\mathcal{A}$.
\end{standout}
This holds by these natural isomorphisms:
\begin{equation}
  \begin{alignedat}{2}
    \flat 
    \bracket({
    \sharp_1
    \bracket[{
      \mathbf{Phs}\bracket({
        X^d;
        \mathcal{A}
      })
    }]_0
    })
    & 
    \simeq
    \flat 
    \bracket[{
      \mathbf{Phs}\bracket({
        X^d;
        \mathcal{A}
      })
    }]_0
    &\;\;&
    \text{\footnotesize by \cref{PointsOfConcretification}
    }
    \\
    & \simeq
    \bracket[{
      \flat 
      \mathbf{Phs}\bracket({
        X^d;
        \mathcal{A}
      })
    }]_0
    &&
    \text{ \footnotesize
      by \cref{FlatCommutesWith0Truncation}
    }
    \\
    & \simeq
    H^1_{\mathrm{dff}}\bracket({
      X^d;
      \mathcal{A}
    })
    &&
    \text{\footnotesize 
      by \cref{DifferentialNonabelianCohomology}.
    }
  \end{alignedat}
\end{equation}

\paragraph
{Restriction to Solitonic Fields}

In variation of the construction \cref{PhaseSpaceAsPullback}, we may restrict to solitonic fields whose flux densities $\vec B$ on the Cauchy surface $X^d$ vanish at infinity \cref{SolitonicOnShellFluxDensities}:
\begin{equation}
  \inlinetikzcd{
    \vec B
    \in
    \mathbf{\Omega}^1_{
      \mathrm{cl},
      \mathcolor{purple}{\mathrm{c}}
    }
    \bracket({
      X^d;\mathfrak{a}
    })
    \ar[
      r,
      hook
    ]
    \&
    \mathbf{\Omega}^1_{\mathrm{cl}}
    \bracket({
      X^d;\mathfrak{a}
    })
    \mathrlap{\,.}
  }
\end{equation}
By \cref{OnShapeOfSolitonicFluxesIsPointedMaps}, the shape of these compactly supported flux moduli is the pointed mapping space out of the one-point compactification $X^d_{\cpt}$ of phase space. Therefore, the variant of the phase space \cref{PhaseSpaceAsPullback} for solitonic fields is the following pullback:
\begin{equation}
  \begin{tikzcd}[row sep=15pt, 
    column sep=30pt
  ]
    \mathbf{Phs}_{\mathrm{c}}\bracket({
      X^d;
      \mathcal{A}
    })
    \ar[r]
    \ar[d]
    \ar[
      dr,
      phantom,
      "{
        \lrcorner
      }"{pos=.1}
    ]
    &
    \mathbf{Map}^\ast\bracket({
      X^d_{\cpt}, 
      \mathcal{A}
    })
    \ar[
      d,
      "{
        \mathbf{ch}^{\mathcal{A}}
      }"
    ]
    \\
    \mathbf{\Omega}^1_{%
      \mathrm{cl},
      {\mathcolor{purple}{\mathrm{c}}}
    }
    \bracket({
      X^d;
      \mathfrak{a}
    })
    \ar[
      r,
      "{
        \eta^{\shape}
      }"
    ]
    &
    \shape
    \mathbf{\Omega}^1_{%
      \mathrm{cl},
      {\mathcolor{purple}{\mathrm{c}}}
    }
    \bracket({
      X^d;
      \mathfrak{a}
    })
    \mathrlap{\,.}
  \end{tikzcd}
\end{equation}
The shape of this solitonic phase space is thus found, analogous to \cref{ShapeOfPhaseSpace}, to be the space of pointed maps into the classifying space:
\begin{equation}
\label
{ShapeOfSolitonicPhaseSpace}
  \shape 
  \mathbf{Phs}_{\mathrm{c}}\bracket({
    X^d;
    \mathcal{A}
  })
  \simeq
  \mathbf{Map}^\ast\bracket({
    X^d_{\cpt},
    \mathcal{A}
  })
  \mathrlap{\,.}
\end{equation}

\subsubsection
{The Observables}

\paragraph
{Ordinary Observables}
An ordinary \emph{observable} is (cf. \parencites[\S 4]{SimmsWoodhouse1976}[\S 1.1]{Kirillov1990}) a smooth map from phase space to the complex numbers, sending on-shell field configurations $\Phi$ to the given value $\mathscr{O}\bracket({\Phi})$ that is observed/measured:
\begin{equation}
\label
{OrdinaryObservable}
  \begin{tikzcd}[row sep=-3pt, 
    column sep=0pt
  ]
    \mathbf{Phs}\bracket({
      X^d;
      \mathcal{A}
    })
    \ar[
      rr,
      "{
        \mathscr{O}
      }"
    ]
    &&
    \mathbb{C}
    \\
    \Phi 
      &\longmapsto&
    \mathscr{O}\bracket({\Phi})\,.
  \end{tikzcd}
\end{equation}
For definiteness: By
\begin{equation}
  \mathbb{C}
  \in
  \inlinetikzcd{
    \mathrm{Alg}\bracket({
    \mathrm{SmthMfd}
    })
    \ar[r, hook]
    \&
    \mathrm{Alg}\bracket({
      \mathrm{SmthGrpd}_\infty
    })
  },
\end{equation}
we mean the complex numbers with their canonical smooth structure as the Euclidean space $\mathbb{R}^2$.
Being a smooth manifold, as a gauged smooth set, the complex numbers are therefore both concrete \cref{Concreteness} as well as 0-truncated \cref{ZeroTruncationModality}: 
\begin{equation}
  \begin{aligned}
    \mathbb{C}
    & 
    \simeq 
    \sharp_1 \mathbb{C}
    \\
    \mathbb{C}
    & \simeq
    \bracket[{\mathbb{C}}]_0
    \mathrlap{\,.}
  \end{aligned}
\end{equation}
But this immediately implies, by the universal properties \cref{UniversalPropertyOfConcretification,UniversalPropertyOfZeroTruncation}, that ordinary observables \cref{OrdinaryObservable} factor through the reduced phase space \cref{TheReducedPhaseSpace}:
\begin{equation}
\label
{ObservableOnPhaseSpaceFactor}
  \begin{tikzcd}
    \mathbf{Phs}\bracket({
      X^d;
      \mathcal{A}
    })
    \ar[
      r
    ]
    \ar[
      rr,
      uphordown,
      "{
        \mathscr{O}
      }"{description}
    ]
    &
    \sharp_1
    \bracket[{
      \mathbf{Phs}\bracket({
        X^d;
        \mathcal{A}
      })    
    }]_0
    \ar[
      r,
      dashed
    ]
    &
    \mathbb{C}
  \end{tikzcd}
\end{equation}
This is exactly the expected/desired physical property of ordinary observables: They are gauge invariant (and send smooth families of field configurations to smooth families of observed numbers).

\paragraph
{Topological Observables}
An observable \cref{OrdinaryObservable} is \emph{topological} (physics jargon for: \emph{homotopical}) if it factors through the shape of the phase space. 
\begin{equation}
  \begin{tikzcd}
    \mathbf{Phs}\bracket({
      X^d;
      \mathcal{A}
    })
    \ar[
      r,
      "{
        \eta^{\shape}
      }"
    ]
    \ar[
      rr,
      uphordown,
      "{
        \mathscr{O}_{\mathrm{top}}
      }"{description}
    ]
    &
    \shape
    \mathbf{Phs}\bracket({
      X^d;
      \mathcal{A}
    })
    \ar[
      r,
      dashed
    ]
    &
    \mathbb{C}
    \mathrlap{\,.}
  \end{tikzcd}
\end{equation}
Combined with the implied gauge invariance \cref{ObservableOnPhaseSpaceFactor} 
and with the characterization \cref{0TruncOfShapeOfPhaseSpaceIsCohomology} of the shape of phase space,
this means that ordinary topological observables are functions of the nonabelian cohomology of the Cauchy surface with coefficients in $\Omega\mathcal{A}$, hence of the set of topological charges:
\begin{equation}
  \begin{tikzcd}
    \mathbf{Phs}\bracket({
      X^d;
      \mathcal{A}
    })
    \ar[
      r,
      "{
        \eta^{\shape}
      }"
    ]
    \ar[
      rrr,
      uphordown,
      "{
        \mathscr{O}_{\mathrm{top}}
      }"{description}
    ]
    &
    \shape
    \mathbf{Phs}\bracket({
      X^d;
      \mathcal{A}
    })
    \ar[
      r,
      "{
        \eta^{[-]_0}
      }"
    ]
    &
    H^1\bracket({
      X^d;
      \Omega\mathcal{A}
    })
    \ar[
      r,
      dashed
    ]
    &
    \mathbb{C}
    \mathrlap{\,.}
  \end{tikzcd}
\end{equation}

Moreover, \emph{realistic} observables (those measurable in experiment) will have compact support on phase space, whence realistic topological observables are actually compactly supported functions on the connected components of the shape of phase space, as such identified with the 0-\emph{homology} of phase space:
\begin{equation}
\label{RealisticTopologicalObservables}
  \begin{aligned}
  \mathrm{TopObs}_0\bracket({
    X^d; \mathcal{A}
  })
  & 
  \simeq
  H_0\bracket({
    \shape \mathbf{PhsSp}\bracket({
      X^d; \mathcal{A}
    })
    ;\mathbb{C}
  })
  \\
  & 
  \simeq
  H_0\bracket({
    \mathrm{Map}\bracket({
      X^d; \mathcal{A}
    })
    ;\mathbb{C}
  })
  \mathrlap{\,.}
  \end{aligned}
\end{equation}

Or rather, restricting to solitonic field configurations \cref{ShapeOfSolitonicPhaseSpace} whose flux densities vanish at infinity \cref{SolitonicOnShellFluxDensities} along the Cauchy surface, the realistic topological observables \cref{RealisticTopologicalObservables} constitute homology of the pointed mapping space \cref{ShapeOfSolitonicPhaseSpace} out of the one-point compactification of the Cauchy surface:
\begin{equation}
\label
{RealisticSolitonicTopologicalObservables}
  \mathrm{SolTopObs}_0\bracket({
    X^d;
    \mathcal{A}
  })
  =
  H_0\bracket({
    \mathrm{Map}^\ast\bracket({
      X^d_{\cpt},
      \mathcal{A}
    });
    \mathbb{C}
  })
  \mathrlap{\,.}
\end{equation}

\paragraph
{Topological Light Cone Quantization}
Consider then the case that spacetime is not just globally hyperbolic but of the form
\begin{equation}
\label
{LightFrontSpacetime}
  X^{1,d}
  \simeq
  \mathbb{R}^{1,1}
  \times
  X^{d-1}
  \mathrlap{\,,}
\end{equation}
whence the Cauchy surfaces are of the form
\begin{equation}
  X^d 
  \simeq
  \mathbb{R}^1 
  \times
  X^{d-1}
  \mathrlap{\,.}
\end{equation}
On such spacetimes one may naturally consider \emph{light-front evolution} (\cite[\S 5]{Dirac1949}, cf. \cite{WanPowis1994}).

Remarkably, the realistic solitonic topological observables \cref{RealisticSolitonicTopologicalObservables} on such spacetimes \cref{LightFrontSpacetime} form the homology of a based loop space (cf. \cite[(18)]{SS24-Obs})
\begin{equation}
\label
{SolitonicTopologicalObservables}
  \begin{aligned}
  \mathrm{SolTopObs}_0\bracket({
    X^d;
    \mathcal{A}
  })
  &
  \simeq
  H_0\bracket({
    \mathrm{Map}^\ast\bracket({
      \bracket({
        \mathbb{R}^1 
        \times
        X^{d-1}
      })_{\cpt},
      \mathcal{A}
    });
    \mathbb{C}
  })
  \\
  & \simeq
  H_0\bracket({
    \Omega\, 
    \mathrm{Map}^\ast\bracket({
      X^{d-1}_{\cpt},
      \mathcal{A}
    });
    \mathbb{C}
  })
  \end{aligned}
\end{equation}
and as such inherit \cite[(20)]{SS24-Obs} a (generally) non-commutative (star-)algebra structure induced by loop concatenation (and reversal), the \emph{Pontrjagin algebra} structure:
\begin{equation}
  \begin{tikzcd}
    \mathrm{SolTopObs}_0
    \otimes_{{}_{\mathbb{C}}}
    \mathrm{SolTopObs}_0  
    \ar[
      r,
      "{
        \mathrm{conc}_\ast
      }"
    ]
    &
    \mathrm{SolTopObs}_0 
    \mathrlap{\,.}
  \end{tikzcd}
\end{equation}
As such, $\mathrm{SolTopObs}_0$ has the form of an algebra of quantum observables%
\footnote{
Beware that usual ``canonical'' deformation quantization of symplectic/Poisson structures cannot see nontrivial quantum algebra structure of topological observables, as these are locally constant functions on phase space and as such have vanishing Poisson brackets. 
}, which one may understand as the result of \emph{light-front quantization} \cite[\S 2.2]{SS25-Complete} in the topological sector.

Concretely, with the restriction to homology in degree 0, this is the group algebra of the fundamental group of the pointed maps from $X^{d-1}_{\cpt}$ :
\begin{equation}
\label
{SolitonicTopologicalObservablesAsGroupAlgebra}
  \mathrm{SolTopObs}_0
  \bracket({
    X^d; \mathcal{A}
  })
  \simeq
  \mathbb{C}\bracket[{
    \pi_1
      \mathrm{Map}^\ast\bracket({
        X^{d-1}_{\cpt},
        \mathcal{A}
      })
  }]
  \mathrlap{\,.}
\end{equation}

\section
{Enhancements}

We briefly indicate some natural further enhancements of the above constructions. 

For brevity we omit the important topic of higher gauge fields on \emph{orbifolds}, flux-quantized in twisted \emph{equivariant} nonabelian cohomology theory. For more on that see \cite{SS25-TEC,SS26-Orb}.

\subsection
{Sources and Branes}
\label
{OnSourcesAndBranes}

\paragraph
{Maxwell Theory with Sources}
The classical Maxwell equations of course generally have a source term $J_3$, the electric current density 3-form, and are as such traditionally written as:
\begin{equation}
\label
{TraditionalMaxwellEquationsWithSource}
  \mbox{``$
    \mathrm{d}\, F_2
    = 
    0
    \,,\;\;
    \mathrm{d}\, G_2
    = J_3
    \mathrlap{\,,}
    \;\;
    \star\, F_2 = G_2
  $''}
\end{equation}
Here we have put this traditional expression in quotation marks because it glosses over the important subtlety that the topological class of $J_3$ is on a different footing compared to that of $F_2$ and $G_2$: 
\begin{enumerate}
\item
The electric current density $J_3$ is meant to be integrable over the Cauchy surface, its integral being the net number of electrons.

\item Its class is not in plain cohomology but subject to coboundaries that preserve this integral, and hence the net number of electrons.

\item The Maxwell equation $\mathrm{d}\, G_2 = J_3$ is meant to apply after forgetting these integrability conditions, so that it trivializes only the plain de Rham class of $J_3$, without forcing the net number of electrons to vanish.
\end{enumerate}

We may mathematically implement these desiderata by considering the \emph{one-point compactification} $X^d_{\cpt}$ of the Cauchy surface, assuming for simplicity that this still has the structure of a smooth manifold (for instance $\mathbb{R}^3_{\cpt} \simeq S^3$).  
Then we have a canonical smooth map
\begin{equation}
  \begin{tikzcd}
    X^d 
    \ar[
      r,
      hook,
      "{ \phi }"
    ]
    &
    X^d_{\cpt}
  \end{tikzcd}
\end{equation}
with $F_2, G_2 \in \Omega^2_{\mathrm{dR}}\bracket({X^d})$ while $J_3 \in \Omega^3_{\mathrm{dR}}\bracket({X^d_{\cpt}})$, and with the would-be Maxwell equations \cref{TraditionalMaxwellEquationsWithSource} really reading as follows:
\begin{equation}
\label
{ProperMaxwellEquationsWithSources}
  \begin{aligned}
    \mathrm{d}\, F_2
    & = 
    0,
    \\
    \mathrm{d}\, G_2
    & = 
    \mathcolor{purple}{\phi^\ast} 
    J_3,
    \\
    \mathrm{d}\, J_3 & = 0
    \mathrlap{\,.}
  \end{aligned}
  \;\;\;\;\;\;\;
  \star\, F_2 = G_2
  \mathrlap{\,.}
\end{equation}
Such a system of EoMs no longer has a characteristic $L_\infty$-algebra, but a \emph{fibration} of such:
\begin{equation}
  \begin{tikzcd}[
    ampersand replacement=\&,
    column sep=-2pt
  ]
    \mathfrak{a}
    \ar[
      dd,
      ->>,
      "{ 
        \mathfrak{l}\wp 
      }"
    ]
    \&[10pt]
    \mathrm{CE}({
      \mathfrak{a}
    })
    \& := \&
    \mathbb{R}_{_{\mathrm{d}}}
    \left[
      \begin{aligned}
        & f_2
        \\
        & g_2
        \\
        & j_3
      \end{aligned}
    \right]
    \big/
    \left(
      \begin{aligned}
        \mathrm{d}\, f_2 & = 0
        \\
        \mathrm{d}\, g_2 & = j_3
        \\
        \mathrm{d}\, j_3 & = 0
      \end{aligned}
    \right)
    \\
    \\
    \mathfrak{b}
    \&
    \mathrm{CE}({
      \mathfrak{b}
    })
    \ar[
      uu,
      hook,
      "{ 
        \mathfrak{l}\wp^\ast 
      }"{swap},
    ]
    \& := \&
    \mathbb{R}_{_{\mathrm{d}}}
    [
      j_3
    ]
    \big/
    (
      \mathrm{d}\, j_3 = 0
    )
    \mathrlap{\,,}
    \;\,
    \ar[
      uu,
      shift left=28pt,
      hook,
      "{
        \substack{
          j_3
          \\
          \rotatebox[origin=c]{90}{
            $\mapsto$
          }
          \\
          j_3
        }
      }"{swap}
    ]
  \end{tikzcd}
\end{equation}
in that germs of solutions of Maxwell's equations with sources \cref{ProperMaxwellEquationsWithSources} are equivalent to commuting squares of this form:
\begin{equation}
  \begin{tikzcd}[row sep=20pt]
    X^d
    \ar[
      d,
      "{ \phi }",
      hook
    ]
    \ar[
      r,
      dashed,
    ]
    &
    \mathbf{\Omega}^1_{\mathrm{cl}}
    (\ast; \mathfrak{a})
    \ar[
      d,
      ->>,
      "{
        \wp_\ast
      }"
    ]
    \\
    X^d_{\cpt}
    \ar[
      r,
      dashed
    ]
    &
    \mathbf{\Omega}^1_{\mathrm{cl}}
    (\ast; \mathfrak{b})
    \mathrlap{\,.}
  \end{tikzcd}
\end{equation}

\paragraph
{Background Fluxes}

Generally, given 
\begin{enumerate}
\item 
  an embedding of smooth manifolds
  \begin{equation}
    \begin{tikzcd}
      \Sigma^p
      \ar[
        r, 
        "{ \phi }"
      ]
      &
      X^d\,,
    \end{tikzcd}
  \end{equation}

\item 
  a fibration of finite-type $L_\infty$-algebras
  \begin{equation}
    \begin{tikzcd}
      \mathfrak{a}
      \ar[
        r,
        ->>,
        "{ \mathfrak{l}\wp }"
      ]
      &
      \mathfrak{b}\,,
    \end{tikzcd}
  \end{equation}
\end{enumerate}
then a dashed map completing a commuting diagram of the form
\begin{equation}
\label
{RelativeTwistedBianchiIdentitiesAsDiagram}
  \begin{tikzcd}
    \Sigma^p
    \ar[
      rr,
      dashed
    ]
    \ar[
      d,
      hook,
      "{ \phi }"
    ]
    &&
    \mathbf{\Omega}^1_{\mathrm{cl}}({
      \ast;
      \mathfrak{a}
    })
    \ar[
      d,
      ->>,
      "{
        \scaledbracket({
          \mathfrak{l}\wp
        })_\ast
      }"
    ]
    \\
    X^d
    \ar[rr]
    &&
    \mathbf{\Omega}^1_{\mathrm{cl}}({
      \ast;
      \mathfrak{b}
    })
  \end{tikzcd}
\end{equation}
encodes on-shell flux densities on $\Sigma^p$ whose Bianchi identities may be \emph{twisted} by \emph{background fluxes} on $X^d$.

Examples:
\begin{enumerate}

\item \emph{Massive RR-Fluxes Twisted by NS-Fluxes:}
\begin{equation}
  \begin{tikzcd}[
    column sep=small
  ]
    X^{9}
    \ar[
      d,
      hook
    ]
    \ar[
      rr,
      dashed,
      "{
        F_{2\bullet}
      }"
    ]
    &&
    \mathbf{\Omega}^1_{\mathrm{cl}}\bracket({
      \ast;
      \mathfrak{l}\bracket({
        \mathrm{KU}_0 \sslash B \mathrm{U}(1)
      })
    })
    \ar[
      d,
      ->>
    ]
    \\
    X^9_{\cpt}
    \ar[
      rr,
      "{
        H_3
      }"
    ]
    &&
    \mathbf{\Omega}^1_{\mathrm{cl}}\bracket({
      \ast; 
      \mathfrak{l}B^2 \mathrm{U}(1)
    })
  \end{tikzcd}
  \;\;\;\;\;
  \begin{aligned}
    \mathrm{d}\,
    F_{2\bullet}
    & =
    F_{2\bullet - 2} \wedge 
    \phi^\ast H_3
    \\[28pt]
    \mathrm{d}\, H_3 & = 0\,.
  \end{aligned}
\end{equation}

\item
\emph{Self-Dual Flux on M5 Twisted By C-Field Flux:}
\begin{equation}
\label
{FluxOnM5Branes}
  \begin{tikzcd}
    \Sigma^5
    \ar[
      d,
      hook,
      "{ \phi }"
    ]
    \ar[
      rr,
      "{
        H_3
      }"
    ]
    &&
    \mathbf{\Omega}^1_{\mathrm{cl}}\bracket({
      \ast;
      \mathfrak{l}_{_{S^4}}
      S^7
    })
    \ar[
      d,
      ->>,
      "{
        \scaledbracket({
          \mathfrak{l}h_{\mathbb{H}}
        })_\ast
      }"
    ]
    \\
    X^{10}
    \ar[
      rr,
      "{ (G_4, G_7) }"
    ]
    &&
    \mathbf{\Omega}^1_{\mathrm{cl}}\bracket({
      \ast;
      \mathfrak{l}S^4
    })
  \end{tikzcd}
  \;\;\;\;\;\;\;
  \begin{aligned}
    \mathrm{d}\, H_3
    & =
    \phi^\ast G_4
    \\
    \\
    \mathrm{d}\, G_4 & = 0
    \\
    \mathrm{d}\, G_7 
      & = 
    \tfrac{1}{2} G_4 \wedge G_4
    \mathrlap{\,.}
  \end{aligned}
\end{equation}

\end{enumerate}

\paragraph
{Twisted Relative Gauge Fields}

Our construction above of globally completed higher gauge fields carries over straightforwardly to this case of relative Bianchi identities twisted by background fluxes, simply by passing to the \emph{arrow category} of $\mathrm{SmthGrpd}_\infty$: 

The choice of classifying space for (now: twisted relative) flux quantization is enhanced to a choice of fibration 
\begin{equation}
  \begin{tikzcd}
    \mathcal{A}
    \ar[
      r,
      "{ \wp }"
    ]
    &
    \mathcal{B}
    \mathrlap{\,,}
  \end{tikzcd}
\end{equation}
whose \emph{relative Whitehead $L_\infty$-algebra}
\begin{equation}
  \begin{tikzcd}
    \mathfrak{l}_{_{\mathcal{B}}}
    \mathcal{A}
    \ar[
      r,
      "{
        \mathfrak{l}\wp
      }"
    ]
    &
    \mathfrak{l}\mathcal{B}
  \end{tikzcd}
\end{equation}
characterizes the twisted Bianchi identities as above \cref{RelativeTwistedBianchiIdentitiesAsDiagram}.

Now given such a pair of maps $\phi$, $\wp$, consider their \emph{twisted relative mapping space}
\begin{equation}
  \mathbf{Map}(\phi, \wp)
  :=
  \mathbf{Map}\bracket({
    \Sigma;
    \mathcal{A}
  })
  \underset
    {
      \mathbf{Map}\scaledbracket({
        \Sigma, \mathcal{B}
      })
    }
    {\times}
  \mathbf{Map}\bracket({
    X;
    \mathcal{B}
  })
  =
  \left\{
  \begin{tikzcd}[row sep=small]
    \Sigma
    \ar[
      r,
      dashed
    ]
    \ar[
      d,
      "{ \phi }"
    ]
    &
    \mathcal{A}
    \ar[
      d,
      "{ \wp }"
    ]
    \\
    X
    \ar[
      r,
      dashed
    ]
    &
    \mathcal{B}
  \end{tikzcd}
  \right\}
  \mathrlap{,}
\end{equation}
and similarly the \emph{twisted relative flux moduli}
\begin{equation}
\label
{TwistedRelativeOnShellFluxDensities}
  \mathbf{\Omega}^1_{\mathrm{cl}}\bracket({
    \phi;
    \mathfrak{l}\wp
  })
  :=
  \mathbf{\Omega}^1_{\mathrm{cl}}
  \bracket({
    \Sigma;
    \mathfrak{l}_{_{\mathcal{B}}}
    \mathcal{A}
  })
  \underset
    {
      \mathbf{\Omega}^1_{\mathrm{cl}}\scaledbracket({
        \Sigma; 
        \mathfrak{l}\mathcal{B}
      })
    }
    {\times}
  \mathbf{\Omega}^1_{\mathrm{cl}}
  \bracket({
    X;
    \mathfrak{l}\mathcal{B}
  })
  =
  \left\{
  \begin{tikzcd}
    \Sigma
    \ar[
      r,
      dashed
    ]
    \ar[
      d,
      "{ \phi }"
    ]
    &
    \mathbf{\Omega}^1_{\mathrm{cl}}\bracket({
      \ast;
      \mathfrak{l}_{_{\mathcal{B}}}
      \mathcal{A}
    })
    \ar[
      d,
      "{
        \scaledbracket({\mathfrak{l}\wp})_\ast
      }"
    ]
    \\
    X
    \ar[r]
    &
    \mathbf{\Omega}^1_{\mathrm{cl}}\bracket({
      \ast;
      \mathfrak{l}\mathcal{B}
    })    
  \end{tikzcd}
  \right\}
  \mathrlap{.}
\end{equation}

Applying rationalization componentwise, we obtain the twisted relative version of the differential character map and from this the twisted relative version of the phase space,  as the homotopy pullback:
\begin{equation}
  \begin{tikzcd}[column sep=large]
    \mathbf{Phs}(\phi;\wp)
    \ar[r]
    \ar[d]
    \ar[
      dr,
      phantom,
      "{ \lrcorner }"{pos=.1}
    ]
    &
    \mathbf{Map}(\phi; \wp)
    \ar[
      d,
      "{
        \mathbf{ch}^{\wp}
      }"
    ]
    \\
    \mathbf{\Omega}^1_{\mathrm{cl}}
    (\phi; \mathfrak{l}\wp)
    \ar[
      r,
      "{
        \eta^{\shape}
      }"
    ]
    &
    \shape
    \mathbf{\Omega}^1_{\mathrm{cl}}
    (\phi; \mathfrak{l}\wp)
  \end{tikzcd}
\end{equation}
(We are using here that shape preserves the fiber product \cref{TwistedRelativeOnShellFluxDensities}, by \cref{ShapePreservesFiberProductsWithSingFibrantMaps}, since  $\mathfrak{l}\wp$ is a fibration.)

For more details on such twisted (differential nonabelian) cohomology see \cite[\S 3 \& \S V]{FSS23-Char}. For early discussion of applications to string/M-theory see \cite{Sati2011}.

\subsection
{CS/BF-Like Fields}
\label{OnChernSimonsType}

Above we discussed equations of motion of higher Maxwell-type. Here we generalize to the situation where some of the flux densities $F^{(i)}$ in addition satisfy the simple but consequential equation of motion of the form:
\[
  F^{(i)} = 0
  \mathrlap{\,.}
\]
Such equations of motion famously hold in  abelian Chern-Simons theories of $p$-form fluxes in $D=2p-1$, and higher BF-theories.

It is a truism that such EoM imply a \emph{topological field theory}, with the entire field content consisting of the gauge potentials and their topological charges. 
But we are now positioned to accurately say what this means globally, particularly when there is coupling to further non-topological fields.

The relevant generalization of the above discussion is straightforward: In \cref{OnThePhaseSpace} we bluntly took the map from the moduli of on-shell flux densities to the rationalized classifying space to be the canonical one, the shape unit
\begin{equation}
  \begin{tikzcd}
    \mathbf{\Omega}^1_{\mathrm{cl}}
    \bracket({
      X^d;
      \mathfrak{a}
    })
    \ar[
      r,
      "{ \eta^{\shape} }"
    ]
    &
    \shape
    \mathbf{\Omega}^1_{\mathrm{cl}}
    \bracket({
      X^d;
      \mathfrak{a}
    })   
    \mathrlap{\,.}
  \end{tikzcd}
\end{equation}
But in order to construct a phase space 
along the lines of \cref{LocalFluxQuantization}, all it takes is any map of this form.

In particular, we may consider that the characteristic $L_\infty$-algebra of the Maxwell-type higher gauge fields is not actually isomorphic to $\mathfrak{l}\mathcal{A}$, but just equipped with a morphism
\begin{equation}
\label
{MorphismFromCharacteristTOLa}
  \begin{tikzcd}
    \mathfrak{a}
    \ar[
      r,
      "{ \prime }"
    ]
    &
    \mathfrak{l}\mathcal{A}\,.
  \end{tikzcd}
\end{equation}
Forming the corresponding homotopy pullback
immediately entails that generators in $\mathfrak{l}\mathcal{A}$ that are not in the image of \cref{MorphismFromCharacteristTOLa} correspond to gauge fields whose gauge potentials do appear in the phase space, but whose flux densities vanish.

In this case the conclusion \cref{ShapeOfPhaseSpace} may no longer hold, and the 0-truncated phase space may coincide with plain $\mathcal{A}$-cohomology only up to rational corrections. 

A physical application of this situation is discussed in \cite{SS26-BBC}.

\section
{Examples and Applications}
\label
{ExamplesAndApplications}

\subsection
{11D/10D Supergravity}
\label
{On11D10DSupergravity}

For (the higher gauge sectors of) supergravity (SuGra) theories descending from 11D SuGra, there is a ``covariant'' form of IR-completion (not requiring a choice of Cauchy surface). We briefly indicate how this works, dedicated lecture notes on this topic are in \cite{GSS26-SuGra}. 

\paragraph
{M-Flux Quantization in 4-Cohomotopy}

We have seen that the characteristic $L_\infty$-algebra of 11D supergravity is $\mathfrak{a} \defneq \mathfrak{l}S^4$. This means that one admissible flux quantization law in 11D (in fact the minimal one, in numbers of cells) is \parencites[\S 2.5]{Sati2018}{GSS24-SuGra} the actual 4-sphere
\begin{equation}
  \mathcal{A} \defneq \shape S^4
  \mathrlap{\,,}
\end{equation}
which is the classifying space for \emph{4-Cohomotopy} cohomology theory (cf. \parencites[\S VII]{STHu59}[Ex. 2.7]{FSS23-Char}):
\begin{equation}
  \pi^4(X)
  \simeq
  H^1\bracket({
    X;
    \Omega S^4
  })
  \mathrlap{\,.}
\end{equation}

This may naturally be twisted by $\mathrm{Sp}(2)$-tangential structure, in which guise it implies \cite{FSS20-H} several subtle topological conditions expected in M-theory (cf. \cite{Witten1997Flux,Witten1997Fivebrane}), such as:
\begin{itemize}
  \item 
  \cite[Prop. 3.13]{FSS20-H}:
  the $\tfrac{1}{4}p_1$-shifted half-integral flux quantization of $G_4$,

  \item 
  \cite{FSS21-Hopf,SS21-M5Anomaly}:
  the cancellation of M5-brane anomalies.
\end{itemize}

\paragraph
{Double Dimensional Reduction}

The operation of \emph{double dimensional reduction} along principal circle fiber bundles is implemented on flux quantized higher gauge fields by \emph{cyclification} of their classifying spaces (\parencites[\S 2.2]{BMSS2019}[\S 2.2]{SS24-Cyc}):
\begin{equation}
  \mathrm{Cyc}\bracket({
    \mathcal{A}
  })
  :=
  \mathbf{Map}\bracket({
    S^1, 
    \mathcal{A}
  })
  \sslash 
  S^1
\end{equation}
in that for $S^1$-principal bundles
\begin{equation}
  \begin{tikzcd}[
    row sep=small, 
    column sep=large
  ]
    X^{10}
    \ar[
      d
    ]
    \ar[
      dr,
      phantom,
      "{ \lrcorner }"{pos=.1}
    ]
    \ar[r]
    &
    \ast
    \ar[d]
    \\
    X^9
    \ar[
      r,
      "{
        c_1
      }"
    ]
    &
    B S^1
  \end{tikzcd}
\end{equation}
there is a natural equivalence of classifying maps
\begin{equation}
  \big\{
  \begin{tikzcd}
    X^{10}
    \ar[
      r,
      dashed
    ]
    &
    \mathcal{A}
  \end{tikzcd}
  \big\}
  \;\;\simeq\;\;
  \left\{
  \begin{tikzcd}[
    column sep=15pt,
    row sep=-1pt
  ]
    X^{9}
    \ar[
      rr,
      dashed
    ]
    \ar[
      dr,
      "{ c_1 }"{swap}
    ]
    &&
    \mathrm{Cyc}\bracket({\mathcal{A}})
    \ar[dl]
    \\
    &
    B S^1
  \end{tikzcd}
  \right\}
\end{equation}
with compatible rationalization.
In particular, the characteristic $L_\infty$-algebra (\cref{CharacteristicLInfinityAlgebra})  of actual type IIA supergravity \cref{EoMOfIIA} is \parencites[\S 3]{FSS17-Sphere}[Ex. 2.5]{SatiVoronov2024}[Ex. 3.10]{GSS25-TD} the Whitehead bracket $L_\infty$-algebra of the cyclic loop space of the 4-sphere:
\begin{equation}
\label
{lCycS4}
  \mathfrak{a}
  \defneq
  \mathfrak{l}\mathrm{Cyc}\bracket({S^4})
  \mathrlap{\,,}
  \;\;\;\;
  \mathrm{CE}\bracket({
    \mathfrak{l}
    \mathrm{Cyc}\bracket({S^4})
  })
  \simeq
  \mathbb{R}_{_{\mathrm{d}}}
  \left[
  \begin{aligned}
    & f_2
    \\
    & f_4
    \\
    & f_6
    \\
    & h_3
    \\
    & h_7
  \end{aligned}
  \right]
  \Big/
  \left(
  \begin{aligned}
    \mathrm{d}\, f_2
    & = 0
    \\
    \mathrm{d}\, f_4
    & = h_3 \wedge f_2
    \\
    \mathrm{d}\, f_6
    & = h_3 \wedge f_4
    \\
    \mathrm{d}\, h_3 & = 0
    \\
    \mathrm{d}\, h_7 & = 
    \tfrac{1}{2}
    f_4 f_4 - f_2 f_6
  \end{aligned}
  \right)
  \mathrlap{.}
\end{equation}

\paragraph
{IIA Flux Quantization in Twisted Unstable K-Theory}
Another admissible flux quantization law in 11D turns out to be $\mathcal{A} \defneq B \widehat{\mathrm{SU}(2)}$, the homotopy fiber of the square of the second Chern class $c_2$ on $B \mathrm{SU}(2)$:
\begin{equation}
  \begin{tikzcd}[row sep=small, column sep=large]
    B \widehat{\mathrm{SU}(2)}
    \ar[r]
    \ar[d]
    \ar[
      dr,
      phantom,
      "{ \lrcorner }"{pos=.1}
    ]
    &
    \ast
    \ar[d]
    \\
    B \mathrm{SU}(2)
    \ar[
      r,
      "{ (c_2)^2 }"
    ]
    & 
    B^8 \mathbb{Z}
    \mathrlap{\,.}
  \end{tikzcd}
\end{equation}
The canonical comparison map
induces isomorphisms on $\pi_{\leq 7}$:
\begin{equation}
  \begin{tikzcd}
    S^4 
    \ar[
      r, 
      "{
        \sim_{\leq 7}
      }"
    ]
    &
    B \widehat{\mathrm{SU}(2)}
    \mathrlap{\,.}
  \end{tikzcd}
\end{equation}

Under double dimensional reduction, this entails a flux quantization of type IIA supergravity in $\mathrm{Cyc}\bracket({ B \widehat{\mathrm{SU}(2)} })$, which receives a canonical map
\begin{equation}
  \begin{tikzcd}
    B \widehat{\mathrm{U}(2)}
    \ar[r]
    &
    \mathrm{Cyc}\bracket({
      B \widehat{\mathrm{SU}(2)}
    })
    \mathrlap{\,,}
  \end{tikzcd}
\end{equation}
exhibiting it as a twisted deformation of ``unstable K-theory'' \cite{BaSS26-UnstableK}. 

This yields a consistent global completion of actual (instead of massive) 10D supergravity of type IIA which retains an (unstable, deformed) semblance of the famously conjectured flux quantization in twisted K-theory, compatible with oxidation to globally completed 11D supergravity.

\paragraph
{Superspace formulation}

The development of the higher gauge theory in \cref{OnTheGlobalPhaseSpace} capitalized on the fact (\cref{GaussLawAsLInfinityClosure}) that (germs of) solutions to higher Maxwell-type equations of motion are equivalent to closed $L_\infty$-valued forms when restricted to a Cauchy surface (solutions to the electromagnetic Gauss law constraints). Here it is the implicit time-evolution \cref{IsomorphismWithCauchyData} away from the Cauchy surface which absorbs the Hodge duality constraint from the higher Maxwell-type equations and leaves the purely cohomological $L_\infty$-closure condition that allows for cohomological flux quantization of the theory (\cref{OnNonabelianCharacter}).

Remarkably, for some supergravity theories there is an alternative \emph{covariant} form --- meaning: not requiring existence or choice of Cauchy surfaces ---  of this ``cohomologization by metric decoupling''. 

This requires formulating the theory on super-spacetime and constraining the fermionic form of the super-flux densities (cf. \cite[\S 2]{GSS24-SuGra} for relevant background on supergeometry):

Specifically, consider a super-spacetime $X^{({1,10\vert \mathbf{32}})}$ with super-torsion free super-coframe 
\begin{equation}
\label
{SuperCoframe}
  \left(
    \begin{aligned}
    &
    \bracket({
      E^a
      \in
      \Omega^{1,\mathrm{evn}}\bracket({
        X^{1,10\vert \mathbf{32}}
      })
    })_{a=0}^{10}, 
    \\
    &
    \bracket({
      \Psi^\alpha
      \in
      \Omega^{1,\mathrm{odd}}\bracket({
        X^{1,10\vert \mathbf{32}}
      })
    })_{\alpha=1}^{32}
    \end{aligned}
  \right)
\end{equation}
and promote the ordinary duality-symmetric flux density forms $G_4$ and $G_7$ \cref{11DSuGraEoM} to differential superforms $G^s_4$ and $G^s_7$ with these specific components:
\begin{equation}
\label
{11DSuperFluxDensities}
  \begin{aligned}
    G^s_4
    & 
    :=
    \tfrac{1}{4!}
    \bracket({
      G_4
    })_{a_1 \cdots a_4}
    E^{a_1}
    \cdots
    E^{a_4}
    +
    \tfrac{1}{2}\bracket({
      \overline{\Psi}
      \Gamma_{a_1 a_2}
      \Psi
    })
    E^{a_1}
    E^{a_2}
    \\
    G^s_7
    & 
    :=
    \tfrac{1}{7!}
    \bracket({
      G_7
    })_{a_1 \cdots a_7}
    E^{a_1}
    \cdots
    E^{a_7}
    +
    \tfrac{1}{5!}\bracket({
      \overline{\Psi}
      \Gamma_{a_1 \cdots a_5}
      \Psi
    })
    E^{a_1}
    \cdots
    E^{a_5}
    \mathrlap{\,.}
  \end{aligned}
\end{equation}
Then we have the following remarkable result, following \parencites{CremmerFerrara1980}{BrinkHowe1980}[\S 6]{CandielloLechner1994}[\S III.8.5]{CDF1991}:
\begin{proposition}[{\cite[Thm. 3.1]{GSS24-SuGra} reviewed in more detail as \cite[Thm. 3.16]{GSS26-SuGra}}]
Fields $E$ \textup{(graviton)}, $\Psi$ \textup{(gravitino)} and $G_4, G_7$ \textup{(duality-symmetric C-field)} satisfy the equations of motion of 11D Supergravity iff the corresponding super-flux densities \cref{11DSuperFluxDensities} are $\mathfrak{l}S^4$-closed, hence iff
\begin{equation}
  \bracket({
    G^s_4, G^s_7
  })
  \in
  \Omega^1_{\mathrm{cl}}\bracket({
    X^{1,10\vert \mathbf{32}};
    \mathfrak{l}S^4
  })
  \mathrlap{\,.}
\end{equation}
\end{proposition}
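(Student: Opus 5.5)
The plan is to unwind the $\mathfrak{l}S^4$-closure condition into the two superform equations $\mathrm{d}\,G^s_4 = 0$ and $\mathrm{d}\,G^s_7 = \tfrac{1}{2}G^s_4\wedge G^s_4$, using \cref{CEAlgebraOflS4}. Both sides of each equation are then expanded in the super-coframe \cref{SuperCoframe} by $(p,q)$-bidegree: $p$ powers of $E$ and $q$ powers of $\Psi$. The only input is the torsion constraint $\mathrm{d}E^a = \tfrac{1}{2}\overline{\Psi}\Gamma^a\Psi + (\text{spin connection terms})$. The gravitino field strength $\rho := D\Psi$ is a priori a general 2-form, with components of bidegree $(2,0)$, $(1,1)$ and $(0,2)$. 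Through the fermionic components of $G^s_4, G^s_7$, the closure conditions then produce a finite tower of equations, one per bidegree.

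The steps, in order, are as follows.

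\begin{enumerate}
\item In bidegree $(0,4)$ of $\mathrm{d}G^s_4$, and $(3,2)$ resp.\ $(1,4)$ type terms of the $G_7$-equation, the purely fermionic identities reduce to the Fierz identities $\bracket({\overline{\Psi}\Gamma_{ab}\Psi})\bracket({\overline{\Psi}\Gamma^a\Psi})=0$ and its 5-index analogue combined with $\tfrac12(\overline\Psi\Gamma_{ab}\Psi)^2$. These are exactly the 4-cocycle and 7-cochain relations of the M2/M5 brane scan on $\mathbb{R}^{1,10\vert\mathbf{32}}$, so they hold identically.
\item In bidegree $(2,2)$ and $(1,3)$, the vanishing constrains the $(1,1)$- and $(0,2)$-components of $\rho$. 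This forces $\rho = \rho_{ab}E^aE^b + H(G_4)\Psi\wedge E$, with $H$ the standard linear map built from $\Gamma$-matrices contracted with $G_4$ and with no $(0,2)$ part. This is the well-known "superspace constraint" solution of \cite{CandielloLechner1994,CDF1991}.
\item In bidegree $(4,1)$ and $(6,1)$, the vanishing identifies the fermionic derivatives of $G_4, G_7$ with $\Gamma$-contractions of $\rho_{ab}$. In $(5,0)$, $(8,0)$ it gives the ordinary Bianchi identities $\mathrm{d}G_4=0$, $\mathrm{d}G_7=\tfrac12 G_4^2$. The $(5,2)$ component, comparing the 5-index term of $G^s_7$ with $\star$ of the 2-index term of $G^s_4$, enforces the duality $G_7 = \star G_4$.
\item Feed the constrained $\rho$ back into the integrability (Bianchi) identity $D\rho = \tfrac14 R\,\Gamma\Psi$ of the super-coframe. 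Expanding again by bidegree, the $(2,1)$ component gives the gravitino equation $\Gamma^{abc}\rho_{bc}=0$. The $(1,2)$ component determines the Riemann curvature's fermionic parts, and its trace yields the Einstein equation with the $G_4$ stress tensor.
\item For the converse, run the same bidegree bookkeeping backwards. Given the supergravity equations, the standard on-shell superspace solution satisfies every component equation listed above.
\end{enumerate}

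The main obstacle is step 4, together with the supporting Fierz computations. The first thing I would try is to avoid recomputing these by citing the classical component analysis of \cite{CandielloLechner1994} and \cite[\S III.8.5]{CDF1991}. These show that the $G_4$-constraints already imply the full field equations; this includes Howe's observation that torsion constraints plus $\mathrm{d}G^s_4=0$ put the theory on shell. I would then only verify the genuinely new part, namely that adding $G^s_7$ with the stated fermionic component and imposing $\mathrm{d}G^s_7=\tfrac12 (G^s_4)^2$ is equivalent to the duality $G_7=\star G_4$ plus consistency. This is the $(5,2)$ component check above, and it is a single Fierz/duality identity $\Gamma_{a_1\cdots a_5}\propto\epsilon\,\Gamma^{b_1\cdots b_6}$.
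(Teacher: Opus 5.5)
Your route is the one behind the result the paper cites. You expand $\mathrm{d}G^s_4=0$ and $\mathrm{d}G^s_7=\tfrac12 G^s_4 G^s_4$ by $(E,\Psi)$-bidegree using only the torsion constraint, defer the $G_4$-sector to \cite{CDF1991,CandielloLechner1994}, and check separately that $G_7$-closure forces $G_7=\star G_4$. The paper itself gives no proof: it cites \cite[Thm.~3.1]{GSS24-SuGra}, and \cite[Lem.~3.3(iii)]{GSS24-SuGra} for the duality.

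However, your bidegree labels are wrong, and several steps could not be carried out as written. In your own convention ($p$ factors of $E$, $q$ of $\Psi$), every component of $\mathrm{d}G^s_4$ has $p+q=5$ and every component of $\mathrm{d}G^s_7$ has $p+q=8$. So $(0,4)$, $(2,2)$, $(1,3)$, $(6,1)$, $(5,2)$, and ``$(3,2)$, $(1,4)$ of the $G_7$-equation'' do not occur. The correct assignments are:
\begin{itemize}
\item the M2 Fierz identity $(\overline{\Psi}\Gamma_{ab}\Psi)(\overline{\Psi}\Gamma^a\Psi)E^b$ is the $(1,4)$-component, and the M5 one is the $(4,4)$-component;
\item $\rho^{(0,2)}=0$ and $\rho^{(1,1)}=H_a(G_4)\,\Psi E^a$ come from the $(2,3)$- and $(3,2)$-components of $\mathrm{d}G^s_4$;
\item the fermionic derivative of $G_7$ sits in $(7,1)$, and the duality check is the $(6,2)$-component.
\end{itemize}

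More substantively, step 4 swaps the two mixed components of $D\rho=\tfrac14R^{ab}\Gamma_{ab}\Psi$. Give $E$ mass dimension $-1$ and $\Psi$ dimension $-\tfrac12$. Then the $(1,2)$-component is an equation at dimension $\tfrac32$, involving $\rho_{ab}$, $R^{(1,1)}$, and $D_\gamma H_a\propto\Gamma\rho$ (from the $(4,1)$-component of $\mathrm{d}G^s_4$). The $(2,1)$-component is at dimension $2$, involving $D_\gamma\rho_{ab}$, $D_{[a}H_{b]}$, $H_{[a}H_{b]}$ and the Riemann tensor. Hence the Rarita--Schwinger equation arises from the $(1,2)$-component as its integrability condition; that component also fixes $R^{(1,1)}$, as you say. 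The Einstein equation comes from the $(2,1)$-component, e.g.\ by applying $D_\gamma$ to $\Gamma^{abc}\rho_{bc}=0$ and substituting for $D_\gamma\rho_{bc}$. No trace of the $(1,2)$-component can produce a dimension-$2$ equation.

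The new $G_7$-part is also more than the single identity $\Gamma_{a_1\cdots a_5}\propto\epsilon\,\Gamma^{b_1\cdots b_6}$:
\begin{itemize}
\item The $(6,2)$-component needs the explicit $H_a(G_4)$ and the Clifford decomposition of $\Gamma_{[a_1\cdots a_5}H_{a_6]}$. Its $\Gamma^{(1)}$-bilinear part gives $G_7=\star G_4$, its $\Gamma^{(2)}$-part must match $\tfrac12G^s_4G^s_4$, and its $\Gamma^{(5)}$-part must cancel.
\item For the converse you must also verify the $(7,1)$-component, $D_\beta(\star G_4)\propto\Gamma_{[a_1\cdots a_5}\rho_{a_6a_7]}$. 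This is not an identity in $\rho_{ab}$: the ratio of the two sides differs between the spinor and the vector-spinor $\Gamma$-trace parts of $\rho_{ab}$. It holds on-shell only because of the Rarita--Schwinger equation.
\end{itemize}
With these corrections your plan becomes the standard argument on which the cited theorem rests.
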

In particular the Hodge-duality relation \cref{11DSuGraEoM}
is implied \parencites[p. 878]{CDF1991}[Lem. 3.3(iii)]{GSS24-SuGra}:
\begin{equation}
  (G^s_4, G^s_7) \in \Omega^1_{\mathrm{cl}}\bracket({X^{1,10\vert \mathbf{32}}; \mathfrak{l}S^4})
  \;\;\Rightarrow\;\;
  G_7 = \star G_4 
  \mathrlap{\,.}
\end{equation}

This entails that 11D SuGra admits \emph{covariant} flux quantization (not requiring existence or choice of Cauchy surfaces) along the same lines as \cref{OnTheGlobalPhaseSpace} but replacing the Cauchy surface $X^{10}$ by super-spacetime $X^{1,10\vert \mathbf{32}}$ and using only the sub-sheaf of $\mathfrak{l}S^4$-closed super-forms whose components are of the form \cref{11DSuperFluxDensities}:
\begin{equation}
\begin{aligned}
  \Omega^1_{%
    \mathrm{cl},%
    \mathcolor{purple}{s}%
  }
  \bracket({
    X^{1,10\vert \mathbf{32}};
    \mathfrak{l}S^4
  })
  &
  :=
  \bracketmid\{{
    (G^s_4, G^s_7)
    \in
    \Omega^1_{\mathrm{cl}}\bracket({
    X^{1,10\vert \mathbf{32}};
    \mathfrak{l}S^4
  })
  }{
    \text{of the form \cref{11DSuperFluxDensities}}
  }\}
  \mathrlap{\,,}
  \\
  \mathbf{\Omega}^1_{%
    \mathrm{cl},%
    \mathcolor{purple}{s}%
  }
  \bracket({
    X^{1,10\vert \mathbf{32}}
    ;
    \mathfrak{l}S^4
  })
  & 
  :=
  \mathbf{\Omega}^1_{\mathrm{cl}}
  \bracket({
    X^{1,10\vert \mathbf{32}};
    \mathfrak{l}S^4
  })
  \underset
    {\;\;\;\;\mathclap{
      \sharp_1
      \mathbf{\Omega}^1_{\mathrm{cl}}
      \scaledbracket({
       X^{1,10\vert \mathbf{32}};
        \mathfrak{l}S^4
      })
    }\;\;\;\;}
    {\times}
  C^\infty\bracket({
    -,
    \Omega^1_{%
      \mathrm{cl},%
      \mathcolor{purple}{s}%
    }
    \bracket({
      X^{1,10\vert \mathbf{32}};
      \mathfrak{l}S^4
    })
  })
  \mathrlap{\,.}
\end{aligned}
\end{equation}

Beware that this smooth moduli set of constrained super-flux densities depends on the choice of super-coframe \cref{SuperCoframe} which encodes the super-gravitational background field. But in this form it is straightforward to generalize to the smooth moduli set of combatible coframe fields and flux fields.
This leads to flux quantization of 11D SuGra beyond just its gauge sector. 

Such super-space flux quantization is not restricted to 11D supergravity:
\begin{enumerate}
\item
  After promoting the M5-brane embedding (more on which in \cref{OnM5BraneProbes}) to a $\sfrac{1}{2}$-BPS \emph{super-embedding}
  \begin{equation}
    \inlinetikzcd{
      \Sigma^{1,5\vert 2\cdot\mathbf{8}}
      \ar[
        rr,
        hook,
        "{ \phi_s }"
      ]
      \&\&
      X^{1,10\vert \mathbf{32}}
    }
  \end{equation}
  of a super-worldvolume with coframe field $\bracket({ \bracket({e^a})_{a=0}^5, \bracket({ \psi^\alpha })_{\alpha=1}^{16} })$,
  the notorious non-linear self-duality (cf. \cite[Rem. 3.19]{FSS21-StrStruc}) of the 3-form flux $H_3$ on the 5-brane is \emph{implied} (\parencites[Prop. 3.18(iii), Rem. 3.20]{FSS21-StrStruc} following \cite{HoweSezgin1997,HoweSezginWest1997}) by imposing on its super-flux
  \begin{equation}
    H^s_3 
    :=
    \bracket({
      H_3
    })_{a_1 a_2 a_3}
    e^{a_1 a_2 a_3}
  \end{equation}
  the $\mathfrak{l}_{S^4}S^7$-valued Bianchi identity
  \begin{equation}
    \mathrm{d}
    H^s_3
    =
    \phi_s^\ast
    G^s_4
    \mathrlap{\,,}
  \end{equation}
  hence by the condition that
  \begin{equation}
    \bracket({
      \bracket({G_4^s, G_7^s}),
      H_3^s
    })
    \in
    \Omega^1_{\mathrm{cl}}\bracket({
      \phi_s;
      \mathfrak{l}h_{\mathbb{C}}
    })
  \end{equation}
  constitute twisted relative closed differential forms \cref{TwistedRelativeOnShellFluxDensities} with coefficients in the quaternionic Hopf fibration $\inlinetikzcd{ S^7 \ar[r, "{ h_{\mathbb{H}} }"] \& S^4 }$ (cf. \parencites[\S3.7]{FSS20-H}{FSS21-Hopf}).

\item
After dimensional reduction on superspace, an analogous statement \cite{GS26-IIA} characterizes the actual equations of motion \cref{EoMOfIIA} of 10D IIA supergravity as equivalent to the constraint that corresponding super-flux densities constitute a closed $\mathfrak{l}\mathrm{Cyc}\bracket({S^4})$-valued \cref{lCycS4} differential form on super-spacetime $X^{1,9\vert \mathbf{16} \oplus \overline{\mathbf{16}}}$:
\begin{equation}
  \bracket({
    F^s_2,
    F^s_4,
    F^s_7,
    H^s_3,
    H^s_7
  })
  \in
  \Omega^1_{\mathrm{cl}}\bracket({
    X^{1,9\vert \mathbf{16}\oplus \overline{\mathbf{16}}};
    \mathfrak{l}\mathrm{Cyc}\bracket({
      S^4
    })
  })
  \mathrlap{\,.}
\end{equation}
In this manner, the phenomenon will propagate to at least all those supergravity theories that arise as reductions from 11D, rendering them all amenable to global completion by covariant electromagnetic flux quantization.
\end{enumerate}

\subsection
{M5-Brane Probes}
\label
{OnM5BraneProbes}

We have seen in \cref{FluxOnM5Branes} that the characteristic $L_\infty$-fibration for the (non-linearly) self-dual flux density on M5-probes of the 11D bulk is the relative real Whitehead $L_\infty$-algebra of the quaternionic Hopf fibration. This means that the initial flux quantization of such M5-probes is classified by the actual quaternionic Hopf fibration. We briefly indicate some interesting consequences, for more see the dedicated lecture notes in \parencites[\S 3.3 \& \S 4.3]{GSS26-SuGra}.

Putting the M5 thus flux-quantized on an A-type orbifold turns out \cite{SS25-Seifert,BaSS26-MString} to yield the geometric engineering of topological order which reproduces in fine detail properties of anyons in fractional quantum Hall (FQH) systems (cf. \cite{SS25-AbelianAnyons,SS25-FQH,SS26-BBC}, exposition in \cite{SS25-ISQS29,SS25-Srni}). 
Briefly: 
\begin{enumerate}
\item
For an M5 worldvolume embedding of the form (cf. \cite[p. 5]{CouzensLuscherSparks2025})
\begin{equation}
  \begin{tikzcd}[
    column sep=-4pt
  ]
  \Sigma^{1,5}
  \ar[
    d,
    hook,
    "{ \phi }"
  ]
  &\defneq&
  \mathbb{R}^{1,1}
  &\times&
  \Sigma^2
  &\times&
  \mathbb{C}
  \ar[
    in=60,
    out=180-60,
    looseness=4,
    shift right=4pt,
    "{ 
      \,\mathclap{G}\,
    }"{description}
  ]
  \\
  X^{1,10}
  &\defneq&
  \mathbb{R}^{1,1}
  &\times&
  \mathbb{R}^{5}
  &\times&
  \mathbb{C}
  \ar[
    in=60,
    out=180-60,
    looseness=4,
    shift right=4pt,
    "{ 
      \,\mathclap{G}\,
    }"{description}
  ]
  &\times&
  \mathbb{C}
  \ar[
    in=60,
    out=180-60,
    looseness=4,
    shift right=4pt,
    "{ 
      \,\mathclap{\overline{G}}\,
    }"{description}
  ]
  \end{tikzcd}
\end{equation}
(where $G \defneq \mathbb{Z}_{/n} \subset \mathrm{SU}(2)$, $n \geq 2$, and $\overline{G}$ its opposite).
\item 
For flux quantization by the \emph{twistor fibration} with its canonical $G$-action on one of the two $\mathbb{C}^2$-factors:
\begin{equation}
  \begin{tikzcd}[
    column sep=-5pt,
    row sep=6pt
  ]
    &&
    \mathbb{C}P^3
    \ar[
      out=60,
      in=180-60,
      shift left=1pt,
      looseness=4,
      "{ 
        \,\mathclap{G}\,
      }"{description}
    ]
    \ar[rrrr, ->>, "{ t_{\mathbb{H}} }"]
    \ar[d, equals]
    &&[27pt]&&
    S^4
    \ar[
      out=60,
      in=180-60,
      shift left=1pt,
      looseness=4,
      "{ 
        \,\mathclap{G}\,
      }"{description}
    ]
    \ar[d, equals]
    \\
    \big(
    \mathbb{C}^2 
    \ar[
      out=60-180,
      in=-60,
      shift left=1pt,
      looseness=4,
      "{ 
        \,\mathclap{G}\,
      }"{description}
    ]
    &\times& \!\!
    \mathbb{C}^2
    &\!\!\!\!\setminus \{0\}
    \big)\big/\mathbb{C}^\times
    \ar[r, ->>]
    &
    \big(
    \mathbb{C}^2 
    \ar[
      out=60-180,
      in=-60,
      shift left=1pt,
      looseness=4,
      "{ 
        \,\mathclap{G}\,
      }"{description}
    ]
    &\times&
    \mathbb{C}^2
    &\!\setminus \{0\}
    \big)\big/\mathbb{H}^\times
    \mathrlap{\,.}
  \end{tikzcd}
\end{equation}
\end{enumerate}
For trivial background C-field charge,
the resulting classifying space on the orbi-singularity $\bracket({\Sigma^{1,5}})^G \simeq \mathbb{R}^{1,1} \times \Sigma^2$ turns out to be the 2-sphere $S^2 \simeq \bracket({\mathbb{C}P^3})^{G}$, so that the solitonic charges in this situation are equivalently classified by pointed maps
\begin{equation}
  \begin{tikzcd}
    \mathbb{R}^{1}_{\cpt}
    \wedge 
    \Sigma^2_{\cpt}
    \ar[
      r,
      dashed
    ]
    &
    S^2
    \mathrlap{\,.}
  \end{tikzcd}
\end{equation}
Therefore, the light-cone quantum algebra of the realistic solitonic topological observables \cref{SolitonicTopologicalObservables,SolitonicTopologicalObservablesAsGroupAlgebra} on such M5 brane probes is the group algebra 
\begin{equation}
  \mathrm{SolTopObs}_0\bracket({
    \mathbb{R}^1 \times \Sigma^2_{\cpt};
    S^2
  })
  \simeq
  \mathbb{C}\bracket[{
    \pi_1
    \,
    \mathrm{Map}^\ast\bracket({
      \Sigma^2_{\cpt},
      S^2
    })
  }]
  \mathrlap{\,.}
\end{equation}
of the fundamental group of the pointed mapping space from the given surface, with the point at infinity adjoined, to the 2-sphere.

This turns out to reproduce in fine detail \cite{SS25-FQAH,SS25-FQH,SS25-Crys} the quantum observables of abelian Chern-Simons theory, hence of the topological quantum order of fractional quantum Hall (FQH) systems on $\Sigma^2$ (cf. string theoretic conjectures along these lines in \cite{ChoGangKim2020} and see review in \cite{SS25-ISQS29}).

Notably: 
\begin{enumerate}
\item 
For $\Sigma^2 \defneq \mathbb{R}^2$ the plane, one finds (\cite{SS25-AbelianAnyons,SS25-WilsonLoops}) that:
\begin{equation}
  \mathrm{SolTopObs}_0\bracket({
    \mathbb{R}^1
    \times
    \mathbb{R}^2;
    S^2
  })
  \simeq
  \big\{
  \text{
    renormalized Wilson
    loop observables
  }
  \big\}
  \mathrlap{\,.}
\end{equation}
is exactly the algebra of observables whose pure states detect the \emph{braiding phases} $\zeta = e^{\pi \mathrm{i} n/K}$ of anyon worldlines (framed links) with net crossing number (writhe) $n \in \mathbb{Z}$ in the FQH state with filling fraction $\nu = 1/K$.

\item
For $\Sigma^2 \defneq T^2$ the torus, one finds (\cite[Thm. 3.5]{KSS26-HigherDimAnyons} following \cite{Kallel2025,LarmoreThomas1980,Hansen1974}) that 
\begin{equation}
  \mathrm{SolTopObs}_0\bracket({
    \mathbb{R}^1
    \times
    T^2;
    S^2
  })
  \simeq
  \mathbb{C}\bracket[{%
    \text{%
      integer Heisenberg group at level 2%
    }%
  }]
\end{equation}
is the group algebra of the \emph{integer Heisenberg group} at level=2. This is exactly the algebra of quantum observables for FQH anyons on the torus (cf. \cite[(5.28)]{Tong2016}).
\end{enumerate}

\medskip

We intentionally do not go into further detail here, not to overburden these lecture notes. Extensive discussion may be found in the references cited above.

\appendix

\section
{Basic Background}

For reference in the main text, here we briefly recall some relevant basic notions from higher homotopy and higher algebra.

\subsection
{Higher Lie Algebras}
\label{OnLInfinityAlgebras}

We briefly recall the dual incarnation of finite-type $L_\infty$-algebras in the guise of their Chevalley-Eilenberg dgc-algebras (cf. \parencites{LadaStasheff1992}[\S 6.1]{SatiSchreiberStasheff2009}[\S 4]{FSS23-Char}). This dual formulation is much more concise and transparent than the ``direct'' one. In particular, it makes immediate the generalization to super-$L_\infty$-algebras \parencites{FSS15-WZW}[\S 3]{FSS2019}[\S 2.1]{GSS25-TD}, which in this dual guise are (mis-)named ``FDAs'' in the supergravity literature  (cf. \cite{CDF1991,CastellaniDAuria2025}).

Our ground field is the real numbers, $\mathbb{R}$. By \emph{graded} we mean $\mathbb{Z}$-graded (though all constructions considered here happen to be just $\mathbb{N}$-graded). We say \emph{gc-algebras} for graded-commutative algebras and \emph{dgc-algebras} for differential graded-commutative algebras with differential of degree +1.

\paragraph
{The base case of ordinary Lie algebras}
For $\bracket({\mathfrak{g},[-,-]})$ a finite-dimensional Lie algebra, recall the construction of its \emph{Chevalley-Eilenberg algebra} $\mathrm{CE}(\mathfrak{g})$:

Write $\mathfrak{g}^\vee := \mathfrak{g}^\ast$ for the linear dual of the underlying vector space and 
\begin{equation}
\label
{GrassmannAlgebraOnDualLieAlgebra}
  \wedge^\bullet \mathfrak{g}^\vee
  :=
  \mathrm{Sym}\bracket({
    \mathfrak{g}^\vee[1]
  })
\end{equation}
for its \emph{Grassmann algebra}, the free graded-commutative algebra for which $\wedge^n \mathfrak{g}^\vee$ is the subspace of degree $n$.

A graded derivation $\mathrm{d}$ on $\wedge^\bullet \mathfrak{g}^\vee$ is thus determined by its restriction $\inlinetikzcd{ \wedge^1 \mathfrak{g}^\vee \ar[r] \& \wedge^\bullet \mathfrak{g}^\vee }$. Taking this restriction to be the linear dual of the Lie bracket,
$
  \mathrm{d}
    _{\vert \wedge^1 \mathfrak{g}^\vee}
  :=
  [-,-]^\ast
$
hence yields a graded derivation of degree 1. One checks that its nilpotency is equivalent to the Jacobi identity of the bracket:
\begin{equation}
  \mathrm{d}^2 = 0
  \;\;
  \Leftrightarrow
  \;\;
  \text{
   Jacobi identity on $[-,-]$}
  \mathrlap{\,.}
\end{equation}
The resulting dgc-algebra
is the \emph{Chevalley-Eilenberg algebra} of the Lie algebra:
\begin{equation}
\label
{CEAlgebraOfLieAlgebra}
  \mathrm{CE}\bracket({
    \mathfrak{g},
    [-,-]
  })
  :=
  \bracket({
    \wedge^\bullet
    \mathfrak{g}^\vee,
    \mathrm{d} 
    :=
    [-,-]^\ast
  })
  \mathrlap{\,.}
\end{equation}
The key observation now is that this construction is a \emph{fully faithful} functor on the category $\mathrm{LieAlg}^{\mathrm{fd}}_{_{\mathbb{R}}}$ of finite-dimensional Lie algebras: 
\begin{equation}
  \begin{tikzcd}
    \mathrm{LieAlg}
      ^{\mathrm{fd}}
      _{_{\mathbb{R}}}
    \ar[r, hook, "{ \mathrm{CE} }"]
    &
    \mathrm{dgcAlg}^{\mathrm{op}}
     _{_{\mathbb{R}}}
    \mathrlap{\,.}
  \end{tikzcd}
\end{equation}

This means that we may work (op-)equivalently with the dgc-algebras $\mathrm{CE}(\mathfrak{g})$. But for these there is an evident generalization, simply by allowing the vector space $\mathfrak{g}$ to be $\mathbb{N}$-graded:

\paragraph
{The case of $L_\infty$-algebras}

For $\mathfrak{g}$ an $\mathbb{N}$-graded vector space of finite type (meaning that each degree $\mathfrak{g}_n$ is a finite-dimensional vector space), write $\mathfrak{g}^\vee$ for its degreewise dual vector space, hence with 
\begin{equation}
  \bracket({
    \mathfrak{g}^\vee
  })_n
  :=
  \bracket({\mathfrak{g}_n})^\ast
\end{equation}
and write
\begin{equation}
\label
{GradedGrassmannAlgebra}
  \wedge^\bullet \mathfrak{g}^\vee
  :=
  \mathrm{Sym}\bracket({
    \mathfrak{g}^\vee[1]
  })
\end{equation}
for the graded Grassmann gc-algebra, which in low degrees is
\begin{equation}
  \bracket({
    \wedge^\bullet \mathfrak{g}^\vee
  })_1
  =
  \wedge^1 \mathfrak{g}^\vee_0
  \,,
  \;\;\;
  \bracket({
    \wedge^\bullet \mathfrak{g}^\vee
  })_2
  =
  \wedge^2 \mathfrak{g}^\vee_0
  \oplus
  \wedge^1 \mathfrak{g}^\vee_1
  \,,
  \;\;
  \text{etc.}
\end{equation}

Then the category of (connective) finite-type $L_\infty$-algebras is the full subcategory of $\mathrm{dgcAlg}^{\mathrm{op}}_{_{\mathbb{R}}}$ on those whose underlying gc-algebra is of the form \cref{GradedGrassmannAlgebra}:
\begin{equation}
\label
{LInfinityFaithfulInDGCAlgOp}
  \begin{tikzcd}
    L_\infty \mathrm{Alg}^{\mathrm{ft}}
      _{_{\mathbb{R}}}
    \ar[
      r,
      hook,
      "{ \mathrm{CE} }"
    ]
    &
    \mathrm{dgcAlg}^{\mathrm{op}}
      _{_{\mathbb{R}}}
    \mathrlap{\,.}
  \end{tikzcd}
\end{equation}
Unwinding this: Since a differential $\mathrm{d}$  on \cref{GradedGrassmannAlgebra} is again determined by its restriction to $\wedge^1 \mathfrak{g}^\vee$, it is given by a sequence of linear duals of $n$-ary graded-skew symmetric linear maps for $n \in \mathbb{N}$:
\begin{equation}
  \mathrm{d}_{\vert
    \wedge^1\mathfrak{g}^\vee
  }
  =
  [-]^\ast
  +
  [-,-]^\ast
  +
  [-,-,-]^\ast
  +
  \cdots
  :
  \inlinetikzcd{
    \wedge^1 \mathfrak{g}^\vee
    \ar[
      r
    ]
    \&
    \wedge^1 \mathfrak{g}^\vee
    \oplus
    \wedge^2 \mathfrak{g}^\vee
    \oplus
    \wedge^3 \mathfrak{g}^\vee
    \oplus 
    \cdots
    ,
  }  
\end{equation}
and the nilpotency condition $\mathrm{d}^2 = 0$ is hence equivalently a (somewhat complicated-looking) generalization of the Jacobi identity on this system of brackets. Such a system of brackets on graded vector spaces $\mathfrak{g}$ that  satisfies these conditions is exactly what makes $\bracket({ \mathfrak{g}, \bracket({[-], [-,-], [-,-,-], \cdots  }) })$ an $L_\infty$-algebra.

\paragraph
{Closed $L_\infty$-Algebra Valued Forms}
Consider $X \in \mathrm{SmthMfd}$.
Given again an ordinary Lie algebra $\mathfrak{g}$ of finite dimension, it is immediate that the $\mathfrak{g}$-valued differential 1-forms on $X$ are equivalently gc-algebra homomorphisms out of the dual Grassmann algebra \cref{GrassmannAlgebraOnDualLieAlgebra}, like this:
\begin{equation}
  \Omega^1_{\mathrm{dR}}\bracket({
    X;
    \mathfrak{g}
  })
  \simeq
  \mathrm{Hom}\bracket({
    \wedge^\bullet \mathfrak{g}^\vee,
    \Omega^\bullet_{\mathrm{dR}}\bracket({
      X
    })
  })
  \mathrlap{\,.}
\end{equation}
Moreover, a moment of reflection shows that the \emph{flat} (here we will say: \emph{closed}) $\mathfrak{g}$-valued forms among these correspond exactly to those gc-homomorphisms which are in fact dgc-homomorphisms out of the Chevalley-Eilenberg algebra \cref{CEAlgebraOfLieAlgebra}: 
\begin{equation}
  \begin{aligned}
  \Omega^1_{\mathrm{cl}}\bracket({
    X;
    \mathfrak{g}
  })
  &
  :=
  \bracketmid\{{
    A 
      \in 
    \Omega^1_{\mathrm{dR}}\bracket({
      X;
      \mathfrak{g}
    })
  }{
    \mathrm{d}\, A 
    + 
    \tfrac{1}{2}
    [A \wedge A]
    = 0
  }\}
  \\
  & 
  \simeq
  \mathrm{Hom}\bracket({
    \mathrm{CE}(\mathfrak{g}),
    \Omega^\bullet_{\mathrm{dR}}\bracket({
      X
    })
  })
  \mathrlap{\,.}
  \end{aligned}
\end{equation}

Taking this characterization as the definition, it immediately generalizes to the case where $\mathfrak{g}$ is a finite-type $L_\infty$-algebra. We say that dgc-homomorphisms out of its Chevalley-Eilenberg algebra \cref{LInfinityFaithfulInDGCAlgOp} into the de Rham dgc-algebra are the closed (flat) $\mathfrak{g}$-valued differential forms:
\begin{equation}
\label
{ClosedLInfinityValuedDifferentialForms}
  \Omega^1_{\mathrm{dR}}\bracket({
    X;
    \mathfrak{g}
  })
  :=
  \mathrm{Hom}\bracket({
    \mathrm{CE}(\mathfrak{g}),
    \Omega^\bullet_{\mathrm{dR}}(X)
  })
  \mathrlap{\,.}
\end{equation}

\subsection
{Classifying Spaces}

We briefly recall the notion of \emph{classifying space}, beginning with its origin in ordinary (non-)abelian cohomology, and then passing to classifying spaces for generalized abelian and nonabelian cohomology.

\paragraph
{Looping and Delooping}

For $\mathcal{X}, \mathcal{A} \in \mathrm{Grpd}_\infty$ equipped with base points $\infty \in \mathcal{X}$ and $0 \in \mathcal{A}$, their \emph{pointed mapping space} is the homotopy fiber \cref{HomotopyPullback} of the point evaluation map on the plain mapping space:
\begin{equation}
  \begin{tikzcd}[row sep=small]
  \mathbf{Map}^\ast\bracket({
    \mathcal{X},
    \mathcal{A}
  })
  \ar[r]
  \ar[d]
  \ar[
    dr,
    phantom,
    "{ \lrcorner }"{pos=.1}
  ]
  &
  \mathbf{Map}\bracket({
    \mathcal{X},
    \mathcal{A}
  })
  \ar[
    d,
    "{
      \mathrm{ev}_\infty
    }"
  ]
  \\
  \{0\}
  \ar[r]
  &
  \mathcal{A} \mathrlap{\,.}
  \end{tikzcd}
\end{equation}

In particular, the \emph{based loop space} of $(\mathcal{A},0)$ is
\begin{equation}
  \Omega\mathcal{A}
  \simeq
  \mathbf{Map}^\ast\bracket({
    S^1,
    \mathcal{A}
  })
  \mathrlap{\,.}
\end{equation}
Under concatenation and reversal of loops, the based loop space is an $\infty$-group (an $A_\infty$-monoid whose connected components form a group), and on pointed connected 
$\infty$-groupoids this construction is actually an equivalence:
\begin{equation}
\label
{PlainLoopingDeloopingEquivalence}
  \begin{tikzcd}
    \mathrm{Grpd}^\ast_{[1,\infty]}
    \ar[
      rr,
      shift left=5pt,
      "{ \Omega }"
    ]
    \ar[
      rr,
      phantom,
      "{ \sim }"
    ]
    \ar[
      rr,
      <-,
      shift right=5pt,
      "{ B }"{swap}
    ]
    &&
    \mathrm{Grp}\bracket({
      \mathrm{Grpd}_\infty
    })
    \mathrlap{\,.}
  \end{tikzcd}
\end{equation}
Here the inverse equivalence $B(-)$ is hence also called the \emph{delooping} operation.

The analogous equivalence holds in every $\infty$-topos (\parencites[Lem. 7.2.2.1]{Lurie2009}[\S 5.2.6]{Lurie2017} cf. \cite[Thm. 2.19]{NSS2015a}), hence in particular among smooth $\infty$-groupoids (\cref{OnSmoothInfinityGroupoids}), where we denote the delooping in boldface:
\begin{equation}
\label
{SmoothLoopingDeloopingEquivalence}
  \begin{tikzcd}
    \mathrm{SmthGrpd}
      ^\ast
      _{[1,\infty]}
    \ar[
      rr,
      shift left=5pt,
      "{ \Omega }"
    ]
    \ar[
      rr,
      phantom,
      "{ \sim }"
    ]
    \ar[
      rr,
      <-,
      shift right=5pt,
      "{ \mathbf{B} }"{swap}
    ]
    &&
    \mathrm{Grp}\bracket({
      \mathrm{SmthGrpd}_\infty
    })
    \mathrlap{\,.}
  \end{tikzcd}
\end{equation}

The shape operation \cref{TheShapeModality} preserves $\infty$-group structure and deloopings, due to \cref{AdjointQuadrupleOfInftyFunctors,ShapePreservesHomotopyFibersOverDiscrete}, cf. \cite[Prop. 9.1.4]{SS26-Orb}.

\paragraph
{Traditional Classifying Spaces}

For $G$ a Hausdorff, Lie or Fr{\'e}chet-Lie group, the traditional \emph{classifying space} for $G$-principal bundles (cf. \cite[Thm. 3.5.1]{RudolphSchmidt2017}) is, as a homotopy type (\cite[Thm. 5.2.13]{SS26-Bun})
\begin{equation}
  B G
  \defneq
  \shape \mathbf{B} G
  \simeq
  B \shape G
  \mathrlap{\,,}
\end{equation}
in that for $X \in \mathrm{SmthMfd}$ it classifies isomorphism classes of $G$-principal bundles and hence classifies ordinary nonabelian cohomology:
\begin{equation}
\label
{OrdinaryNonabelianCohomology}
  H^1\bracket({
    X;
    G
  })
  :=
  G \mathrm{PncplBndl}_X \big/\mathrm{iso}
  \simeq
  \pi_0
  \,
  \mathbf{Map}\bracket({
    X,
    B G
  })
  \mathrlap{\,.}
\end{equation}

If here $G \defneq A \in \mathrm{AbGrp}$ is an abelian group, then $B A$ inherits itself $\infty$-group structure (2-group structure) and so on recursively, whence for $n \in \mathbb{N}$ we have the \emph{Eilenberg-MacLane space} (cf. \cite[\S 6]{AguilarGitlerPrieto2002})
\begin{equation}
\label
{TheEMSpaces}
  B^{n} A
  :=
  B B^{n-1}A
  \mathrlap{\,,}
\end{equation}
traditionally denoted $K(A,n)$, which classifies ordinary abelian cohomology (cf. \parencites[\S 7.1]{AguilarGitlerPrieto2002}[Ex. 2.1]{FSS23-Char}):
\begin{equation}
  H^n\bracket({
    X;
    A
  })
  \simeq
  \pi_0\,
  \mathbf{Map}\bracket({
    X,
    B^n A
  })
  \mathrlap{\,.}
\end{equation}

\paragraph
{Spectra of Classifying Spaces}

By construction, the sequence of EM-spaces $\bracket({ B^n A })_{n \in \mathbb{N}}$ \cref{TheEMSpaces} forms a \emph{spectrum of spaces}
\begin{equation}
\label
{SpectrumOfSpaces}
  E_\bullet
  :=
  \bracket({
  \bracket({
    E_n \in \mathrm{Grpd}^\ast_\infty
  })_{n \in \mathbb{N}}
  \,,
  \text{ with }
  \big(
    \inlinetikzcd{
      E_n
      \ar[
        r,
        "{ \sim }"
      ]
      \&
      \Omega E_{n+1}
    }
  \big)_{n \in \mathbb{N}}
  })
  \mathrlap{\,.}
\end{equation}
The \emph{Brown representability theorem} says (cf. \parencites[\S 3.4]{Kochman1996}[Thm. 12.2.18]{AguilarGitlerPrieto2002}) that such spectra are exactly the classifying spaces of Whitehead-generalized abelian cohomology theories $E^\bullet(-)$ (cf. \cite[Ex. 2.10]{FSS23-Char}):
\begin{equation}
  E^n\bracket(X)
  \simeq
  \pi_0
  \,
  \mathbf{Map}\bracket({
    X,
    E_n
  })
  \mathrlap{\,.}
\end{equation}

A key example beyond ordinary cohomology, the classifying space for \emph{complex topological K-theory} $K^0(-)$ is 
\begin{equation}
\label
{KU0}
  \mathrm{KU}_0
  \simeq
  B \mathrm{U}(\infty)
  \times
  \mathbb{Z}
  :=
  \varinjlim_n 
  B \mathrm{U}(n)
    \times 
  \mathbb{Z}
  \mathrlap{\,.}
\end{equation}
Remarkably, while each finite stage $B \mathrm{U}(n)$ fails to have further delooping, the colimiting space does deloop, by the homotopy-theoretic version of \emph{Bott periodicity} (cf. \parencites[\S B.2]{AguilarGitlerPrieto2002}[\S 15]{HusemollerEtAl2008}):
\begin{equation}
  \begin{tikzcd}
    B \mathrm{U}(\infty)
    \times
    \mathbb{Z}
    \ar[
      r,
      "{ \sim }"
    ]
    &
    \mathrm{U}(\infty)
    \ar[
      r,
      "{ \sim }"
    ]
    &
    B \mathrm{U}(\infty)
    \times
    \mathbb{Z}
    \ar[
      r,
      "{ \sim }"
    ]
    &
    \cdots
  \end{tikzcd}
\end{equation}
Hence 
\begin{equation}
  \mathrm{KU}_\bullet
  :=
  \begin{cases}
    B \mathrm{U}(\infty)
    \times
    \mathbb{Z}
    &
    \text{ if $\bullet$ is even} 
    \\
    \mathrm{U}(\infty)
    &
    \text{ if $\bullet$ is odd} 
  \end{cases}
\end{equation}
makes a spectrum of spaces, which exhibits K-theory as a Whitehead-generalized abelian cohomology theory.

\paragraph
{Unstable Classifying Spaces}

In this vein, one may understand the finite/unstable stages $B \mathrm{U}(n)$ as classifying forms of \emph{unstable K-theory}.

In fact, since every connected space $\mathcal{A}$ is the classifying space of its loop $\infty$-group \cref{PlainLoopingDeloopingEquivalence}
\begin{equation}
  \mathcal{A}
  \simeq
  B\bracket({
    \Omega \mathcal{A}
  })
  \mathrlap{\,,}
\end{equation}
every connected space is the classifying space of the evident generalization of ordinary nonabelian cohomology \cref{OrdinaryNonabelianCohomology}:
\begin{equation}
  H^1\bracket({
    X;
    \Omega\mathcal{A}
  })
  :=
  \pi_0
  \,
  \mathbf{Map}\bracket({
    X,
    \mathcal{A}
  })
  \mathrlap{\,.}
\end{equation}
Moreover, here the connectedness assumption only serves to give the cohomology in degree 1. More generally we may say that \emph{nonabelian cohomology} is what is classified by \emph{any} (pointed) $\mathcal{A} \in \mathrm{Grpd}_\infty$:
\begin{equation}
  H^0\bracket({
    X;
    \mathcal{A}
  })
  :=
  \pi_0\,
  \mathbf{Map}\bracket({
    X,
    \mathcal{A}
  })
\end{equation}
(\parencites[Def. 6.0.6]{Toen2002}[Def. 6]{Lurie2014}[\S 2]{FSS23-Char}, more exposition in \parencites[\S 1]{SS25-TEC}[\S 4]{SS26-Orb}).

In this pattern, yet more generally we have nonabelian cohomology in every negative degree  $n \in \mathbb{N}$
\begin{equation}
  H^{-n}\bracket({
    X;
    \mathcal{A}
  })
  :=
  \pi_0\,
  \mathbf{Map}\bracket({
    X,
    \Omega^n\mathcal{A}
  })
  \mathrlap{\,,}
\end{equation}
while the existence of nonabelian cohomology in positive degrees $n \in \mathbb{N}$ is conditional on higher deloopings
\begin{equation}
  H^n\bracket({
    \Omega \mathcal{A}
  })
  :=
  \pi_0\,
  \mathbf{Map}\bracket({
    X;
    B^n \Omega \mathcal{A}
  })
\end{equation}
which need not exist --- and in examples rarely exists for finite $n$ without already existing for all $n$ and thus reducing to the stable case \cref{SpectrumOfSpaces}.

The most basic example of an unstable classifying space is the $n$-sphere for $n \geq 2$. The corresponding nonabelian cohomology theory is \emph{$n$-Cohomotopy}:
\begin{equation}
\label
{Cohomotopy}
  \pi^n\bracket({
    X
  })
  :=
  \pi_0\,
  \mathbf{Map}\bracket({
    X, \shape S^n
  })
  \mathrlap{\,.}
\end{equation}

\paragraph
{Fibrations of Classifying Spaces}

Unstable classifying spaces appear already when \emph{twisting} abelian cohomology theories.

For instance the K-theory classifying space $\mathrm{KU}_0$ \cref{KU0} carries an $\infty$-action by $B \mathrm{U}(1)$, whose homotopy quotient is a $\mathrm{KU}_0$-fibration over $B^2 \mathrm{U}(1)$:
\begin{equation}
\label
{ClassifyingFibrationForTwistedKTheory}
  \begin{tikzcd}[row sep=small, column sep=large]
    \mathrm{KU}_0
    \ar[r]
    \ar[
      d,
    ]
    \ar[
      dr,
      phantom,
      "{ \lrcorner }"{pos=.1}
    ]
    &
    \mathrm{KU}_0 \sslash B \mathrm{U}(1)
    \ar[
      d,
      "{ p }"
    ]
    \\
    \ast
    \ar[r]
    & 
    B^2 \mathrm{U}(1) \mathrlap{\,.}
  \end{tikzcd}
\end{equation}
Hence, while the homotopy quotient is not itself a stage in a spectrum, it is so \emph{fiberwise} over $B^2 \mathrm{U}(1)$. Accordingly, the \emph{fiberwise} maps to this quotient, over a fixed \emph{twisting} map $\tau : \inlinetikzcd{ X \ar[r] \& B^2 \mathrm{U}(1) }$, classify \emph{$\tau$-twisted K-theory} (\cite{AtiyahSegal2004}, cf. \parencites[Ex. 3.4]{FSS23-Char}[Def. 5.1.22]{SS26-Orb}[Ex. 6.2.5]{SS26-Bun}):
\begin{equation}
 K^\tau\bracket({X})
 :=
 \pi_0\,
 \mathbf{Map}\bracket({
   (X,\tau)
   ,
   \bracket({
     \mathrm{KU}_0
     \sslash
     B\mathrm{U}(1),
     p
   })
 })_{B^2 \mathrm{U}(1)}
  =
  \pi_0
  \left\{
  \begin{tikzcd}[row sep=small]
    & 
    \mathrm{KU}_0 \sslash 
    B\mathrm{U}(1)
    \ar[
      d,
      "{ p }"
    ]
    \\
    X
    \ar[
      r,
      "{\tau}"
    ]
    \ar[
      ur,
      dashed
    ]
    &
    B^2 \mathrm{U}(1)
  \end{tikzcd}
  \right\}
 \mathrlap{.}
\end{equation}

\paragraph
{Cyclic Loop Spaces}

A particularly important operation for obtaining new unstable classifying spaces from given spaces $\mathcal{A}$ is \emph{cyclification}.

Given $\mathcal{G} \in \mathrm{Grp}\bracket({\mathrm{Grpd}_\infty})$, there is an induced $\mathcal{G}$-action on 
$\mathbf{Map}\bracket({\mathcal{G},\mathcal{A}})$. The homotopy quotient by that action turns out to be the classifying space for $\mathcal{A}$-cohomology on $\mathcal{G}$-principal fibrations
\begin{equation}
\label
{ExtCycAdjunctionEquivalence}
  \begin{tikzcd}[row sep=small, column sep=large]
    \mathcal{P}
    \ar[d]
    \ar[r]
    \ar[
      dr,
      phantom,
      "{ \lrcorner }"{pos=.1}
    ]
    &
    \ast
    \ar[d]
    \\
    X
    \ar[
      r,
      "{ c }"
    ]
    &
    B \mathcal{G}
  \end{tikzcd}
\end{equation}
in that we have a natural equivalence
(\parencites[\S 2.2]{BMSS2019}[\S 2.2]{SS24-Cyc})
\begin{equation}
  \big\{
  \begin{tikzcd}
     \mathcal{P}
     \ar[
       r,
       dashed
      ]
     &
     \mathcal{A}
  \end{tikzcd}
  \big\}
  \;
  \begin{tikzcd}[
    column sep=50pt
  ]
    {}
    \ar[
      r,
      shift left=5pt,
      "{ \mathrm{reduction} }"
    ]
    \ar[
      r,
      phantom,
      "{ \sim }"
    ]
    \ar[
      r,
      <-,
      shift right=5pt,
      "{
        \mathrm{oxidation}
      }"{swap}
    ]
    &
    {}
  \end{tikzcd}
  \;
  \left\{
  \begin{tikzcd}[row sep=-2pt, 
    column sep=25pt
  ]
    X
    \ar[
      rr,
      dashed
    ]
    \ar[
      dr,
      "{ c }"{swap}
    ]
    &
    &[-10pt]
    \mathbf{Map}\bracket({
      \mathcal{G},
      \mathcal{A}
    })
    \sslash
    \mathcal{G}
    \ar[
      dl
    ]
    \\
    &
    B \mathcal{G}
  \end{tikzcd}
  \right\}
  \mathrlap{\,.}
\end{equation}
In particular, when $\mathcal{G} \defneq \shape S^1$ is the circle, then $\mathcal{P}$ is (the homotopy type of) a circle-principal bundle and this homotopy quotient is the \emph{cyclic loop space}
\begin{equation}
\label
{Cyclification}
  \mathrm{Cyc}\bracket({
    \mathcal{A}
  })
  :=
  \mathbf{Map}\bracket({
    S^1,
    \mathcal{A}
  })
  \sslash \shape S^1
  \mathrlap{\,.}
\end{equation}

But if we think of $\mathcal{A}$ here as classifying charges of a field theory on $\mathcal{P}$, then this equivalence \cref{ExtCycAdjunctionEquivalence} says that $\mathrm{Cyc}\bracket({\mathcal{A}})$ classifies the same information, but now transmuted to a larger set of charges on the base space $X$. 
This is the global topological guise of what in physics is known as \emph{double dimensional reduction} of fluxes (cf. \parencites[\S 3]{FSS18-TD}[\S 2.2]{GSS25-TD}).


\printbibliography

@misc{GSS26-SuGra,
    author        = "Giotopoulos, Grigorios and Sati, Hisham and Schreiber, Urs",
    title         = "{{Higher Superspace Supergravity and its IR-Completions}}",
    eprint        = "2607.20312",
    archivePrefix = "arXiv",
    primaryClass  = "hep-th",
    month         = "7",
    year          = "2026"
}

@misc{Clough2023,
  title         = {{The homotopy theory of differentiable sheaves}}, 
  author        = {Clough, Adrian},
  year          = {2023},
  eprint        = {2309.01757},
  archivePrefix = {arXiv},
  primaryClass  = {math.AT}
}

@misc{BunkShahbazi2023,
  author        = {Bunk, Severin  and  Shahbazi, Carlos S.},
  title         = {{Higher Geometric Structures on Manifolds and the Gauge Theory of Deligne Cohomology}}, 
  year          = {2023},
  eprint        = {2304.06633},
  archivePrefix = {arXiv},
  primaryClass  = {math.DG}
}

@article{Pavlov2024,
  author        = {Pavlov, Dmitri},
  title         = {{Projective model structures on diffeological spaces and smooth sets and the smooth Oka principle}},
  journal       = {Homology, Homotopy, and Applications},
  volume        = {26},
  number        = {2},
  pages         = {375--408},
  year          = {2024},
  doi           = {10.4310/HHA.2024.v26.n2.a18},
  eprint        = {2210.12845},
  archivePrefix = {arXiv},
  primaryClass  = {math.AT}
}

@book{AmabelEtAl2021,
  editor    = {Amabel, Araminta and Debray, Arun and Haine, Peter J.},
  title     = {{Differential Cohomology: Categories, Characteristic Classes, and Connections}},
  year      = {2021}
}

@article{Bunk2022,
  author        = {Bunk, Severin},
  title         = {{The $\mathbb{R}$-Local Homotopy Theory of Smooth Spaces}},
  journal       = {Journal of Homotopy and Related Structures},
  volume        = {17},
  pages         = {593--650},
  year          = {2022},
  doi           = {10.1007/s40062-022-00318-7},
  eprint        = {2007.06039},
  archivePrefix = {arXiv},
  primaryClass  = {math.AT}
}

@article{Szabo2013,
  author  = {Szabo, Richard J.},
  title   = {{Quantization of Higher Abelian Gauge Theory in Generalized Differential Cohomology}},
  journal = {Proceedings of 7th International Conference on Mathematical Methods in Physics — PoS(ICMP 2012)},
  year    = {2013},
  pages   = {009},
  doi     = {10.22323/1.175.0009}
}

@misc{GS26-IIA,
  author        = {Giotopoulos, Grigorios and Sati, Hisham},
  title         = {{Flux Quantization on 10D Type IIA Superspace via Cyclification from 11D}},
  year          = {2026},
  eprint        = {2606.16857},
  archivePrefix = {arXiv},
  primaryClass  = {hep-th}
}

@article{HoweSezginWest1997,
  author        = {Howe, Paul S. and Sezgin, Ergin and West, Peter C.},
  title         = {{The six-dimensional self-dual tensor}},
  journal       = {Physics Letters B},
  volume        = {400},
  number        = {3--4},
  pages         = {255--259},
  year          = {1997},
  doi           = {10.1016/S0370-2693(97)00365-1},
  eprint        = {hep-th/9702111},
  archivePrefix = {arXiv},
  primaryClass  = {hep-th}
}

@article{CandielloLechner1994,
  author        = {Candiello, Antonio and Lechner, Kurt},
  title         = {{Duality in supergravity theories}},
  journal       = {Nuclear Physics B},
  volume        = {412},
  pages         = {479--501},
  year          = {1994},
  doi           = {10.1016/0550-3213(94)90389-1},
  eprint        = {hep-th/9309143},
  archivePrefix = {arXiv},
  primaryClass  = {hep-th}
}

@misc{BunkEtAl2026,
  author        = {Bunk, Severin and M{\"u}ller, Lukas and Nuiten, Joost and Szabo, Richard J.},
  title         = {{Symmetries and Higher-Form Connections in Derived Differential Geometry}},
  year          = {2026},
  eprint        = {2602.03441},
  archivePrefix = {arXiv},
  primaryClass  = {math.DG},
}

@phdthesis{Schreiber2005,
  author        = {Schreiber, Urs},
  title         = {{From Loop Space Mechanics to Nonabelian Strings}},
  school        = {Universit{\"{a}}t Duisburg-Essen},
  year          = {2005},
  eprint        = {hep-th/0509163},
  archivePrefix = {arXiv},
  primaryClass  = {hep-th}
}

@article{FRS2016,
  author        = {Fiorenza, Domenico and Rogers, Chris and Schreiber, Urs},
  title         = {Higher $U(1)$-gerbe connections in geometric prequantization},
  journal       = {Reviews in Mathematical Physics},
  volume        = {28},
  number        = {06},
  pages         = {1650012},
  year          = {2016},
  doi           = {10.1142/S0129055X16500124},
  eprint        = {1304.0236},
  archivePrefix = {arXiv},
  primaryClass  = {math.DG}
}

@incollection{BorstenEtAl2025,
    author        = "Borsten, Leron and Jalali Farahani, Mehran and Jur{\v{c}}o, Branislav and Kim, Hyungrok and N{\'{a}}ro{\v{z}}n{\'{y}}, Ji{\v{r}}{\'{i}} and Rist, Dominik and Saemann, Christian and Wolf, Martin",
    title         = "{{Higher Gauge Theory}}",
    booktitle     = "{Encyclopedia of Mathematical Physics}",
    edition       = "2nd",
    volume        = "4",
    pages         = "159--185",
    year          = "2025",
    doi           = "10.1016/B978-0-323-95703-8.00217-2",
    eprint        = "2401.05275",
    archivePrefix = "arXiv",
    primaryClass  = "hep-th"
}

@article{FSS13-CupCS,
  author    = {Fiorenza, Domenico and Sati, Hisham and Schreiber, Urs},
  title     = {{Extended higher cup-product Chern-Simons theories}},
  journal   = {Journal of Geometry and Physics},
  volume    = {74},
  pages     = {130--163},
  year      = {2013},
  doi       = {10.1016/j.geomphys.2013.07.011},
  eprint    = {1207.5449},
  archivePrefix = {arXiv},
  primaryClass  = {hep-th}
}

@article{CouzensLuscherSparks2025,
    author        = "Couzens, Christopher and L{\"u}scher, Alice and Sparks, James",
    title         = "{{Localizing punctures in M-theory}}",
    journal       = "J. High Energ. Phys.",
    volume        = "2026",
    number        = "2",
    pages         = "180",
    year          = "2026",
    doi           = "10.1007/JHEP02(2026)180",
    eprint        = "2511.03397",
    archivePrefix = "arXiv",
    primaryClass  = "hep-th"
}

@article{WanPowis1994,
  author  = {Wan, K. K. and Powis, J. J.},
  title   = {{Quantum mechanics in Dirac’s front form}},
  journal = {International Journal of Theoretical Physics},
  year    = {1994},
  volume  = {33},
  pages   = {553--574},
  doi     = {10.1007/BF00670516}
}

@misc{AyalaFrancis2025,
  title         = {{A parametrized Pontryagin--Thom theorem}},
  author        = {Ayala, David and Francis, John},
  year          = {2025},
  eprint        = {2512.10274},
  archivePrefix = {arXiv},
  primaryClass  = {math.AT}
}

@book{GoerssJardine2009,
  author    = {Goerss, Paul and Jardine, J. F.},
  title     = {{Simplicial homotopy theory}},
  publisher = {Birkhäuser},
  series    = {Modern Birkhäuser Classics},
  year      = {2009},
  note      = {Reprint of the 1999 edition (Progress in Mathematics, Vol. 174)},
  doi       = {10.1007/978-3-0346-0189-4}
}

@article{DonosGauntlettPantelidou2013,
  author        = {Donos, Aristomenis and Gauntlett, Jerome P. and Pantelidou, Christiana},
  title         = {{Semi-local quantum criticality in string/M-theory}},
  journal       = {J. High Energ. Phys.},
  volume        = {2013},
  number        = {3},
  pages         = {103},
  year          = {2013},
  doi           = {10.1007/JHEP03(2013)103},
  eprint        = {1212.1462},
  archivePrefix = {arXiv},
  primaryClass  = {hep-th}
}

@article{DonosetalEtAl2013,
  author        = {Donos, Aristomenis and Gauntlett, Jerome P. and Sonner, Julian and Withers, Benjamin},
  title         = {{Competing orders in M-theory: superfluids, stripes and metamagnetism}},
  journal       = {J. High Energ. Phys.},
  volume        = {2013},
  number        = {3},
  pages         = {108},
  year          = {2013},
  doi           = {10.1007/JHEP03(2013)108},
  eprint        = {1212.0871},
  archivePrefix = {arXiv},
  primaryClass  = {hep-th}
}

@article{GauntlettetalSonnerWiseman2010,
  author        = {Gauntlett, Jerome and Sonner, Julian and Wiseman, Toby},
  title         = {{Quantum Criticality and Holographic Superconductors in M-theory}},
  journal       = {J. High Energ. Phys.},
  volume        = {2010},
  number        = {2},
  pages         = {060},
  year          = {2010},
  doi           = {10.1007/JHEP02(2010)060},
  eprint        = {0912.0512},
  archivePrefix = {arXiv},
  primaryClass  = {hep-th}
}

@article{GubserPufuRocha2010,
  author  = {Gubser, Steven S. and Pufu, Silviu S. and Rocha, Fabio D.},
  title   = {{Quantum critical superconductors in string theory and M-theory}},
  journal = {Phys. Lett. B},
  year    = {2010},
  volume  = {683},
  pages   = {201--204},
  doi     = {10.1016/j.physletb.2009.12.017}
}

@article{GauntlettSonnerWiseman2009,
  author  = {Gauntlett, Jerome P. and Sonner, Julian and Wiseman, Toby},
  title   = {{Holographic superconductivity in M-Theory}},
  journal = {Phys. Rev. Lett.},
  year    = {2009},
  volume  = {103},
  pages   = {151601},
  doi     = {10.1103/PhysRevLett.103.151601}
}

@book{Gaggioli2019,
  author    = {Baggioli, Matteo},
  title     = {{Applied Holography -- A Practical Mini-Course}},
  series    = {SpringerBriefs in Physics},
  publisher = {Springer},
  year      = {2019},
  doi       = {10.1007/978-3-030-35184-7}
}

@article{BuijsFelixMurillo2011,
  author        = {Buijs, Urtzi and F{\'e}lix, Yves and Murillo, Aniceto},
  title         = {{$L_\infty$-models of based mapping spaces}},
  journal       = {J. Math. Soc. Japan},
  volume        = {63},
  number        = {2},
  pages         = {503--524},
  year          = {2011},
  doi           = {10.2969/jmsj/06320503},
  eprint        = {1209.4756},
  archivePrefix = {arXiv},
  primaryClass  = {math.AT}
}

@article{Berglund2015,
  author        = {Berglund, Alexander},
  title         = {{Rational homotopy theory of mapping spaces via Lie theory for $L_\infty$ algebras}},
  journal       = {Homology, Homotopy and Applications},
  volume        = {17},
  number        = {2},
  year          = {2015},
  doi           = {10.4310/HHA.2015.v17.n2.a16},
  eprint        = {1110.6145},
  archivePrefix = {arXiv},
  primaryClass  = {math.AT}
}

@article{Lazarev2013,
  author        = {Lazarev, Andrey},
  title         = {{Maurer--Cartan moduli and models for function spaces}},
  journal       = {Advances in Mathematics},
  volume        = {235},
  pages         = {296--320},
  year          = {2013},
  doi           = {10.1016/j.aim.2012.11.009},
  eprint        = {1109.3715},
  archivePrefix = {arXiv},
  primaryClass  = {math.AT}
}

@article{FerrariRuffino2012,
    author        = "Ferrari Ruffino, Fabio",
    title         = "{{Classifying A-field and B-field configurations in the presence of D-branes - Part II: Stacks of D-branes}}",
    journal       = "Nucl. Phys. B",
    volume        = "858",
    pages         = "377--404",
    year          = "2012",
    doi           = "10.1016/j.nuclphysb.2012.01.013",
    eprint        = "1104.2798",
    archivePrefix = "arXiv",
    primaryClass  = "hep-th"
}

@article{BonoraFerrariSavelli2008,
    author = "Bonora, Loriano and Ferrari Ruffino, Fabio and Savelli, Raffaele",
    title = "{{Classifying A-field and B-field configurations in the presence of D-branes}}",
    journal = "JHEP",
    volume = "12",
    pages = "078",
    year = "2008",
    doi = "10.1088/1126-6708/2008/12/078",
    eprint = "0810.4291",
    archivePrefix = "arXiv",
    primaryClass = "hep-th"
}

@article{CareyJohnsonMurray2004,
    author = "Carey, Alan and Johnson, Stuart and Murray, Michael",
    title = "{{Holonomy on D-Branes}}",
    journal = "Journal of Geometry and Physics",
    volume = "52",
    number = "2",
    pages = "186--216",
    year = "2004",
    doi = "10.1016/j.geomphys.2004.02.008",
    eprint = "hep-th/0204199",
    archivePrefix = "arXiv"
}

@article{Gawedzki2002,
    author = "Gawedzki, Krzysztof and Reis, Nuno",
    title = "{{WZW Branes and Gerbes}}",
    journal = "Rev. Math. Phys.",
    volume = "14",
    number = "12",
    pages = "1281--1334",
    year = "2002",
    doi = "10.1142/S0129055X02001557",
    eprint = "hep-th/0205233",
    archivePrefix = "arXiv"
}

@article{FSS15-M5WZW,
    author = "Fiorenza, Domenico and Sati, Hisham and Schreiber, Urs",
    title = "{{The WZW term of the M5-brane and differential cohomotopy}}",
    journal = "J. Math. Phys.",
    volume = "56",
    number = "10",
    pages = "102301",
    year = "2015",
    doi = "10.1063/1.4932618",
    eprint = "1506.07557",
    archivePrefix = "arXiv",
    primaryClass = "math-ph"
}

@article{GreenHarveyMoore1996,
    author        = "Green, Michael B. and Harvey, Jeffrey A. and Moore, Gregory W.",
    title         = "{{I-brane inflow and anomalous couplings on D-branes}}",
    journal       = "Class. Quant. Grav.",
    volume        = "14",
    pages         = "47--52",
    year          = "1997",
    doi           = "10.1088/0264-9381/14/1/008",
    eprint        = "hep-th/9605033",
    archivePrefix = "arXiv",
    primaryClass  = "hep-th",
    reportNumber  = "EFI-96-15, RU-96-31"
}

@article{MinasianMoore1997,
  author        = {Minasian, Ruben and Moore, Gregory},
  title         = {{K-theory and Ramond-Ramond charge}},
  journal       = {Journal of High Energy Physics},
  volume        = {1997},
  number        = {11},
  pages         = {002},
  year          = {1997},
  doi           = {10.1088/1126-6708/1997/11/002},
  eprint        = {hep-th/9710230},
  archivePrefix = {arXiv},
  eprinttype    = {arxiv},
  primaryClass  = {hep-th}
}

@inproceedings{Freed2001,
  author    = {Freed, Daniel S.},
  title     = {{K-Theory in Quantum Field Theory}},
  booktitle = {Current Developments in Mathematics, 2001},
  series    = {Current Developments in Mathematics},
  volume    = {2001},
  number    = {1},
  pages     = {41--87},
  year      = {2001},
  publisher = {International Press of Boston},
  doi       = {10.4310/cdm.2001.v2001.n1.a2},
  eprint    = {math-ph/0206031},
  archivePrefix = {arXiv}
}

@article{KahleValentino2014,
    author = "Kahle, Alexander and Valentino, Alessandro",
    title = "{{T-Duality and Differential K-Theory}}",
    journal = "Communications in Contemporary Mathematics",
    volume = "16",
    number = "02",
    pages = "1350014",
    year = "2014",
    doi = "10.1142/S0219199713500144",
    eprint = "0912.2516",
    archivePrefix = "arXiv",
    primaryClass = "math.KT"
}

@misc{Timm2023,
  author       = {Timm, Carsten},
  title        = {{Theory of Superconductivity}},
  howpublished = {Lecture notes, TU Dresden},
  year         = {2023},
  url          = {https://ncatlab.org/nlab/files/Timm-Superconductivity.pdf}
}

@article{FreedMooreSegal2007b,
    author = "Freed, Daniel S. and Moore, Gregory W. and Segal, Graeme",
    title = "{Heisenberg Groups and Noncommutative Fluxes}",
    eprint = "hep-th/0605200",
    archivePrefix = "arXiv",
    doi = "10.1016/j.aop.2006.07.014",
    journal = "Annals Phys.",
    volume = "322",
    pages = "236--285",
    year = "2007"
}

@article{Schwinger1966,
    author = "Schwinger, Julian",
    title = "{{Magnetic Charge and Quantum Field Theory}}",
    doi = "10.1103/PhysRev.144.1087",
    journal = "Phys. Rev.",
    volume = "144",
    pages = "1087--1093",
    year = "1966"
}

@article{Zwanziger1968,
    author = "Zwanziger, Daniel",
    title = "{{Quantum Field Theory of Particles with Both Electric and Magnetic Charges}}",
    doi = "10.1103/PhysRev.176.1489",
    journal = "Phys. Rev.",
    volume = "176",
    pages = "1489--1495",
    year = "1968"
}

@misc{BaSS26-UnstableK,
  title={{Flux Quantization of Type IIA in Unstable K-Theory}},
  author={Banerjee, Pinak and Sati, Hisham and Schreiber, Urs},
  journal={arXiv preprint arXiv:2605.25035},
  year={2026},
  month={5},
  eprint={2605.25035},
  archivePrefix={arXiv},
  primaryClass={hep-th}
}

@misc{SS26-BBC,
  title={{Bulk-Edge Correspondence via Higher Gauge Theory}}, 
  author={Hisham Sati and Urs Schreiber},
  year={2026},
  eprint={2605.10232},
  archivePrefix={arXiv},
  primaryClass={hep-th},
}

@incollection{Kirillov1990,
  author    = {Kirillov, Alexandre},
  title     = {{Geometric Quantization}},
  booktitle = {{Dynamical Systems IV}},
  series    = {Encyclopedia of Mathematical Sciences},
  volume    = {4},
  publisher = {Springer},
  year      = {1990},
  pages     = {137--172},
  doi       = {10.1007/978-3-662-06793-2_2}
}

@book{HusemollerEtAl2008,
  title={{Basic Bundle Theory and K-Cohomology Invariants}},
  author={Husem{\"o}ller, Dale and Joachim, Michael and Jur{\v{c}}o, Branislav and Schottenloher, Martin},
  series={Lecture Notes in Physics},
  volume={726},
  year={2008},
  publisher={Springer},
  doi={10.1007/978-3-540-74956-1}
}

@unpublished{Lurie2017,
  author = {Lurie, Jacob},
  title  = {{Higher Algebra}},
  year   = {2017},
  month  = {9},
  url    = {https://www.math.ias.edu/~lurie/papers/HA.pdf}
}

@article{SS24-Cyc,
  author        = {Sati, Hisham and Schreiber, Urs},
  title         = {{Cyclification of Orbifolds}},
  journal       = {Communications in Mathematical Physics},
  volume        = {405},
  pages         = {67},
  year          = {2024},
  doi           = {10.1007/s00220-023-04929-w},
  eprint        = {2212.13836},
  archiveprefix = {arXiv},
  primaryclass  = {math.AT}
}

@article{Maxwell1865,
  author  = {James Clerk Maxwell},
  title   = {{A Dynamical Theory of the Electromagnetic Field}},
  journal = {Philosophical Transactions of the Royal Society of London},
  year    = {1865},
  volume  = {155},
  pages   = {459--512},
  doi     = {10.1098/rstl.1865.0008},
  url     = {https://www.jstor.org/stable/108892}
}

@article{Waldorf2026,
  author        = {Waldorf, Konrad},
  title         = {{Adjusted connections on non-abelian bundle gerbes}},
  year          = {2026},
  eprint        = {2604.24513},
  archivePrefix = {arXiv},
  primaryClass  = {math.DG}
}

@article{RistSaemannWolf2026,
  author        = {Rist, Dominik and Saemann, Christian and Wolf, Martin},
  title         = {{Explicit Non-Abelian Gerbes with Connections}},
  journal       = {Journal of Physics A: Mathematical and Theoretical},
  year          = {2026},
  volume        = {59},
  number        = {3},
  pages         = {035201},
  doi           = {10.1088/1751-8121/ae2e60},
  eprint        = {2203.00092},
  archivePrefix = {arXiv},
  primaryClass  = {math.AT}
}

@article{SchreiberWaldorf2013,
  author        = {Urs Schreiber and Konrad Waldorf},
  title         = {{Connections on non-abelian gerbes and their holonomy}},
  journal       = {Theory and Applications of Categories},
  year          = {2013},
  volume        = {28},
  number        = {17},
  pages         = {476--540},
  url           = {http://www.tac.mta.ca/tac/volumes/28/17/28-17abs.html},
  eprint        = {0808.1923},
  archivePrefix = {arXiv},
  primaryClass  = {math.DG}
}

@incollection{BaezSchreiber2007,
  author    = {John C. Baez and Urs Schreiber},
  title     = {{Higher Gauge Theory}},
  booktitle = {Categories in Algebra, Geometry and Mathematical Physics},
  series    = {Contemporary Mathematics},
  volume    = {431},
  pages     = {7--30},
  publisher = {American Mathematical Society},
  year      = {2007},
  doi       = {10.1090/conm/431/08270},
  eprint    = {math/0511710},
  archivePrefix = {arXiv},
  primaryClass  = {math.DG}
}

@techreport{Joyal2008,
  author      = {Joyal, Andr\'{e}},
  title       = {{The theory of quasi-categories and its applications}},
  institution = {Centre de Recerca Matem\`{a}tica},
  address     = {Barcelona},
  type        = {Quaderns},
  number      = {45},
  month       = 2,
  year        = {2008},
  note        = {Lectures at Advanced Course on Simplicial Methods in Higher Categories, Vol. II},
  url         = {http://mat.uab.cat/~kock/crm/hocat/advanced-course/Quadern45-2.pdf}
}

@book{Borceux1994-I,
  author    = {Borceux, Francis},
  title     = {{Handbook of Categorical Algebra: Volume 1, Basic Category Theory}},
  series    = {Encyclopedia of Mathematics and its Applications},
  volume    = {50},
  publisher = {Cambridge University Press},
  year      = {1994},
  doi       = {10.1017/CBO9780511525858},
  address   = {Cambridge}
}

@book{Richter2020,
  author    = {Richter, Birgit},
  title     = {{From Categories to Homotopy Theory}},
  series    = {Cambridge Studies in Advanced Mathematics},
  volume    = {188},
  publisher = {Cambridge University Press},
  year      = {2020},
  doi       = {10.1017/9781108855891},
  address   = {Cambridge}
}

@misc{Schreiber2016,
  author       = {Schreiber, Urs},
  title        = {{Higher Structures in Mathematics and Physics}},
  howpublished = {Talk at {Oberwolfach} Workshop 1651a: {Higher Structures in Geometry and Physics}},
  month        = dec,
  year         = {2016},
  urls         = {https://ncatlab.org/schreiber/show/Higher+Structures},
  organization = {Mathematisches Forschungsinstitut Oberwolfach}
}

@book{CattaneoGiaquintoXu2011,
  editor    = {Cattaneo, Alberto S. and Giaquinto, Anthony and Xu, Ping},
  title     = {{Higher Structures in Geometry and Physics: In Honor of Murray Gerstenhaber and Jim Stasheff}},
  series    = {Progress in Mathematics},
  volume    = {287},
  publisher = {Birkh{\"a}user},
  address   = {Boston, MA},
  year      = {2011},
  doi       = {10.1007/978-0-8176-4735-3}
}

@article{FSS2019,
  author  = {Fiorenza, Domenico and Sati, Hisham and Schreiber, Urs},
  title   = {{The Rational Higher Structure of M-theory}},
  journal = {Fortschritte der Physik},
  year    = {2019},
  volume  = {67},
  number  = {8--9},
  pages   = {1910017},
  doi     = {10.1002/prop.201910017},
  eprint  = {1903.02834},
  archivePrefix = {arXiv},
  primaryClass  = {hep-th}
}

@article{JurcoEtAl2019,
  author  = {Jur{\v{c}}o, Branislav and Saemann, Christian and Schreiber, Urs and Wolf, Martin},
  title   = {{Higher Structures in M-Theory}},
  journal = {Fortschritte der Physik},
  year    = {2019},
  volume  = {67},
  number  = {8--9},
  pages   = {1910001},
  doi     = {10.1002/prop.201910001},
  eprint  = {1903.02807},
  archivePrefix = {arXiv},
  primaryClass  = {hep-th}
}

@misc{GagliardoEtAl2025,
  title={{Principal 3-Bundles with Adjusted Connections}}, 
  author={Gianni Gagliardo and Christian Saemann and Roberto Tellez-Dominguez},
  year={2025},
  eprint={2505.13368},
  archivePrefix={arXiv},
  primaryClass={math-ph}
}

@article{Lawvere2007,
  author  = {Lawvere, F. William},
  title   = {{Axiomatic cohesion}},
  journal = {Theory and Applications of Categories},
  year    = {2007},
  volume  = {19},
  number  = {3},
  pages   = {41--49},
  url     = {http://www.tac.mta.ca/tac/volumes/19/3/19-03abs.html}
}

@article{BaezHoffnung2011,
  author        = {Baez, John C. and Hoffnung, Alexander E.},
  title         = {{Convenient Categories of Smooth Spaces}},
  journal       = {Transactions of the American Mathematical Society},
  year          = {2011},
  volume        = {363},
  number        = {11},
  pages         = {5789--5825},
  doi           = {10.1090/S0002-9947-2011-05107-X},
  eprint        = {0807.1704},
  archivePrefix = {arXiv},
  primaryClass  = {math.DG}
}

@article{Steenrod1967,
  author  = {Steenrod, Norman E.},
  title   = {{A convenient category of topological spaces}},
  journal = {Michigan Mathematical Journal},
  year    = {1967},
  volume  = {14},
  number  = {2},
  pages   = {133--152},
  doi     = {10.1307/mmj/1028999711}
}

@book{Tu2011,
  author    = {Tu, Loring W.},
  title     = {An Introduction to Manifolds},
  edition   = {2nd},
  series    = {Universitext},
  publisher = {Springer},
  address   = {New York, NY},
  year      = {2011},
  doi       = {10.1007/978-1-4419-7400-6},
  isbn      = {978-1-4419-7399-3}
}

@article{Jardine1987,
  author  = {Jardine, John F.},
  title   = {{Simplicial presheaves}},
  journal = {Journal of Pure and Applied Algebra},
  year    = {1987},
  volume  = {47},
  number  = {1},
  pages   = {35--87},
  doi     = {10.1016/0022-4049(87)90100-9},
  issn    = {0022-4049}
}

@article{Brown1973,
  author  = {Brown, Kenneth S.},
  title   = {{Abstract homotopy theory and generalized sheaf cohomology}},
  journal = {Transactions of the American Mathematical Society},
  year    = {1973},
  volume  = {186},
  pages   = {419--458},
  doi     = {10.2307/1996573},
  jstor   = {1996573},
  publisher = {American Mathematical Society}
}

@article{McKay2019,
  author        = {McKay, Benjamin},
  title         = {{Introduction to exterior differential systems}},
  journal       = {Banach Center Publications},
  year          = {2019},
  volume        = {117},
  pages         = {45--55},
  doi           = {10.4064/bc117-2},
  eprint        = {1706.09697},
  archivePrefix = {arXiv},
  primaryClass  = {math.DG},
  publisher     = {Institute of Mathematics, Polish Academy of Sciences}
}

@book{BryantEtAl1991,
  title     = {{Exterior Differential Systems}},
  author    = {Bryant, Robert L. and Chern, Shiing-Shen and Gardner, Robert B. and Goldschmidt, Hubert L. and Griffiths, Phillip A.},
  series    = {Mathematical Sciences Research Institute Publications},
  volume    = {18},
  year      = {1991},
  publisher = {Springer-Verlag},
  address   = {New York, NY},
  doi       = {10.1007/978-1-4613-9714-4},
  isbn      = {978-1-4613-9716-8}
}

@misc{Schreiber2014,
  author       = {Schreiber, Urs},
  title        = {{Higher field bundles for Gauge fields}},
  howpublished = {Talk given at \textit{Operator and geometric analysis on quantum theory}, Levico Terme (Trento), Italy},
  month        = sep,
  year         = {2014},
  url          = {https://ncatlab.org/schreiber/show/Higher+Field+Bundles},
  note         = {September 15--19, 2014}
}

@incollection{BursztyndelHoyo2025,
  author        = {Bursztyn, Henrique and del Hoyo, Matias},
  title         = {{Lie Groupoids}},
  booktitle     = {Encyclopedia of Mathematical Physics},
  edition       = {2nd},
  publisher     = {Elsevier},
  year          = {2025},
  pages         = {469},
  doi           = {10.1016/B978-0-323-95703-8.00024-0},
  eprint        = {2309.14105},
  archivePrefix = {arXiv},
  primaryClass  = {math.DG}
}

@book{Brylinski1993,
  author    = {Brylinski, Jean-Luc},
  title     = {{Loop Spaces, Characteristic Classes and Geometric Quantization}},
  series    = {Progress in Mathematics},
  volume    = {107},
  publisher = {Birkhäuser},
  address   = {Boston, MA},
  year      = {1993},
  doi       = {10.1007/978-0-8176-4731-5},
  isbn      = {978-0-8176-3644-9}
}

@book{GiachettaMangiarottiSardanashvily2009,
  title     = {{Advanced Classical Field Theory}},
  author    = {Giachetta, Giovanni and Mangiarotti, Luigi and Sardanashvily, Gennadi},
  year      = {2009},
  publisher = {World Scientific},
  address   = {Singapore},
  doi       = {10.1142/7189},
  isbn      = {978-981-283-914-5}
}

@book{GriffithsMorgan2013,
  title     = {{Rational Homotopy Theory and Differential Forms}},
  author    = {Griffiths, Phillip and Morgan, John},
  series    = {Progress in Mathematics},
  volume    = {16},
  year      = {2013},
  publisher = {Birkh{\"a}user},
  address   = {Boston, MA},
  doi       = {10.1007/978-1-4614-8468-4},
  isbn      = {978-1-4614-8467-7}
}

@article{LadaStasheff1992,
    author        = {Lada, Tom and Stasheff, Jim},
    title         = {{Introduction to sh Lie algebras for physicists}},
    doi           = {10.1007/BF00671791},
    journal       = {Int. J. Theor. Phys.},
    volume        = {32},
    pages         = {1087--1103},
    year          = {1993},
    eprint        = {hep-th/9209099},
    archivePrefix = {arXiv},
    primaryClass  = {hep-th}
}

@article{Friedman2012,
  author  = {Friedman, Greg},
  title   = {{An elementary illustrated introduction to simplicial sets}},
  journal = {Rocky Mountain Journal of Mathematics},
  volume  = {42},
  number  = {2},
  pages   = {353--423},
  year    = {2012},
  doi     = {10.1216/RMJ-2012-42-2-353},
  eprint  = {0809.4221},
  archivePrefix = {arXiv},
  primaryClass  = {math.AT}
}

@incollection{Hess2007,
  author    = {Hess, Kathryn},
  title     = {{Rational homotopy theory: a brief introduction}},
  booktitle = {Interactions between Homotopy Theory and Algebra},
  editor    = {Avramov, Luchezar L. and Christensen, J. Daniel and Dwyer, William G. and Mandell, Michael A. and Shipley, Brooke},
  series    = {Contemporary Mathematics},
  volume    = {436},
  publisher = {American Mathematical Society},
  address   = {Providence, RI},
  year      = {2007},
  pages     = {175--202},
  doi       = {10.1090/conm/436/08409},
  eprint    = {math/0604626},
  archivePrefix = {arXiv},
  primaryClass  = {math.AT}
}

@misc{KSS26-HigherDimAnyons,
  title = {{Higher-Dimensional Anyons via Higher Cohomotopy}},
  author = {Kallel, Sadok and Sati, Hisham and Schreiber, Urs},
  year = {2026},
  eprint = {2601.03150},
  archivePrefix = {arXiv},
  primaryClass = {cond-mat.str-el},
  month = {01},
}

@book{Awodey2010,
  author    = {Awodey, Steve},
  title     = {{Category Theory}},
  series    = {Oxford Logic Guides},
  volume    = {52},
  edition   = {2nd},
  publisher = {Oxford University Press},
  address   = {Oxford},
  year      = {2010},
  doi       = {10.1093/acprof:oso/9780198568612.001.0001},
  isbn      = {978-0-19-923718-0}
}

@article{BaSS26-MString,
  author        = {Banerjee, Pinak and Sati, Hisham and Schreiber, Urs},
  title         = {{Flux Quantization on M-Strings}},
  journal       = {Journal of Physics A: Math. Theor.},
  volume        = {59},
  number        = {26},
  year          = {2026},
  month         = {3},
  doi           = {10.1088/1751-8121/ae808a},
  eprint        = {2603.14440},
  archivePrefix = {arXiv},
  primaryClass  = {hep-th}
}

@article{Banerjee2025-Potentials,
    author        = {Banerjee, Pinak},
    title         = {{Gauge potentials on the M5 brane in twisted equivariant cohomotopy}},
    journal       = {Journal of High Energy Physics},
    volume        = {2025},
    number        = {12},
    pages         = {162},
    year          = {2025},
    doi           = {10.1007/JHEP12(2025)162},
    eprint        = {2507.07049},
    archivePrefix = {arXiv},
    primaryClass  = {hep-th}
}

@article{BMSS2019,
  author        = {Braunack-Mayer, Vincent and Sati, Hisham and Schreiber, Urs},
  title         = {{Gauge enhancement of Super M-Branes via rational parameterized stable homotopy theory}},
  journal       = {Communications in Mathematical Physics},
  volume        = {371},
  pages         = {197--265},
  year          = {2019},
  doi           = {10.1007/s00220-019-03441-4},
  eprint        = {1806.01115},
  archivePrefix = {arXiv},
  primaryClass  = {hep-th}
}

@article{Bunke2012,
      title={Differential cohomology}, 
      author={Ulrich Bunke},
      year={2012},
      eprint={1208.3961},
      archivePrefix={arXiv},
      primaryClass={math.AT},
      note={Course notes}
}

@article{GSS24-SuGra,
  author        = {Giotopoulos, Grigorios and Sati, Hisham and Schreiber, Urs},
  title         = {{Flux quantization on 11D superspace}},
  journal       = {Journal of High Energ. Physics},
  volume        = {2024},
  number        = {7},
  pages         = {82},
  year          = {2024},
  month         = {7},
  doi           = {10.1007/JHEP07(2024)082},
  eprint        = {2403.16456},
  archiveprefix = {arXiv},
  primaryclass  = {hep-th}
}

@article{Herzog2007,
    author = "Herzog, Christopher P. and Kovtun, Pavel and Sachdev, Subir and Son, Dam Thanh",
    title = "{{Quantum critical transport, duality, and M-theory}}",
    journal = "Phys. Rev. D",
    volume = "75",
    year = "2007",
    pages = "085020",
    doi = "10.1103/PhysRevD.75.085020",
    eprint = "hep-th/0701036",
    archivePrefix = "arXiv",
    primaryClass = "hep-th"
}

@article{SS25-Complete,
  author        = {Sati, Hisham and Schreiber, Urs},
  title         = {{Complete Topological Quantization of Higher Gauge Fields}},
  journal       = {SciPost Physics Lecture Notes},
  volume        = {127},
  year          = {2026},
  doi           = {10.21468/SciPostPhysLectNotes.127},
  eprint        = {2512.12431},
  archivePrefix = {arXiv},
  primaryClass  = {hep-th}
}

@book{James1984,
  author    = {James, Ioan Mackenzie},
  title     = {{General Topology and Homotopy Theory}},
  publisher = {Springer},
  year      = {1984},
  doi       = {10.1007/978-1-4613-8283-6}
}

@article{ChoGangKim2020,
    author        = {Cho, Gil Young and Gang, Dongmin and Kim, Hee-Cheol},
    title         = {{M-theoretic Genesis of Topological Phases}},
    journal       = {Journal of High Energy Physics},
    volume        = {2020},
    number        = {11},
    pages         = {115},
    year          = {2020},
    doi           = {10.1007/JHEP11(2020)115},
    eprint        = {2007.01532},
    archivePrefix = {arXiv},
    primaryClass  = {hep-th}
}

@article{FSS18-TD,
    author = {Fiorenza, Domenico and Sati, Hisham and Schreiber, Urs},
    title = {{T-Duality from super Lie $n$-algebra cocycles for super $p$-branes}},
    journal = {Adv. Theor. Math. Phys.},
    volume = {22},
    number = {5},
    year = {2018},
    pages = {1209--1270},
    doi = {10.4310/ATMP.2018.v22.n5.a3},
    eprint = {1611.06536},
    archivePrefix = {arXiv},
    primaryClass = {hep-th}
}

@article{GSS25-TD,
    author = {Giotopoulos, Grigorios and Sati, Hisham and Schreiber, Urs},
    title = {{Super-Lie$_\infty$ T-Duality and M-Theory}},
    journal = {Rev. Math. Phys.},
    year = {2025},
    doi = {10.1142/S0129055X25500187},
    eprint = {2411.10260},
    archivePrefix = {arXiv},
    primaryClass = {hep-th}
}

@article{FSS15-WZW,
    author = {Fiorenza, Domenico and Sati, Hisham and Schreiber, Urs},
    title = {{Super Lie $n$-algebra extensions, higher WZW models, and super $p$-branes with tensor multiplet fields}},
    journal = {International Journal of Geometric Methods in Modern Physics},
    volume = {12},
    number = {02},
    pages = {1550018},
    year = {2015},
    doi = {10.1142/S0219887815500188},
    eprint = {1308.5264},
    archivePrefix = {arXiv},
    primaryClass = {hep-th}
}

@article{CastellaniDAuria2025,
    author = {Castellani, Leonardo and D'Auria, Riccardo},
    title = {{The $L_\infty$-structure of Free Differential Algebras and Supergravity}},
    journal = {International Journal of Modern Physics A},
    year = {2025},
    doi = {10.1142/S0217751X25501787},
    eprint = {2507.20344},
    archivePrefix = {arXiv},
    primaryClass = {hep-th}
}

@article{HoweSezgin1997,
  author        = {Howe, P. S. and Sezgin, E.},
  title         = {{$D = 11, p = 5$}},
  journal       = {Phys. Lett. B},
  volume        = {394},
  year          = {1997},
  pages         = {62--66},
  doi           = {10.1016/S0370-2693(96)01672-3},
  eprint        = {hep-th/9611008},
  archivePrefix = {arXiv},
  primaryClass  = {hep-th}
}

@article{BrinkHowe1980,
  author  = {Brink, L. and Howe, P.},
  title   = {{Eleven-Dimensional Supergravity on the Mass-Shell in Superspace}},
  journal = {Phys. Lett. B},
  volume  = {91},
  year    = {1980},
  pages   = {384--386},
  doi     = {10.1016/0370-2693(80)91002-3}
}

@article{CremmerFerrara1980,
  author  = {Cremmer, E. and Ferrara, S.},
  title   = {{Formulation of Eleven-Dimensional Supergravity in Superspace}},
  journal = {Phys. Lett. B},
  volume  = {91},
  year    = {1980},
  pages   = {61--66},
  doi     = {10.1016/0370-2693(80)90662-0}
}

@book{CDF1991,
  author    = {Castellani, L. and D'Auria, R. and Fr{\'e}, P.},
  title     = {{Supergravity and Superstrings -- A Geometric Perspective}},
  publisher = {World Scientific},
  year      = {1991},
  doi       = {10.1142/0224}
}

@article{MooreWitten2000,
    author = {Moore, Gregory and Witten, Edward},
    title = {{Self-Duality, Ramond-Ramond Fields, and K-Theory}},
    journal = {JHEP},
    volume = {05},
    year = {2000},
    pages = {032},
    doi = {10.1088/1126-6708/2000/05/032},
    eprint = {hep-th/9912279},
    archivePrefix = {arXiv},
    primaryClass = {hep-th}
}

@article{HopkinsSinger2005,
  title = {{Quadratic Functions in Geometry, Topology, and M-Theory}},
  author = {Hopkins, Michael J. and Singer, Isadore M.},
  journal = {Journal of Differential Geometry},
  volume = {70},
  number = {3},
  pages = {329--452},
  year = {2005},
  month = {7},
  doi = {10.4310/jdg/1143642908},
  eprint = {math/0211216},
  archivePrefix = {arXiv},
  primaryClass = {math.AT}
}

@article{Witten1997Flux,
  title = {{On Flux Quantization in M-Theory and the Effective Action}},
  author = {Witten, Edward},
  journal = {Journal of Geometry and Physics},
  volume = {22},
  number = {1},
  pages = {1--13},
  year = {1997},
  month = {4},
  doi = {10.1016/S0393-0440(96)00042-3},
  eprint = {hep-th/9609122},
  archivePrefix = {arXiv},
  primaryClass = {hep-th}
}

@article{Witten1997Fivebrane,
  title = {{Five-Brane Effective Action in M-Theory}},
  author = {Witten, Edward},
  journal = {Journal of Geometry and Physics},
  volume = {22},
  number = {2},
  pages = {103--133},
  year = {1997},
  month = {5},
  doi = {10.1016/S0393-0440(97)80160-X},
  eprint = {hep-th/9610234},
  archivePrefix = {arXiv},
  primaryClass = {hep-th}
}

@article{FreedWitten1999,
  title = {{Anomalies in String Theory with D-Branes}},
  author = {Freed, Daniel S. and Witten, Edward},
  journal = {Asian Journal of Mathematics},
  volume = {3},
  number = {4},
  pages = {819--852},
  year = {1999},
  month = {12},
  doi = {10.4310/AJM.1999.v3.n4.a6},
  eprint = {hep-th/9907189},
  archivePrefix = {arXiv},
  primaryClass = {hep-th}
}

@incollection{Gawedzki1988,
    author    = "Gawedzki, Krzysztof",
    editor    = "'t Hooft, Gerard and Jaffe, Arthur and Mack, Gerhard and Mitter, Peter K. and Stora, Raymond",
    title     = "{{Topological Actions in Two-Dimensional Quantum Field Theories}}",
    booktitle = "{Nonperturbative Quantum Field Theory}",
    series    = "NATO ASI Series",
    volume    = "185",
    pages     = "101--141",
    publisher = "Springer",
    address   = "New York, NY",
    year      = "1988",
    doi       = "10.1007/978-1-4613-0729-7_5"
}

@article{Freed2002,
  title = {{Dirac Charge Quantization and Generalized Differential Cohomology}},
  author = {Freed, Daniel S.},
  journal = {Surveys in Differential Geometry},
  volume = {7},
  number = {1},
  pages = {129--194},
  year = {2002},
  doi = {10.4310/SDG.2002.v7.n1.a6},
  eprint = {hep-th/0011220},
  archivePrefix = {arXiv},
  primaryClass = {hep-th}
}

@article{LazaroiuShahbazi2022b,
  title = {{The geometry and DSZ quantization of four-dimensional supergravity}},
  author = {Lazaroiu, Calin and Shahbazi, Carlos S.},
  journal = {Letters in Mathematical Physics},
  volume = {113},
  number = {1},
  pages = {9},
  year = {2022},
  month = {12},
  ISSN = {1573-0530},
  DOI = {10.1007/s11005-022-01626-y},
  publisher = {Springer},
  eprint = {2101.07778},
  archivePrefix = {arXiv},
  primaryClass = {math.DG}
}

@article{LazaroiuShahbazi2022a,
  title = {{The duality covariant geometry and DSZ quantization of abelian gauge theory}},
  author = {Lazaroiu, Calin and Shahbazi, Carlos S.},
  journal = {Advances in Theoretical and Mathematical Physics},
  volume = {26},
  number = {7},
  pages = {2213--2312},
  year = {2022},
  ISSN = {1095-0753},
  DOI = {10.4310/atmp.2022.v26.n7.a5},
  publisher = {International Press of Boston},
  eprint = {2101.07236},
  archivePrefix = {arXiv},
  primaryClass = {math.DG}
}

@article{BeckerBeniniSchenkelSzabo2017,
  title = {{Abelian Duality on Globally Hyperbolic Spacetimes}},
  author = {Becker, Christian and Benini, Marco and Schenkel, Alexander and Szabo, Richard J.},
  journal = {Communications in Mathematical Physics},
  volume = {349},
  number = {1},
  pages = {361--392},
  year = {2017},
  month = {5},
  ISSN = {1432-0916},
  DOI = {10.1007/s00220-016-2669-9},
  publisher = {Springer Science and Business Media LLC},
  eprint = {1511.00316},
  archivePrefix = {arXiv},
  primaryClass = {hep-th}
}

@article{FreedMooreSegal2007,
  title = {{The Uncertainty of Fluxes}},
  author = {Freed, Daniel S. and Moore, Gregory W. and Segal, Graeme},
  journal = {Communications in Mathematical Physics},
  volume = {271},
  number = {1},
  pages = {247--274},
  year = {2007},
  month = {1},
  ISSN = {1432-0916},
  DOI = {10.1007/s00220-006-0181-3},
  publisher = {Springer Science and Business Media LLC},
  eprint = {hep-th/0605198},
  archivePrefix = {arXiv},
  primaryClass = {hep-th}
}

@book{HenneauxTeitelboim1992,
  title     = {{Quantization of Gauge Systems}},
  author    = {Henneaux, Marc and Teitelboim, Claudio},
  year      = {1992},
  publisher = {Princeton University Press},
  address   = {Princeton, NJ},
  isbn      = {978-0691037691},
  doi       = {10.2307/j.ctv10crg0r}
}

@article{Dirac1949,
  title = {{Forms of Relativistic Dynamics}},
  author = {Dirac, P. A. M.},
  journal = {Reviews of Modern Physics},
  volume = {21},
  number = {3},
  pages = {392--399},
  year = {1949},
  month = {7},
  doi = {10.1103/RevModPhys.21.392},
  url = {https://doi.org/10.1103/RevModPhys.21.392},
  publisher = {American Physical Society}
}

@article{Dirac1931,
  title = {{Quantised Singularities in the Electromagnetic Field}},
  author = {Dirac, P. A. M.},
  journal = {Proceedings of the Royal Society of London. Series A, Containing Papers of a Mathematical and Physical Character},
  volume = {133},
  number = {821},
  pages = {60--72},
  year = {1931},
  month = {9},
  doi = {10.1098/rspa.1931.0130}
}

@article{SS25-ISQS29,
  author        = {Sati, H. and Schreiber, U.},
  title         = {{Non-Lagrangian Construction of Anyons via Flux Quantization in Cohomotopy}},
  journal       = {Journal of Physics: Conference Series},
  volume        = {3152},
  number        = {1},
  pages         = {012024},
  year          = {2025},
  month         = {7},
  publisher     = {IOP Publishing},
  doi           = {10.1088/1742-6596/3152/1/012024},
  note          = {The XXIX International Conference on Integrable Systems and Quantum Symmetries},
  eprint        = {2509.02577},
  archivePrefix = {arXiv},
  primaryClass  = {math-ph}
}

@inbook{SimmsWoodhouse1976,
  author    = {David J. Simms and Nicholas M. J. Woodhouse},
  title     = {{Observables}},
  booktitle = {{Lectures on Geometric Quantization}},
  series    = {Lecture Notes in Physics},
  volume    = {53},
  chapter   = {4},
  publisher = {Springer},
  year      = {1976},
  doi       = {10.1007/3-540-07860-6_4},
  isbn      = {978-3-540-07860-9}
}

@article{KhavkineSchreiber2026,
  author        = {Igor Khavkine and Urs Schreiber},
  title         = {{Synthetic Geometry of Differential Equations: Jets and Comonad Structure}},
  journal       = {Journal of Geometry and Physics},
  year          = {2026},
  note          = {to appear},
  eprint        = {1701.06238},
  archivePrefix = {arXiv},
  primaryClass  = {math.DG}
}

@article{Sc18-ToposLectures,
  author       = {Schreiber, Urs},
  title        = {{Categories and Toposes -- Differential Cohesive Higher Toposes}},
  howpublished = {Lecture series at "Modern Mathematics Methods in Physics: Diffeology, Categories and Toposes and Non-commutative Geometry" Summer School},
  month        = {6},
  year         = {2018},
  address      = {Nesin Mathematics Village -- Şirince, Izmir, Turkey},
  url          = {https://ncatlab.org/schreiber/show/Categories+and+Toposes}
}

@article{BeniniSchenkelSchreiber2018,
  author        = {Benini, Marco and Schenkel, Alexander and Schreiber, Urs},
  title         = {{The Stack of Yang--Mills Fields on Lorentzian Manifolds}},
  journal       = {Communications in Mathematical Physics},
  year          = {2018},
  volume        = {359},
  number        = {2},
  pages         = {765--820},
  month         = {3},
  doi           = {10.1007/s00220-018-3120-1},
  archivePrefix = {arXiv},
  eprint        = {1704.01378},
  primaryClass  = {math-ph}
}

@article{IbortMas2025,
  author        = {Ibort, Alberto and Mas, Arnau},
  title         = {{Smooth sets of fields: A pedagogical introduction}},
  journal       = {Geometric Mechanics},
  year          = {2025},
  volume        = {2},
  number        = {3},
  pages         = {251--274},
  month         = {8},
  doi           = {10.1142/s2972458925400052},
  archivePrefix = {arXiv},
  eprint        = {2510.20422},
  primaryClass  = {math-ph}
}

@book{IglesiasZemmour2013,
  author    = {Iglesias-Zemmour, Patrick},
  title     = {{Diffeology}},
  publisher = {American Mathematical Society (AMS)},
  year      = {2013},
  series    = {Mathematical Surveys and Monographs},
  volume    = {185},
  address   = {Providence, RI},
  isbn      = {978-0-8218-9131-5},
  url       = {https://bookstore.ams.org/surv-185}
}

@article{NSS2015a,
  author       = {Nikolaus, Thomas and Schreiber, Urs and Stevenson, Danny},
  title        = {{Principal $\infty$-bundles: general theory}},
  journal      = {Journal of Homotopy and Related Structures},
  year         = {2015},
  volume       = {10},
  number       = {4},
  pages        = {749--801},
  month        = {6},
  doi          = {10.1007/s40062-014-0083-6},
  archivePrefix = {arXiv},
  eprint       = {1207.0248},
  primaryClass = {math.AT}
}

@book{RudolphSchmidt2017,
  author    = {Rudolph, Gerd and Schmidt, Matthias},
  title     = {{Differential Geometry and Mathematical Physics: Part II. Fibre Bundles, Topology and Gauge Fields}},
  publisher = {Springer Netherlands},
  year      = {2017},
  series    = {Theoretical and Mathematical Physics},
  address   = {Dordrecht},
  isbn      = {9789402409598},
  
  doi       = {10.1007/978-94-024-0959-8}
}

@book{AguilarGitlerPrieto2002,
  author    = {Aguilar, Marcelo and Gitler, Samuel and Prieto, Carlos},
  title     = {{Algebraic Topology from a Homotopical Viewpoint}},
  publisher = {Springer},
  year      = {2002},
  series    = {Universitext},
  doi       = {10.1007/b97586},
  isbn      = {9780387224893}
}

@book{Lee2012,
  author    = {Lee, John M.},
  title     = {{Introduction to Smooth Manifolds}},
  edition   = {Second},
  series    = {Graduate Texts in Mathematics},
  volume    = {218},
  publisher = {Springer},
  address   = {New York},
  year      = {2012},
  isbn      = {9781441999825},
  doi       = {10.1007/978-1-4419-9982-5}
}

@book{Moerdijk2003,
  author    = {Moerdijk, Ieke and Mr{\v{c}}un, Janez},
  title     = {{Introduction to Foliations and Lie Groupoids}},
  publisher = {Cambridge University Press},
  year      = {2003},
  address   = {Cambridge},
  doi       = {10.1017/cbo9780511615450},
  isbn      = {9780511615450}
}

@article{AtiyahSegal2004,
  author       = {Atiyah, Michael and Segal, Graeme},
  title        = {{Twisted K-theory}},
  journal      = {Ukrainian Mathematical Bulletin},
  year         = {2004},
  volume       = {1},
  number       = {3},
  pages        = {291--330},
  eprint       = {math/0407054},
  primaryClass = {math.KT},
  url          = {http://iamm.su/en/journals/j879/?VID=10}
}

@article{SS21-M5Anomaly,
  author        = {Sati, Hisham and Schreiber, Urs},
  title         = {{Twisted cohomotopy implies M5-brane anomaly cancellation}},
  journal       = {Letters in Mathematical Physics},
  year          = {2021},
  month         = {9},
  volume        = {111},
  number        = {5},
  doi           = {10.1007/s11005-021-01452-8},

  archivePrefix = {arXiv},
  eprint        = {2002.07737},
  primaryClass  = {hep-th}
}

@article{FSS17-Sphere,
  author        = {Fiorenza, Domenico and Sati, Hisham and Schreiber, Urs},
  title         = {{Rational sphere valued supercocycles in M-theory and type IIA string theory}},
  journal       = {Journal of Geometry and Physics},
  year          = {2017},
  volume        = {114},
  pages         = {91--108},
  month         = {4},
  doi           = {10.1016/j.geomphys.2016.11.024},
  eprint        = {1606.03206},
  archiveprefix = {arXiv},
  primaryclass  = {hep-th}
}

@article{SS25-WilsonLoops,
  title = {{Renormalization of Chern-Simons Wilson Loops via Flux Quantization in Cohomotopy}},
  author = {Sati, Hisham and Schreiber, Urs},
  journal = {Reviews in Mathematical Physics},
  year = {2025},
  doi = {https://doi.org/10.1142/S0129055X25500382},
  eprint = {2509.25336},
  archivePrefix = {arXiv},
  primaryClass = {hep-th}
}

@article{Toen2002,
  author       = {To{\"e}n, Bertrand},
  title        = {{Stacks and Non-abelian cohomology}},
  howpublished = {Lecture at the Introductory Workshop on Algebraic Stacks, Intersection Theory, and Non-Abelian Hodge Theory, MSRI},
  year         = {2002},
  url          = {https://perso.math.univ-toulouse.fr/btoen/files/2015/02/msri2002.pdf}
}

@article{Lurie2014,
  author       = {Lurie, Jacob},
  title        = {Nonabelian {P}oincar{\'e} Duality},
  howpublished = {Lecture 8 in: Tamagawa Numbers via Nonabelian Poincare Duality (282y)},
  year         = {2014},
  url          = {http://www.math.harvard.edu/~lurie/282ynotes/LectureVIII-Poincare.pdf}
}

@book{Kochman1996,
  author    = {Kochman, Stanley O.},
  title     = {{Bordism, Stable Homotopy and Adams Spectral Sequences}},
  publisher = {American Mathematical Society},
  year      = {1996},
  series    = {Fields Institute Monographs},
  volume    = {7},
  address   = {Providence, RI},
  isbn      = {978-1-4704-3134-1},
  doi       = {10.1090/fim/007},
  url       = {https://bookstore.ams.org/fim-7}
}

@incollection{Alfonsi2025HigherGeometry,
  author        = {Alfonsi, Luigi},
  title         = {{Higher Geometry in Physics}},
  booktitle     = {Encyclopedia of Mathematical Physics},
  editor        = {Szabo, Richard and Bojowald, Martin},
  edition       = {2},
  publisher     = {Academic Press},
  year          = {2025},
  pages         = {39--61},
  isbn          = {978-0-323-95706-9},
  doi           = {10.1016/b978-0-323-95703-8.00209-3},
  eprint        = {2312.07308},
  archivePrefix = {arXiv},
  primaryClass  = {hep-th}
}

@book{Duff1999World,
  author    = {Duff, Michael J.},
  title     = {{The World in Eleven Dimensions: Supergravity, Supermembranes and M-Theory}},
  publisher = {IOP Publishing},
  year      = {1999},
  isbn      = {978-0-750-30672-0},
doi ={10.1201/9781482268737}
}

@inbook{SatiSchreiberStasheff2009,
  author        = {Sati, Hisham and Schreiber, Urs and Stasheff, Jim},
  editor        = {Battle, Guy and Brydges, David and Kupiainen, Antti and Magnen, Jacques},
  title         = {{$L_\infty$-Algebra Connections and Applications to String- and Chern-Simons $n$-Transport}},
  booktitle     = {{Q}uantum {F}ield {T}heory},
  series        = {Progress in Mathematics},
  volume        = {272},
  pages         = {303--424},
  publisher     = {Birkh{\"a}user Basel},
  year          = {2009},
  isbn          = {978-3-7643-8736-5},
  doi           = {10.1007/978-3-7643-8736-5_17},
  eprint        = {0801.3480},
  archivePrefix = {arXiv},
  primaryClass  = {math.DG}
}

@article{ToenVezzosi2005,
  author        = {To{\"e}n, Bertrand and Vezzosi, Gabriele},
  title         = {{Homotopical Algebraic Geometry I: Topos Theory}},
  journal       = {Advances in Mathematics},
  year          = {2005},
  volume        = {193},
  number        = {2},
  pages         = {257--372},
  doi           = {10.1016/j.aim.2004.05.004},
  archiveprefix = {arXiv},
  eprint        = {math/0207028},
  primaryclass  = {math.AG}
}

@book{Lurie2009,
  author    = {Jacob Lurie},
  title     = {{Higher Topos Theory}},
  series    = {Annals of Mathematics Studies},
  volume    = {170},
  publisher = {Princeton University Press},
  year      = {2009},
  isbn      = {978-0691140490},
  url       = {https://press.princeton.edu/titles/8957.html}
}

@book{Frankel2011,
  author    = {Frankel, Theodore},
  title     = {{The Geometry of Physics: An Introduction}},
  publisher = {Cambridge University Press},
  year      = {2011},
  doi       = {10.1017/CBO9781139061377},
  isbn      = {9781139061377}
}

@article{Alvarez1985,
  author  = {Alvarez, Orlando},
  title   = {{Topological Quantization and Cohomology}},
  journal = {Commun. Math. Phys.},
  year    = {1985},
  volume  = {100},
  number  = {2},
  pages   = {279--309},
  month   = jun,
  doi     = {10.1007/bf01212452}
}

@incollection{Sati2010,
  author        = {Sati, Hisham},
  title         = {{Geometric and Topological Structures Related to M-Branes}},
  booktitle     = {{Superstrings, Geometry, Topology, and $C^\ast$-Algebras}},
  series        = {Proceedings of Symposia in Pure Mathematics},
  volume        = {81},
  pages         = {181--236},
  publisher     = {American Mathematical Society},
  address       = {Providence, RI},
  year          = {2010},
  isbn          = {978-0-8218-4887-6},
  doi           = {10.1090/pspum/081/2681765},
  eprint        = {1001.5020},
  archivePrefix = {arXiv},
  primaryClass  = {math.DG}
}

@article{Sati2011,
 author={Sati, Hisham},
 title={{Geometric and topological structures related to M-branes II: Twisted String and $\mathrm{String^c}$ structures}}, 
 journal       = {Journal of the Australian Mathematical Society},
  year          = {2011},
  volume        = {90},
  pages         = {93-108},
  doi           = {10.1017/S1446788711001261},
  eprint={1007.5419},
  archivePrefix={arXiv},
  primaryClass={hep-th},
}

@article{Sati2018,
  author        = {Sati, Hisham},
  title         = {{Framed M-branes, corners, and topological invariants}},
  journal       = {Journal of Mathematical Physics},
  year          = {2018},
  volume        = {59},
  number        = {6},
  pages         = {062304},
  doi           = {10.1063/1.5007185},
  eprint        = {1310.1060},
  archivePrefix = {arXiv},
  primaryClass  = {hep-th}
}

@article{SatiVoronov2024,
  author        = {Sati, Hisham and Voronov, Alexander A.},
  title         = {{Mysterious Triality and Rational Homotopy Theory}},
  journal       = {Communications in Mathematical Physics},
  volume        = {400},
  number        = {3},
  pages         = {1915--1960},
  year          = {2023},
  month         = {6},
  doi           = {10.1007/s00220-023-04643-7},
  eprint        = {2111.14810},
  archivePrefix = {arXiv},
  primaryClass  = {hep-th}
}

@article{PavlovEtAl2024,
  author        = {Berwick-Evans, Daniel and Boavida de Brito, Pedro and Pavlov, Dmitri},
  title         = {{Classifying spaces of {$\infty$}-sheaves}},
  journal       = {Algebraic \& Geometric Topology},
  year          = {2024},
  volume        = {24},
  number        = {9},
  pages         = {4891--4937},
  doi           = {10.2140/agt.2024.24.4891},

  eprint        = {1912.10544},
  archivePrefix = {arXiv},
  primaryClass  = {math.AT}
}

@article{Grady2019,
   author={Grady, Daniel and Sati, Hisham},
   title={{Higher-twisted periodic smooth Deligne cohomology}},
   volume={21},
   ISSN={1532-0081},
   url={http://dx.doi.org/10.4310/HHA.2019.v21.n1.a7},
   DOI={10.4310/hha.2019.v21.n1.a7},
   number={1},
   journal={Homology, Homotopy and Applications},
   publisher={International Press of Boston},
   year={2019},
   pages={129–159} }

@article{Grady2021,
  author        = {Grady, Daniel and Sati, Hisham},
  title         = {{Differential cohomotopy versus differential cohomology for M-theory and differential lifts of Postnikov towers}},
  journal       = {Journal of Geometry and Physics},
  year          = {2021},
  volume        = {165},
  pages         = {104203},
  doi           = {10.1016/j.geomphys.2021.104203},

  eprint        = {2001.07640},
  archivePrefix = {arXiv},
  primaryClass  = {hep-th}
}

@article{GS22-KTheory,
   author={Grady, Daniel and Sati, Hisham},
   title={{Ramond–Ramond fields and twisted differential K-theory}},
   year={2022},
   volume={26},
   ISSN={1095-0753},
   url={http://dx.doi.org/10.4310/ATMP.2022.v26.n5.a2},
   DOI={10.4310/atmp.2022.v26.n5.a2},
   number={5},
   journal={Advances in Theoretical and Mathematical Physics},
   publisher={International Press of Boston},
   pages={1097–1155} }

@article{SSS12,
  author        = {Sati, Hisham and Schreiber, Urs and Stasheff, Jim},
  title         = {{Twisted Differential String and Fivebrane Structures}},
  journal       = {Commun. Math. Phys.},
  year          = {2012},
  volume        = {315},
  number        = {1},
  pages         = {169--213},
  doi           = {10.1007/s00220-012-1510-3},

  eprint        = {0910.4001},
  archivePrefix = {arXiv},
  primaryClass  = {math.AT}
}

@article{SS25-TEC,
  author        = {Sati, Hisham and Schreiber, Urs},
  title         = {{The Character Map in Twisted Equivariant Nonabelian Cohomology}},
  journal       = {Beijing Journal of Pure and Applied Mathematics},
  year          = {2025},
  volume        = {2},
  number        = {2},
  pages         = {515--617},
  doi           = {10.4310/bpam.250908174706},

  publisher     = {International Press of Boston},
  eprint        = {2011.06533},
  archivePrefix = {arXiv},
  primaryClass  = {hep-th}
}

@article{FSSt12-DiffClasses,
  author        = {Fiorenza, Domenico and Schreiber, Urs and Stasheff, Jim},
  title         = {{{\v{C}}ech cocycles for differential characteristic classes: an $\infty$-Lie theoretic construction}},
  journal       = {Advances in Theoretical and Mathematical Physics},
  year          = {2012},
  volume        = {16},
  number        = {1},
  pages         = {149--250},
  doi           = {10.4310/atmp.2012.v16.n1.a5},

  publisher     = {International Press of Boston},
  eprint        = {1011.4735},
  archivePrefix = {arXiv},
  primaryClass  = {math.AT}
}

@article{Sc13-dcct,
      title={{Differential cohomology in a cohesive $\infty$-topos}}, 
      author={Urs Schreiber},
      year={2013},
      eprint={1310.7930},
      archivePrefix={arXiv},
      primaryClass={math-ph},
      note = {v2 (2017): \url{https://ncatlab.org/schreiber/files/dcct170811.pdf}}
}

@article{GS25-FieldsI,
  author        = {Giotopoulos, Grigorios and Sati, Hisham},
  title         = {{Field Theory via Higher Geometry I: Smooth Sets of Fields}},
  journal       = {Journal of Geometry and Physics},
  year          = {2025},
  volume        = {213},
  pages         = {105462},
  month         = jul,
  doi           = {10.1016/j.geomphys.2025.105462},
  archivePrefix = {arXiv},
  eprint        = {2312.16301},
  primaryClass  = {math-ph}
}

@article{GS25-FieldsII,
  title         = {{Field Theory via Higher Geometry II: Thickened Smooth Sets as Synthetic Foundations}},
  author        = {Giotopoulos, Grigorios and Sati, Hisham},
  year          = {2025},
  eprint        = {2512.22816},
  archivePrefix = {arXiv},
  primaryClass  = {math-ph},
  note          = {arXiv:2512.22816 [math-ph]}
}

@inbook{Schreiber2025,
  author        = {Schreiber, Urs},
  title         = {{Higher Topos Theory in Physics}},
  booktitle     = {Encyclopedia of Mathematical Physics},
  publisher     = {Academic Press},
  editor        = {Szabo, Richard and Bojowald, Martin},
  edition       = {2},
  year          = {2025},
  pages         = {62--76},
  doi           = {10.1016/B978-0-323-95703-8.00210-X},
  isbn          = {9780323957069},
  archivePrefix = {arXiv},
  eprint        = {2311.11026},
  primaryClass  = {math-ph}
}

@article{FSS20-H,
  author        = {Fiorenza, Domenico and Sati, Hisham and Schreiber, Urs},
  title         = {{Twisted Cohomotopy Implies M-Theory Anomaly Cancellation on 8-Manifolds}},
  journal       = {Commun. Math. Phys.},
  volume        = {377},
  number        = {3},
  pages         = {1961--2025},
  year          = {2020},
  month         = {4},
  doi           = {10.1007/s00220-020-03707-2},
  eprint        = {1904.10207},
  archiveprefix = {arXiv},
  primaryclass  = {hep-th}
}

@article{FSS21-StrStruc,
  author        = {Fiorenza, Domenico and Sati, Hisham and Schreiber, Urs},
  title         = {{Twisted Cohomotopy implies Twisted String Structure on M5-branes}},
  journal       = {Journal of Mathematical Physics},
  year          = {2021},
  volume        = {62},
  number        = {4},
  pages         = {042301},
  doi           = {10.1063/5.0037786},
  eprint        = {2002.11093},
  archivePrefix = {arXiv},
  primaryClass  = {hep-th}
}

@article{FSS21-Hopf,
  author       = {Fiorenza, Domenico and Sati, Hisham and Schreiber, Urs},
  title        = {{Twisted Cohomotopy Implies Level Quantization of the Full 6d Wess-Zumino Term of the M5-Brane}},
  journal      = {Commun. Math. Phys.},
  year         = {2021},
  volume       = {384},
  number       = {1},
  pages        = {403--432},
  month        = {4},
  doi          = {10.1007/s00220-021-03951-0},
  archivePrefix = {arXiv},
  eprint       = {1906.07417},
  primaryClass = {hep-th}
}

@article{GSS25-M5,
  author       = {Giotopoulos, Grigorios and Sati, Hisham and Schreiber, Urs},
  title        = {{Flux Quantization on M5-Branes}},
  journal      = {Journal of High Energy Physics},
  year         = {2024},
  volume       = {2024},
  number       = {10},
  pages        = {140},
  month        = {10},
  doi          = {10.1007/JHEP10(2024)140},
  archivePrefix = {arXiv},
  eprint       = {2406.11304},
  primaryClass = {hep-th}
}

@article{SS25-Seifert,
  author       = {Sati, Hisham and Schreiber, Urs},
  title        = {{Anyons on M5-probes of Seifert 3-orbifolds via Flux Quantization}},
  journal      = {Letters in Mathematical Physics},
  year         = {2025},
  volume       = {115},
  number       = {36},
  month        = {3},
  doi          = {10.1007/s11005-025-01918-z},
  archivePrefix = {arXiv},
  eprint       = {2411.16852},
  primaryClass = {hep-th}
}

@article{SS24-Phase,
  author        = {Sati, Hisham and Schreiber, Urs},
  title         = {{Flux Quantization on Phase Space}},
  journal       = {Annales Henri Poincaré},
  volume        = {26},
  pages         = {895–919},
  year          = {2024},
  month         = {5},
  doi           = {10.1007/s00023-024-01438-x},
  archiveprefix = {arXiv},
  eprint        = {2312.12517},
  primaryclass  = {hep-th}
}

@inproceedings{SS25-AbelianAnyons,
  author        = {Hisham Sati and Urs Schreiber},
  title         = {{Cohomotopy, Framed Links, and Abelian Anyons}},
  booktitle     = {{Proceedings of the Focus Program on Algebraic Topology in Memory of Fred Cohen}},
  series        = {Fields Institute Communications},
  publisher     = {Springer},
  year          = {2026},
  note          = {In press},
  eprint        = {2408.11896},
  archivePrefix = {arXiv},
  primaryClass  = {hep-th}
}

@inbook{Kallel2025,
  author        = {Kallel, Sadok},
  title         = {{Configuration Spaces of Points: A User's Guide}},
  booktitle     = {Encyclopedia of Mathematical Physics},
  publisher     = {Academic Press},
  editor        = {Szabo, Richard and Bojowald, Martin},
  edition       = {2},
  year          = {2025},
  pages         = {98--135},
  doi           = {10.1016/b978-0-323-95703-8.00211-1},
  isbn          = {9780323957069},
  archivePrefix = {arXiv},
  eprint        = {2407.11092},
  primaryClass  = {math-ph}
}

@article{nLab:ExtensionLemmaForDifferentialForms,
  author = {{nLab}},
  title = {{extension lemma for differential forms}},
  year = {2026},
  url = {https://ncatlab.org/nlab/show/extension+lemma+for+differential+forms}
}

@article{nLab:supergravity,
  author = {{nLab}},
  title = {{Supergravity}},
  year = {2026},
  url = {https://ncatlab.org/nlab/show/supergravity}
}

@book{STHu59,
  author        = {Hu, Sze-Tsen},
  title         = {{Homotopy Theory}},
  publisher     = {Academic Press},
  year          = {1959},
  address       = {New York and London},
  series        = {Pure and Applied Mathematics},
  volume        = {8},
  url           = {https://webhomes.maths.ed.ac.uk/~v1ranick/papers/hu2.pdf}
}

@article{SS25-Crys,
      title={{Fragile Topological Phases and Topological Order of 2D Crystalline Chern Insulators}}, 
      author={Hisham Sati and Urs Schreiber},
      year={2025},
      eprint={2512.24709},
      archivePrefix={arXiv},
      primaryClass={cond-mat.str-el},
      month={12}
}

@book{FHT2000,
    author    = {Yves F{\'e}lix and Stephen Halperin and Jean-Claude Thomas},
    title     = {{Rational Homotopy Theory}},
    series    = {Graduate Texts in Mathematics},
    volume    = {205},
    publisher = {Springer},
    year      = {2000},
    doi       = {10.1007/978-1-4613-0105-9}
}

@article{SS25-FQAH,
  author        = {Sati, Hisham and Schreiber, Urs},
  title         = {{Identifying Anyonic Topological Order in Fractional Quantum Anomalous Hall Systems}},
  journal       = {Applied Physics Letters},
  volume        = {128},
  pages         = {023101},
  year          = {2026},
  doi           = {10.1063/5.0305441},
  note          = {Special issue: Quantum Geometry in Condensed Matter: Fundamentals and Applications},
  eprint        = {2507.00138},
  archivePrefix = {arXiv},
  primaryClass  = {cond-mat.str-el}
}

@inbook{SS25-Flux,
  title         = {{Flux Quantization}},
  author        = {Sati, Hisham and Schreiber, Urs},
  booktitle     = {Encyclopedia of Mathematical Physics},
  publisher     = {Academic Press},
  editor        = {Szabo, Richard and Bojowald, Martin},
  edition       = {2},
  year          = {2025},
  volume        = {4},
  pages         = {281--324},
  doi           = {10.1016/b978-0-323-95703-8.00078-1},
  isbn          = {9780323957069},
  archivePrefix = {arXiv},
  eprint        = {2402.18473},
  primaryClass  = {hep-th}
}

@book{SS26-Orb,
  author        = {Sati, Hisham and Schreiber, Urs},
  title         = {{Geometric Orbifold Cohomology}},
  publisher     = {CRC Press},
  year          = {2026},
  isbn          = {9781041147510},
  url           = {https://ncatlab.org/schreiber/show/OrbCoh}
}

@article{SS25-Srni,
  author        = {Sati, Hisham and Schreiber, Urs},
  title         = {{Engineering of Anyons on M5-Probes via Flux Quantization}},
  year          = {2025},
  journal       = {SciPost Physics Lecture Notes},
  volume        = {107},
  doi           = {10.21468/SciPostPhysLectNotes.107},
  eprint        = {2501.17927},
  archivePrefix = {arXiv},
  primaryClass  = {hep-th}
}

@article{SS24-Obs,
  title         = {{Quantum Observables of Quantized Fluxes}},
  author        = {Sati, Hisham and Schreiber, Urs},
  journal       = {Annales Henri Poincaré},
  volume        = {26},
  year          = {2025},
  pages         = {4241-4269},
  doi           = {10.1007/s00023-024-01517-z},
  archivePrefix = {arXiv},
  eprint        = {2306.01214},
  primaryClass  = {hep-th}
}

@manual{Tong2016,
  author        = {Tong, David},
  title         = {{Lectures on the Quantum Hall Effect}},
  year          = {2016},
  url           = {http://www.damtp.cam.ac.uk/user/tong/qhe.html}
}

@article{SS25-FQH,
  title         = {{Fractional Quantum Hall Anyons via the Algebraic Topology of Exotic Flux Quanta}},
  author        = {Sati, Hisham and Schreiber, Urs},
  year          = {2025},
  eprint        = {2505.22144},
  archivePrefix = {arXiv},
  primaryClass  = {cond-mat.mes-hall},
}

@article{Hansen1974,
  author  = {Hansen, Vagn Lundsgaard},
  title   = {{On the space of maps of a closed surface into the 2-sphere}},
  journal = {Mathematica Scandinavica},
  volume  = {35},
  number  = {2},
  year    = {1974},
  pages   = {149-158},
  url = {https://www.jstor.org/stable/24490694}
}

@article{LarmoreThomas1980,
  author        = {Larmore, L. L. and Thomas, E.},
  journal       = {Mathematica Scandinavica},
  pages         = {232--246},
  title         = {{On the fundamental group of a space of sections}},
  volume        = {47},
  year          = {1980},
  url           = {http://www.jstor.org/stable/24491393}
}

@book{FSS23-Char,
  title         = {{The Character Map in Non-abelian Cohomology: Twisted, Differential, and Generalized}},
  author        = {Fiorenza, Domenico and Sati, Hisham and Schreiber, Urs},
  publisher     = {World Scientific},
  year          = {2023},
  doi           = {10.1142/13422},
  isbn          = {9789811276705},
  note = {\url{https://ncatlab.org/schreiber/show/The+Character+Map+in+Non-Abelian+Cohomology}}
}

@book{SS26-Bun,
  title         = {{Equivariant Principal $\infty$-Bundles}},
  author        = {Sati, Hisham and Schreiber, Urs},
  publisher     = {Cambridge University Press},
  year          = {2026},
  isbn          = {9781009698559},
  url           = {https://ncatlab.org/schreiber/show/EPBund},
  archivePrefix = {arXiv},
  eprint        = {2112.13654},
  primaryClass  = {math.AT}
}

@incollection{FSS15-Stacky,
  author        = {Fiorenza, Domenico and Sati, Hisham and Schreiber, Urs},
  editor        = {Calaque, Damien and Strobl, Thomas},
  title         = {{A Higher Stacky Perspective on Chern--Simons Theory}},
  booktitle     = {{Mathematical Aspects of Quantum Field Theories}},
  year          = {2015},
  publisher     = {Springer},
  pages         = {153--211},
  doi           = {10.1007/978-3-319-09949-1_6},
  isbn          = {9783319099491},

  archivePrefix = {arXiv},
  eprint        = {1301.2580},
  primaryClass  = {hep-th},
}

\end{document}